\newcommand{\betasecond}{\beta_{\mathsf{2nd}}}
\newcommand{\betaRS}{\beta_{\mathsf{RS}}}
\newcommand{\betaSL}{\beta_{\mathsf{SL}}}
\newcommand{\betashatter}{\beta_{\mathsf{shatter}}}
\newcommand{\betaunique}{\beta_{\mathsf{unique}}}
\newcommand{\CW}{\mathsf{CW}}
\newcommand{\GOE}{\mathsf{GOE}}
\newcommand{\nul}{\mathsf{null}}
\newcommand{\pl}{\mathsf{pl}}
\newcommand{\ov}{\mathsf{ov}}
\newcommand{\UC}{\mathsf{UC}}
\newcommand{\Val}{\mathsf{Val}}
\newcommand{\romI}{\textup{I}}
\newcommand{\romII}{\textup{II}}
\newcommand{\tkappa}{{\tilde\kappa}}
\newcommand{\tlambda}{{\tilde\lambda}}
\newcommand{\tOmega}{{\tilde \Omega}}
\newcommand{\hOmega}{{\hat \Omega}}
\newcommand{\Ct}{\mathsf{Ct}}
\newcommand{\Cy}{\mathsf{Cy}}
\newcommand{\PM}{\mathsf{PM}}
\newcommand{\op}{\mathsf{op}}
\newcommand{\Ev}{\mathsf{Ev}}
\title{On Zeros and Algorithms for Disordered Systems: Mean-Field Spin Glasses}
\author{Ferenc Bencs\thanks{Email: \texttt{ferenc.bencs@gmail.com},
Centrum Wiskunde \& Informatica. Supported by the Netherlands Organisation of Scientific Research (NWO): VI.Veni.222.303.} \and Brice Huang\thanks{Email: \texttt{bmhuang@stanford.edu},
Stanford University. Supported by a Stanford Science Fellowship and NSF Mathematical Sciences Postdoctoral Research Fellowship.} \and Daniel Z. Lee\thanks{Email: \texttt{lee\_d@mit.edu},
Massachusetts Institute of Technology. Supported by the NSF CAREER grant CCF-2443045, and the Reed Fund at MIT.} \and Kuikui Liu\thanks{Email: \texttt{liukui@mit.edu},
Massachusetts Institute of Technology. Supported by the NSF CAREER grant CCF-2443045, and the Reed Fund at MIT.} \and Guus Regts\thanks{Email: \texttt{guusregts@gmail.com},
Korteweg de Vries Institute for Mathematics, University of Amsterdam. Supported by the Netherlands Organisation of Scientific Research (NWO): VI.Vidi.193.068.}}
\date{}
\begin{document}
\maketitle

\begin{abstract}
Spin glasses are fundamental probability distributions at the core of statistical physics, the theory of average-case computational complexity, and modern high-dimensional statistical inference. In the mean-field setting, we design deterministic quasipolynomial-time algorithms for estimating the partition function to arbitrarily high accuracy for all inverse temperatures in the second moment regime. In particular, for the Sherrington--Kirkpatrick model, our algorithms succeed for the entire replica-symmetric phase. To achieve this, we study the locations of the zeros of the partition function. Notably, our methods are conceptually simple, and apply equally well to the spherical case and the case of Ising spins.


\end{abstract}

{\small
\tableofcontents
}

\setlength{\parskip}{0.5em}  


\section{Introduction}

We study \emph{spin glasses}, which are probabilistic models of inhomogeneous materials (e.g. disordered magnetic alloys) in statistical mechanics. These and related models have also formed the bedrock for the study of average-case algorithm design, as well as the computational complexity of high-dimensional statistical inference problems; see e.g. \cite{Bov06, MS24} and references therein. For concreteness, we consider \emph{mixed $p$-spin models (with Gaussian interactions)}, which are specified by the random Hamiltonian
\begin{align}\label{eq:pspin-hamiltonian}
    \mathcal{H}_{\bm{G}}(\sigma) \defeq \sum_{p=2}^{p_{\max}} \frac{\upgamma_{p}}{n^{\frac{p-1}{2}}} \sum_{i_{1},\dots,i_{p}=1}^{n} \bm{G}_{i_{1},\dots,i_{p}} \prod_{j=1}^{p} \sigma_{i_{j}}, \qquad \forall \sigma \in \R^{n},
\end{align}
where $\{\upgamma_{p}\}_{p=2}^{p_{\max}}$ are deterministic nonnegative real coefficients, $p_{\max} \geq 2$ is an arbitrary integer, and $\bm{G}_{i_{1},\dots,i_{p}} \sim \mathcal{N}(0,1)$ are independent across all $2 \leq p \leq p_{\max}$ and $i_{1},\dots,i_{p} \in [n]$. These models are \emph{mean-field} in the sense that each spin $\sigma_{i}$ interacts directly with every other spin $\sigma_{j}$ in $\mathcal{H}_{\bm{G}}(\sigma)$. For a background probability measure $\varrho$ on $\R^{n}$ and an \emph{inverse temperature} $\beta \geq 0$, define the corresponding (random) \emph{Gibbs measure} on $\R^{n}$ by
\begin{align*}
    d\mu_{\bm{G},\beta}(\sigma) \propto \exp\wrapp{\beta \cdot \mathcal{H}_{\bm{G}}(\sigma)} \,d\varrho(\sigma),
\end{align*}
where the normalization constant is given by the \emph{partition function}
\begin{align*}
    Z_{\bm{G}}(\beta) \defeq \E_{\sigma \sim \varrho}\wrapb{\exp\wrapp{\beta \cdot \mathcal{H}_{\bm{G}}(\sigma)}}.
\end{align*}
In this paper, we will consider the cases where $\varrho$ is the uniform measure over either the discrete Boolean hypercube $\mathcal{C}_{n} \defeq \{\pm1\}^{n}$ or the (rescaled) Euclidean sphere $\mathcal{S}_{n} \defeq \wrapc{\sigma \in \R^{n} : \norm{\sigma}_{2}^{2} = n}$. Note that by setting $\varrho = \mathsf{Unif}(\mathcal{C}_{n})$, $p_{\max} = 2$ and $\upgamma_2=1/\sqrt{2}$, the above recovers perhaps the most intensely studied mean-field spin glass: the \emph{Sherrington--Kirkpatrick model} \cite{SK75}.

Given these data, two natural and intimately related computational tasks are as follows:
\begin{itemize}[label={}]
    \item \textbf{Counting:} For an error tolerance $\eta > 0$ and a failure probability $0 < \delta < 1$, compute an \emph{estimate} $\widehat{Z}$ satisfying $e^{-\eta} \cdot Z_{\bm{G}}(\beta) \leq \widehat{Z} \leq e^{\eta} \cdot Z_{\bm{G}}(\beta)$ with probability at least $1 - \delta$.
    \item \textbf{Sampling:} For $\eta > 0$, sample a random configuration $\widehat{\sigma}$ such that $\diverge\wrapp{\Law\wrapp{\widehat{\sigma}}, \mu_{\bm{G},\beta}} \leq \eta$ for some prescribed (pseudo-)metric $\diverge(\cdot,\cdot)$ on the space of probability measures.
\end{itemize}
These are fundamental and important algorithmic primitives in high-dimensional statistics and computational physics. Over the last few decades, a deep and beautiful connection has emerged between the computational complexity of these tasks for a multitude of statistical physics models, and the presence/absence of structural phase transitions in these models; see e.g. \cite{Bar16book, PR22} and references therein. In this vein, our contributions are two-fold:
\begin{enumerate}
    \item Inspired by the seminal Lee--Yang theory of phase transitions \cite{LY52} (see \cref{subsec:prior} for further background), we control the locations of the \emph{zeros} of $Z_{\bm{G}}(\beta)$ in the complex plane. To the best of our knowledge, this perspective on mean-field spin glasses is novel. 

    \item Leveraging our understanding of the zeros, we design deterministic quasipolynomial-time algorithms which \emph{provably compute} the partition function $Z_{\bm{G}}(\beta)$ to high accuracy for nearly all $\beta$ below a certain natural threshold $\betasecond$ (see \cref{def:2nd-moment-regime}). For the Sherrington--Kirkpatrick model, $\betasecond$ is known to coincide with the \emph{replica-symmetry breaking threshold} $\betaRS$, a well-known phase transition also believed to separate the computationally tractable and intractable regimes of $\beta$ for counting and sampling. As we discuss in \cref{subsec:compare-thresholds,subsec:prior}, $\betasecond$ far exceeds the current range of $\beta$ in which efficient counting/sampling algorithms are known.

    At a high level, our algorithms are based on Barvinok's interpolation method, which additively approximates $\log Z_{\bm{G}}(\beta)$ using a truncation of its Taylor series. To apply this to the average-case setting (e.g. models with disorder), we construct a simple and general family of smooth interpolating curves which may be of independent interest.
\end{enumerate}
One advantage of our approach is its conceptual simplicity, as it relies only on the second moment method and elementary complex analysis. Moreover, our techniques can simultaneously treat the cubical and spherical cases in a completely unified manner. 

\subsection{Main Results}
We begin by stating the range of inverse temperatures $\beta$ in which we obtain efficient algorithms for estimating $Z_{\bm{G}}(\beta)$. For this, as is customary in the theory of mean-field spin glasses, we first define the \emph{mixture function} associated with the parameters $\{\upgamma_{p}\}_{p=2}^{p_{\max}}$ by
\begin{align*}
    \xi(s) \defeq \sum_{p=2}^{p_{\max}} \upgamma_{p}^{2} \cdot s^{p}.
\end{align*}
This is relevant because under the choice of normalization in \cref{eq:pspin-hamiltonian}, the collection of random variables $\wrapc{\mathcal{H}_{\bm{G}}(\sigma)}_{\sigma \in \R^{n}}$ appearing in the Gibbs measure and the partition function is a Gaussian process with covariance
\begin{align}\label{eq:hamiltonian-covariance}
    \E_{\bm{G}}\wrapb{\mathcal{H}_{\bm{G}}(\tau) \cdot \mathcal{H}_{\bm{G}}(\sigma)} = n \cdot \xi\wrapp{\frac{\langle \tau,\sigma \rangle}{n}},
\end{align}
which depends only on the \emph{overlap} $\frac{\langle \tau,\sigma \rangle}{n}$.
\begin{definition}[Second Moment Regime]\label{def:2nd-moment-regime}
Define the following (context-dependent) entropy functional on $[-1,1]$:
\begin{align*}
    h(m) \defeq \begin{dcases}
        -\frac{1}{2} \cdot \wrapp{(1-m)\log(1-m) + (1+m)\log(1+m)}, &\quad\text{when } \varrho = \mathsf{Unif}(\mathcal{C}_{n}) \quad \text{ (Hypercube)} \\[5pt]
        \frac{1}{2} \log \wrapp{1 - m^{2}}, &\quad\text{when } \varrho = \mathsf{Unif}(\mathcal{S}_{n}) \quad \text{ (Sphere)}
    \end{dcases}
\end{align*}
For a mixture function $\xi$, we say $\beta \geq 0$ satisfies the \emph{second moment condition (with respect to $\varrho,\xi$)} if the function $\varphi : [-1,1] \to \R$ defined by
\begin{align*}
    \varphi(m) \defeq \beta^{2} \cdot \xi(m) + h(m)
\end{align*}
has $0$ as its unique global maximizer, and that $\varphi''(0) < 0$. We define the corresponding \emph{second moment threshold} by the quantity
\begin{align*}
    \betasecond = \betasecond(\varrho,\xi) &\defeq \sup\wrapc{\beta \geq 0 : \beta \text{ satisfies the second moment condition w.r.t. } \varrho,\xi} \\
    &\hspace{2pt} = \hspace{2pt} \sup\wrapc{\beta \geq 0 : \varphi''(0) < 0 \text{ and } \varphi(m) < 0, \, \forall m \in [-1,1] \setminus \{0\}}.
\end{align*}
We suppress the dependence on $\varrho,\xi$ when they are clear from context.
\end{definition}

This is the supremal (positive real) $\beta$ where the second moment method locates the in-probability limiting free energy $\frac{1}{n} \log Z_{\bm{G}}$.
In \cref{sec:techniques} we explain how the second moment method is related to our techniques; see also \cref{lem:2nd-moment-interpret}.
In \cref{subsec:compare-thresholds}, we compare the second moment threshold with other notable phase transition thresholds appearing in the literature.

\begin{theorem}[Algorithms]\label{thm:alg-pspin-glass}
For any $0 < \varepsilon,\eta < 1$, there exists $C = \Theta(1/\varepsilon)$ and $\delta = \delta(\varepsilon,n) > 0$ tending to $0$ as $n \to \infty$ such that
\begin{align*}
    \mathbb{P}_{\bm{G}}\wrapb{e^{-\eta} \cdot \exp\wrapp{\widehat{P}(\beta)} \leq Z_{\bm{G}}(\beta) \leq e^{\eta} \cdot \exp\wrapp{\widehat{P}(\beta)}} \geq 1 - \delta, \qquad \forall \beta \in \D(0,(1 - \varepsilon) \cdot \betasecond),
\end{align*}
where $\widehat{P}(\beta)$ denotes the Taylor polynomial of $\beta \mapsto \log Z_{\bm{G}}(\beta)$ of degree $C\log(n/\eta)$. In particular, there is a deterministic algorithm running in time $n^{C \log(n/\eta)}$ which, on input $\bm{G},\{\upgamma_{p}\}_{p=2}^{p_{\max}},\beta$, outputs an $e^{\pm\eta}$-multiplicative approximation to $Z_{\bm{G}}(\beta)$ with probability at least $1 - o_{n}(1)$, for all $\beta \in \D(0,(1-\varepsilon) \cdot \betasecond)$.
\end{theorem}
We highlight that as a special case, we obtain deterministic quasipolynomial-time algorithms for estimating the partition function of the Sherrington--Kirkpatrick model, where $\varrho = \mathsf{Unif}(\mathcal{C}_{n})$ and $\xi(s) = \frac{1}{2}s^{2}$, for all inverse temperatures up to the \emph{replica-symmetry breaking threshold} $\betaRS = \betasecond = 1$. This threshold is known to separate the high-temperature and low-temperature regimes in these models; see \cref{subsec:compare-thresholds} for a more in-depth discussion. 

Following Barvinok's interpolation method \cite{Bar15, Bar16, Bar16book, Bar17perm}, our algorithms are based on truncating the Taylor series of $\log Z_{\bm{G}}(\beta)$ to logarithmically many terms. The low-order coefficients are then computed exactly via brute force. The crucial property which permits this truncation is \emph{stability} (or \emph{zero-freeness}) of the partition function $Z_{\bm{G}}(\beta)$. This is our main technical contribution.

\begin{theorem}[Zeros]\label{thm:zeros-pspin-glass}
For any $0 < \varepsilon < 1$, there exists $\delta = \delta(\varepsilon,n) > 0$ tending to $0$ as $n \to \infty$ such that with probability at least $1 - \delta$ over the randomness of $\bm{G}$, the partition function $Z_{\bm{G}}(\beta)$ is nonzero on the disk $\D(0,(1 - \varepsilon) \cdot \betasecond)$.
\end{theorem}

Taking inspiration from \cite{EM18}, we prove these using \emph{Jensen's Formula}, a widely used tool in the analysis of random polynomials. Our approach continues a long line of research studying the connections between the stability of partition functions, phase transitions in large-scale systems in statistical physics, and computational complexity of counting and sampling. In our setting, since we are viewing $Z_{\bm{G}}(\beta)$ as a function of the interaction strength $\beta$, \cref{thm:zeros-pspin-glass} says that with high probability there are no \emph{Fisher zeros} \cite{Fis65} of the model inside the centered disk of radius equal to the second moment threshold. Constraining the locations of the \emph{Lee--Yang zeros} \cite{LY52}, i.e. the zeros of the partition function where the input variables are given by the \emph{external fields}, remains an interesting open problem. This is especially relevant for \emph{sampling} from the associated Gibbs measure, since one needs to be able to handle the presence of external fields in order to invoke standard reductions between counting and sampling \cite{JVV86}. We emphasize that due to the lack of self-reducibility, our work does not immediately yield an efficient algorithm which samples from the Gibbs distribution. Even for the Sherrington--Kirkpatrick model, this remains an outstanding open problem as of this writing, although for general mixed $p$-spin models, efficient sampling is expected to not always be possible up to $\betasecond$; see \cref{subsec:compare-thresholds}.

\subsection{Comparison with Other Thresholds}\label{subsec:compare-thresholds}
Using the powerful but nonrigorous ``replica method'', physicists identified a natural phase transition threshold $\betaRS$ such that in the \emph{replica symmetric (RS) phase} $0 \leq \beta < \betaRS$, the model exhibits ``high-temperature behavior'' (see e.g. \cite{SK75, TAP77, dAT78}). For the models we study in this paper, an equivalent description of this phase is that the ``quenched free energy'' $\frac{1}{n} \E_{\bm{G}}\wrapb{\log Z_{\bm{G}}(\beta)}$ and the ``annealed free energy'' $\frac{1}{n} \log \E_{\bm{G}}\wrapb{Z_{\bm{G}}(\beta)}$ agree in the large $n$ limit. Note that the former quantity is of greater interest from the physics perspective, while the latter quantity admits a simple and explicit formula. This has since been vindicated in a number of works in the mathematical physics literature (see e.g. \cite{ALR87, CN95}). 
The RS phase coincides with a natural range of $\beta$ for which one can hope to obtain high-accuracy estimates of the partition function $Z_{\bm{G}}(\beta)$ in polynomial time. This is the regime in which we operate. In the specific setting of the Sherrington--Kirkpatrick model, where $\xi(s) = \frac{1}{2}s^{2}$ and $\varrho = \mathsf{Unif}(\mathcal{C}_{n})$, it is well-known that $\betaRS = \betasecond = 1$ \cite{ALR87, CN95}. Thus, for this model, our algorithmic result is essentially sharp in terms of the range of $\beta$.

For the remainder of this discussion, we restrict our attention to the spherical case; a similar picture is predicted to hold for the discrete hypercube. In this setting,
\begin{align}\label{eq:RS-threshold}
    \betaRS \defeq \sup\wrapc{\beta \geq 0 : \beta^{2} \cdot \xi(m) + m + \log(1 - m) < 0, \, \forall m \in (0,1)}.
\end{align}
As described in \cite{HMP24, HMRW25}, the replica-symmetric phase can be further broken up into several, typically distinct, thresholds
\begin{align*}
    0 \leq \betaunique \leq \betaSL \leq \betashatter \leq \betaRS,
\end{align*}
which are particularly relevant for the related task of \emph{sampling} from the Gibbs measure $\mu_{\bm{G},\beta}$:
\begin{itemize}[label={}]
    \item \textbf{Uniqueness Threshold:} This is the threshold $\betaunique$ below which local Markov chains (e.g. \emph{Glauber dynamics} and \emph{Langevin diffusion}) for sampling from $\mu_{\bm{G},\beta}$ are believed to satisfy a \emph{Poincar\'{e} Inequality} with high probability (and hence, converge to $\mu_{\bm{G},\beta}$ in polynomial time even under a worst-case initialization \cite{CHS93}). Above this threshold, it is known that worst-case mixing is exponentially slow \cite{AJ24}. For a precise definition of this threshold in the spherical pure $p$-spin case, see \cite{BBM96}.
    \item \textbf{Stochastic Localization Threshold:} This is the threshold $\betaSL$ below which \emph{algorithmic stochastic localization} \cite{AMS22, HMP24} successfully samples from $\mu_{\bm{G},\beta}$, and above which algorithmic stochastic localization fails \cite{HMP24}.
    This is also the threshold below which \emph{simulated annealing} \cite{HMRW25} is proven to sample from $\mu_{\bm{G},\beta}$ (though simulated annealing may also succeed beyond $\betaSL$).
    \item \textbf{Shattering Threshold:} Below this threshold $\betashatter$ and above $\betaunique$, \emph{metastable states} are present in the Gibbs measure \cite{AJ24}, which obstruct local Markov chains from mixing rapidly from a worst-case initial distribution. Nonetheless, because these metastable states in aggregate only constitute a vanishingly small proportion of the Gibbs mass, it is has been predicted in the physics literature that local Markov chains still mix rapidly from a \emph{random initialization}; see e.g. \cite{CK93, CHS93, Ala25}. Mixing from a suitable \emph{warm start} has been proved for $\beta$ up to $\betaSL$ \cite{HMRW25}.
    Rigorous analysis of any algorithm that succeeds up to $\betashatter$ (which can be Glauber or Langevin dynamics from random initialization, or another algorithm entirely) remains an interesting open problem.

    Above this threshold $\betashatter$, $\mu_{\bm{G},\beta}$ \emph{shatters} into exponentially many well-separated clusters, each of which contains only an exponentially small fraction of the Gibbs mass \cite{CHS93, GJK23, AMS25}. There is preliminary evidence suggesting that efficient sampling from $\mu_{\bm{G},\beta}$ is impossible for all $\beta > \betashatter$ \cite{AMS22, AMS25}.
\end{itemize}
We refer interested readers to \cite{HMP24, HMRW25} for precise characterizations of $\betaSL,\betashatter$, analogous to \cref{def:2nd-moment-regime} and \cref{eq:RS-threshold}.

Because sampling from $\mu_{\bm{G},\beta}$ is intimately related to computing the partition function $Z_{\bm{G}}(\beta)$ \cite{JVV86}, it is interesting to compare these thresholds with $\betasecond$. For this, let us focus on the \emph{pure $p$-spin} case, where $\xi(s) = s^{p}$ for some positive integer $p \geq 2$. In this case, it is known (see \cite{HMP24, HMRW25}) that
\begin{align*}
    \betaunique \asymp \frac{1}{\sqrt{\log p}}, \qquad \betaSL, \betashatter \asymp 1 \text{ with } \frac{\betaSL}{\betashatter} \approx \frac{\sqrt{e}}{2}, \qquad \betaRS \asymp \sqrt{\log p}.
\end{align*}
Now recall that $h(m) = \frac{1}{2} \log\wrapp{1 - m^{2}}$ in the spherical case. Since the following numerical inequalities hold
\begin{align*}
    2 \cdot h(m) \leq m + \log(1-m) \leq h(m), \qquad \forall m \in [0,1],
\end{align*}
it is clear that $\betasecond \asymp \betaRS \asymp \sqrt{\log p}$ as well. Thus, \cref{thm:alg-pspin-glass} covers a significantly larger range than $0 \leq \beta \leq \betashatter$, which is the limit of what is believed to be achievable in the realm of efficient sampling \cite{AMS22, AMS25}. Note, however, that $\betasecond < \betaRS$ in general. We leave extending \cref{thm:alg-pspin-glass} to the full replica-symmetric phase to future work; see \cref{sec:conclusion} for further discussion.

\subsection{Prior Work}\label{subsec:prior}

\paragraph{Free Energy Formulas} There is an enormous literature devoted to deriving and proving explicit formulas for the limiting ``quenched free energy'':
\begin{align*}
    \lim_{n \to \infty} \frac{1}{n} \E_{\bm{G}}\wrapb{\log Z_{\bm{G}}(\beta)}.
\end{align*}
For the Sherrington--Kirkpatrick model, landmark results include the \emph{replica-symmetric formula} \cite{SK75}, the celebrated \emph{Parisi formula} \cite{Par79, Par80, Par83}, and subsequent breakthroughs which made the latter mathematically rigorous \cite{GT02, Gue03, Tal06a, Tal06b, Che13, Pan13a, Pan14}; we refer interested readers to \cite{Bov06, Tal11a, Tal11b, Pan13b} and references therein for more details. Note that such formulas entail only an $o(n)$-additive approximation to $\E_{\bm{G}}\wrapb{\log Z_{\bm{G}}(\beta)}$, which is of order $O(n)$. In contrast, following the theoretical computer science tradition, our goal is to design algorithms which estimate $\log Z_{\bm{G}}$ up to $\eta$-additive error for \emph{any} $\eta > 0$ prescribed as input. Moreover, our algorithms must work with high probability over the realization of $\bm{G}$, as opposed to in expectation.

\paragraph{The Plefka and Aizenman--Lebowitz--Ruelle Expansions} In \cite{Ple82}, an alternative but nonrigorous derivation of the \emph{Thouless--Anderson--Palmer (TAP) Equations} \cite{TAP77} for the Sherrington--Kirkpatrick model was proposed, based on truncating the Taylor expansion of $\beta \mapsto \log Z_{\bm{G}}(\beta)$ to second-order:
\begin{align*}
    \log Z_{\bm{G}}(\beta) ``\approx" \log Z_{\bm{G}}(0) + \E_{\sigma \sim \varrho}\wrapb{\frac{1}{\sqrt{2n}} \sigma^{\top}\bm{G}\sigma} \cdot \beta + \Var_{\sigma \sim \varrho}\wrapb{\frac{1}{\sqrt{2n}} \sigma^{\top}\bm{G}\sigma} \cdot \frac{\beta^{2}}{2}.
\end{align*}
This is known in the literature as the \emph{Plefka expansion}. However, we are unaware of any mathematically rigorous result prior to our work which validates the above approximation. \cref{thm:zeros-pspin-glass} vindicates the $O(\log n)$-order Plefka expansion with probability $1 - o_{n}(1)$ for all $\beta$ with $\abs{\beta} < \betasecond$.

Aizenman, Lebowitz and Ruelle~\cite{ALR87} introduced a cluster expansion approximation of the Sherrington--Kirkpatrick partition function and rigorously showed that it $o_n(1)$-additively approximates $\log Z_{\bm{G}}(\beta)$ for all (positive real) $\beta < \betasecond = 1$. An analogue of this expansion for the mixed $p$-spin models will play an important role in our proof; see \cref{subsec:overview-reweighting}. However, we emphasize that the cluster expansion of \cite{ALR87} only achieves an $o_n(1)$ error for \emph{some} $o_n(1)$, whereas \cref{thm:alg-pspin-glass} achieves any \emph{arbitrarily small} prescribed error.

\paragraph{Localization and Markov Chains} For the Sherrington--Kirkpatrick model, \cite{EKZ22} (see also \cite{BB19}) showed that Glauber dynamics for sampling from $\mu_{\bm{G},\beta}$ mixes in $O(n^{2})$-time for $\beta < \frac{1}{4}$. Subsequently, the mixing time was improved to $O(n\log n)$ in \cite{AJKPV22}, and the range of $\beta$ was extended to $\beta < 0.295$ \cite{AKV24}. In \cite{AMS22} (see also \cite{AMS23a} for the mixed $p$-spin setting), it was shown that an algorithmic version of Eldan's \emph{stochastic localization} \cite{Eld13} samples from the Gibbs measure up to $o(n)$ Wasserstein distance for all $\beta < 1 = \betaRS$, which is the tight range of $\beta$; it is believed that Glauber dynamics efficiently samples up to arbitrary total variation error in this regime as well. For more general mixed $p$-spin models on the hypercube, $O(n^{2})$-mixing for Glauber dynamics was established in \cite{ABXY24} under a more restrictive condition than the replica-symmetric phase. This was improved to $O(n\log n)$ in \cite{AJKPV24univ}. Besides \cite{AMS22, AMS23a}, all these sampling results can be converted into efficient algorithms for computing $(1 \pm \eta)$-multiplicative approximations to $Z_{\bm{G}}(\beta)$ via standard counting-to-sampling reductions \cite{JVV86}.

In the spherical setting, \cite{HMP24} used algorithmic stochastic localization to sample from the Gibbs measure with $o_{n}(1)$ total variation error for $\beta < \betaSL$. \cite{HMRW25} then showed that annealed Langevin dynamics samples from the Gibbs measure up to subexponentially small total variation error in the same regime. While the lack of self-reducibility obstructs standard sampling-to-counting reductions, we note that sampling up to arbitrarily small total variation error does yield an algorithm for estimating the partition function $Z_{\bm{G}}(\beta)$ in a black-box manner. Indeed, we have
\begin{align*}
    \wrapp{\log Z_{\bm{G}}}'(\beta) = \E_{\sigma \sim \mu_{\bm{G},\beta}}\wrapb{\mathcal{H}_{\bm{G}}(\sigma)},
\end{align*}
which can be approximated using the Monte Carlo method given sampling access to the Gibbs measure $\mu_{\bm{G},\beta}$. This can then be integrated over $\beta$.

\paragraph{Approximate Counting from Zero-Freeness} Since the introduction of the $\#\P$ complexity class in theoretical computer science \cite{Val79perm, Val79}, there has been significant interest in designing provable $\FPTAS$ for \emph{approximate counting} tasks, that is, efficient deterministic algorithms which estimate the partition function up to arbitrarily small relative error. Towards this, Barvinok proposed to truncate the Taylor series of the logarithm of the partition function. This is often referred to as Barvinok's \emph{interpolation method}, and it has since been used to design quasipolynomial-time approximation approximation algorithms for a wide variety of counting problems, including the permanent \cite{Bar16, Bar17perm}, graph homomorphisms \cite{Bar15, BS16, BS17hom}, etc. We highlight \cite{EM18}, which applies the interpolation method to compute permanents of \emph{random} complex matrices. Our work draws heavily upon theirs. For some of these counting problems these algorithms have been improved to run in polynomial time~\cite{PR17}, thereby obtaining a genuine $\FPTAS$.

Now, the radius of convergence is governed by the location of the nearest zero of the partition function. Building on the seminal Lee--Yang theory \cite{LY52}, it was gradually realized that there is a deep connection between the absence of phase transitions and the computational complexity of estimating the partition function. This has been formalized in a number of works; see e.g. \cite{PR19, BGGS20, BCSV23, deBoeratal24} for the hardcore gas model, \cite{LSS19, LSS19b, LSS20, GLL20, BDPR21} for the Ising and Potts models, \cite{BCR21, GLLZ21} for Holant-type problems, \cite{HMS20} for quantum many-body systems, etc. We refer interested readers to \cite{Bar16book, PR22} for a more comprehensive discussion.

\subsection{Acknowledgements}
K.L. is grateful to Sidhanth Mohanty, Francisco Pernice, and Amit Rajaraman for enlightening conversations surrounding spin glasses. B.H. is grateful to Byron Chin and Mark Sellke for motivating conversations. We would also like to thank the Centrum Wiskunde \& Informatica (CWI) Amsterdam for hosting a workshop organized by F.B. and G.R. where this collaboration began.

\section{Techniques and Overview of the Paper}\label{sec:techniques}

\subsection{Counting Zeros via Jensen's Formula}

\cref{thm:alg-pspin-glass} will follow easily from the zero-freeness established in \cref{thm:zeros-pspin-glass}. Indeed, following Barvinok (see e.g. \cite{Bar16book}), the algorithm just needs to compute the Taylor coefficients of $\widehat{P}(\beta)$; we provide a formal proof in \cref{sec:algos}.
In the rest of this overview, we discuss the proof of \cref{thm:zeros-pspin-glass}, which occupies most of this paper.
We will set out to prove
\begin{align}\label{eq:overview-markov-goal}
    \E_{\bm{G}}\wrapb{\#\{\omega \in \D(0,r) : Z_{\bm{G}}(\omega) = 0\}} = o_n(1),
\end{align}
from which \cref{thm:zeros-pspin-glass} follows by Markov's inequality.
At the heart of our methodology is the following tool from complex analysis and the study of zeros of random polynomials, which was used in a similar context in \cite{EM18}.
\begin{theorem}[Jensen's Formula]\label{thm:jensen-zeros}
Let $\Omega\subset\C$ be an open set that contains the disk $\overline{\D(0,R)}$ for some $R>0$. Let $f : \Omega \to \C$ be an analytic function such that $f(0)\neq 0$. Then
\begin{align*}
    \sum_{\omega \in \D(0,R) : f(\omega) = 0} \log\wrapp{\frac{R}{\abs{\omega}}} = \E_{\theta}\wrapb{\log \abs{\frac{f(Re^{i\theta})}{f(0)}}},
\end{align*}
where the expectation is taken with respect to $\theta \sim \mathsf{Unif}[-\pi,\pi]$.
\end{theorem}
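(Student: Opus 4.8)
The plan is to prove this by the classical argument of dividing out the zeros with finite Blaschke products and then invoking the mean value property of harmonic functions. Since $f(0) \neq 0$, $f$ is not identically zero, so by compactness of $\overline{\D(0,R)}$ it has only finitely many zeros there; let $\omega_1,\dots,\omega_m$ be those lying in the open disk $\D(0,R)$, listed with multiplicity (and the sum in the statement is, as usual, understood with multiplicity). First I would treat the generic case in which $f$ has no zeros on the circle $\{|z| = R\}$.

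For each $j$ I would introduce the rescaled Blaschke factor $B_j(z) \defeq \frac{R(z-\omega_j)}{R^2 - \overline{\omega_j}z}$. The elementary facts to record are: $B_j$ is analytic on a neighborhood of $\overline{\D(0,R)}$ (its only pole $R^2/\overline{\omega_j}$ lies strictly outside), $B_j(\omega_j) = 0$, $B_j(0) = -\omega_j/R$, and $|B_j(z)| = 1$ whenever $|z| = R$ --- this last because $R^2 - \overline{\omega_j}z = z\,\overline{(z-\omega_j)}$ on that circle, so numerator and denominator of $B_j$ have equal modulus. Hence $g \defeq f\big/\prod_{j=1}^m B_j$ extends to a function analytic on some open set containing $\overline{\D(0,R)}$, since the zeros of $f$ inside $\D(0,R)$, counted with multiplicity, exactly cancel the poles of $\prod_j B_j^{-1}$; and under the temporary assumption $g$ is moreover zero-free on $\overline{\D(0,R)}$, hence (by compactness, after shrinking) analytic and zero-free on a simply connected open neighborhood $\Omega''$ of $\overline{\D(0,R)}$.

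On $\Omega''$ there is then an analytic branch of $\log g$, so $\log|g| = \Re\log g$ is harmonic and the mean value property gives $\log|g(0)| = \E_\theta[\log|g(Re^{i\theta})|]$. It then remains to rewrite both sides in terms of $f$: on $|z| = R$ we have $|g(Re^{i\theta})| = |f(Re^{i\theta})|$ because each $|B_j| = 1$ there, while at the origin $|g(0)| = |f(0)|\prod_{j=1}^m \frac{R}{|\omega_j|}$ because $|B_j(0)| = |\omega_j|/R$. Substituting and rearranging yields precisely $\sum_{j=1}^m \log(R/|\omega_j|) = \E_\theta[\log|f(Re^{i\theta})/f(0)|]$.

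Finally I would remove the assumption of no zeros on $\{|z| = R\}$. Since $f$ has only finitely many zeros in $\overline{\D(0,R)}$, only finitely many radii $r \le R$ carry a zero, so I can choose $R_k \uparrow R$ avoiding all of them and apply the identity just proved at each $R_k$. As $k \to \infty$, the left side converges to the target sum --- the zeros in the open disk contribute continuously, and a zero on $\{|z| = R\}$ would only contribute a term $\log(R/R) = 0$, so nothing is lost --- while on the right side $\theta \mapsto \log|f(Re^{i\theta})|$ is integrable (it has at most finitely many logarithmic singularities, each locally of the form $\log|z-z_0|$), and the passage to the limit is handled by dominated convergence together with uniform integrability of the singular parts. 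I expect this boundary bookkeeping, along with the routine verification that $g$ really does extend analytically across the circle with the stated value at $0$, to be the only fiddly points; the substance of the argument is the modulus-one property of the Blaschke factors on $\{|z| = R\}$ combined with the mean value property.
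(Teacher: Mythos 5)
The paper states Jensen's Formula as a classical, imported tool and does not supply a proof of its own, so there is no in-paper argument to compare against. Your proposal is the standard textbook proof via finite Blaschke products and the mean value property of harmonic functions, and it is correct: the modulus-one property of $B_j$ on $\{|z|=R\}$ is verified properly (via $R^2 - \overline{\omega_j}z = z\overline{(z-\omega_j)}$ on that circle), the bookkeeping of $|g(0)|$ and $|g(Re^{i\theta})|$ is right, and the limiting argument to dispose of zeros on the boundary circle is the usual one. One small remark: once you have divided $f$ by the Blaschke factors attached to all zeros in $\D(0,R)$ (with multiplicity) and reduced to the case of no zeros on $\{|z|=R\}$, the quotient $g$ is \emph{automatically} zero-free on $\overline{\D(0,R)}$, so the phrase ``under the temporary assumption $g$ is moreover zero-free'' is not a separate hypothesis but a consequence; this is a matter of wording, not a gap.
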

Throughout this paper we set
\begin{align*}
    R = \wrapp{1 - \frac{\varepsilon}{2}} \betasecond \qquad \text{and} \qquad
    r = \wrapp{1 - \varepsilon} \betasecond.
\end{align*}
The starting point of our proof is the following estimate on the expected number of zeros of a random analytic function on $\D(0,r)$, in terms of the function's second moment.
This is proved in \cref{subsec:jensen-formula-second-moment}.
\begin{lemma}\label{lem:jensen-formula-second-moment}
Let $\Omega$ be as in \cref{thm:jensen-zeros}, and $f_{\bm{G}} : \Omega \to \C$ be a random $\bm{G}$-measurable function, which is analytic  with $f_{\bm{G}}(0) \neq 0$ deterministically. Then,
\begin{align*}
    \E_{\bm{G}}\wrapb{\#\{\omega \in \D(0,r) : f_{\bm{G}}(\omega) = 0\}}
    \le \frac{1}{2\varepsilon} \E_{\theta} \log \E_{\bm{G}} \wrapb{
        \abs{\frac{f_{\bm{G}}(Re^{i\theta})}{f_{\bm{G}}(0)}}^2
    }
\end{align*}
\end{lemma}
The most natural choice of $f_{\bm{G}}(\beta)$ to use in \cref{lem:jensen-formula-second-moment} is $f_{\bm{G}}(\beta) = Z_{\bm{G}}(\beta)$.
With this choice, we will see that \cref{lem:jensen-formula-second-moment} reduces the mixed $p$-spin glass to a significantly simpler mean-field model where the interactions are deterministic, the \emph{mixed $p$-spin Curie--Weiss model}.
The partition function of this model is

\begin{align}\label{eq:pspin-curie-weiss}
    Z_{\CW}(\beta) \defeq \E_{\sigma \sim \varrho}\wrapb{\exp\wrapp{\beta \cdot n \cdot \xi\wrapp{\frac{\langle \sigma, \allone \rangle}{n}}}},
\end{align}
where again, we suppress the dependence on $\xi$. In the corresponding Gibbs distribution, higher probability mass is placed on configurations $\sigma$ which align with $\allone$ or $-\allone$, i.e. all interactions are ferromagnetic in nature. In the case where $\varrho = \mathsf{Unif}(\mathcal{C}_{n})$ and $\xi(s) = \frac{1}{2}s^{2}$, this is the partition function of the standard Curie--Weiss model in statistical physics, which may be viewed as the ferromagnetic Ising model on the complete graph.

Formally, the following lemma, proved in \cref{subsec:jensen-formula-second-moment}, evaluates the right-hand side of \cref{lem:jensen-formula-second-moment} for $f_{\bm{G}}(\beta) = Z_{\bm{G}}(\beta)$.
\begin{lemma}\label{lem:no-reweight-reduction-to-curie-weiss}
We have that
\begin{align*}
    \E_{\theta}\log \E_{\bm{G}}\wrapb{\abs{\frac{Z_{\bm{G}}(Re^{i\theta})}{Z_{\bm{G}}(0)}}^2}
    = \log Z_{\CW}\wrapp{R^2}.
\end{align*}
\end{lemma}
Unfortunately, this is not enough to prove \cref{thm:zeros-pspin-glass}. In the second moment regime we study,  $Z_{\CW}$ is bounded by a constant, but is usually not $1+o_n(1)$. Indeed, for the Sherrington--Kirkpatrick model (i.e. the mixed $p$-spin glass with $\varrho = \mathsf{Unif}(\mathcal{C}_{n})$ and $\xi(s) = \frac{1}{2}s^{2}$), the corresponding $Z_{\CW}$ is the standard Curie--Weiss partition function. It is well-known that $Z_{\CW}$ has a phase transition at $\betasecond = 1$, and $Z_{\CW}\wrapp{\beta^{2}} = (1 - \beta^{2})^{-1/2} + o_n(1)$ for positive real $\beta < 1$; see e.g. Lemma 2.1 in \cite{Tal98}, or \cref{lem:CW-bound} below. For general mixture functions, we show in \cref{prop:curie-weiss-bound} that
\begin{align}\label{eq:CW-partition-fn-bound}
    Z_{\CW}\wrapp{|\beta|^2}
    = \frac{1}{\sqrt{1-|\beta|^2 \xi''(0)}} + o_n(1)
    = \frac{1}{\sqrt{1-2|\beta|^2\upgamma_{2}^{2}}} + o_n(1)
\end{align}
for all $|\beta| < \betasecond$.
This generalizes earlier results for the pure $p$-spin case \cite{MSB21}.
Combining \cref{eq:CW-partition-fn-bound} with \cref{lem:jensen-formula-second-moment,lem:no-reweight-reduction-to-curie-weiss}, where we choose $f_{\bm{G}}(\beta) = Z_{\bm{G}}(\beta)$, shows 
\begin{align*}
    \E_{\bm{G}}\wrapb{\#\{\omega \in \D(0,r) : Z_{\bm{G}}(\omega) = 0\}}
    \le O_\varepsilon(1),
\end{align*}
where $O_\varepsilon(1)$ denotes an $\varepsilon$-dependent constant.
This falls short of the $o_n(1)$ desired in \cref{eq:overview-markov-goal}.
\begin{remark}
An exception to the above discussion is the case where quadratic interactions are absent, i.e. $\upgamma_2 = 0$. In this case \cref{eq:CW-partition-fn-bound} shows $Z_{\CW}\wrapp{|\beta|^2} = 1 + o_n(1)$. Then, \cref{lem:jensen-formula-second-moment,lem:no-reweight-reduction-to-curie-weiss} directly imply \cref{eq:overview-markov-goal}, which yields \cref{thm:zeros-pspin-glass}.
\end{remark}
We will be able to obtain the $o_n(1)$ bound in \cref{eq:overview-markov-goal} by applying \cref{lem:jensen-formula-second-moment} with $f_{\bm{G}}(\beta) = X_{\bm{G}}(\beta)$, where $X_{\bm{G}}(\beta)$ is a suitably reweighted version of $Z_{\bm{G}}(\beta)$.
We explain this in the next subsection.

\subsection{Reweighting the Partition Function\label{subsec:overview-reweighting}}

The fact that the bound in \cref{lem:jensen-formula-second-moment} with $f_{\bm{G}}(\beta) = Z_{\bm{G}}(\beta)$ loses a constant factor is related to a common issue faced by the \emph{second moment method}. We describe this first in the simpler setting of a positive real-valued random variable $Z$. The second moment method is a strategy for proving that $Z>0$ with high probability, in which we aim to show that
\begin{align}\label{eq:second-moment-method-goal}
    \E\wrapb{Z^{2}} = (1+o_n(1)) \cdot \E[Z]^2.
\end{align}
If this holds, the Cauchy--Schwarz inequality implies $\mathbb{P}(Z>0) = 1-o_n(1)$. However, in many applications \cref{eq:second-moment-method-goal} does not hold, and instead we only have the weaker estimate
\begin{align}\label{eq:second-moment-method-weaker}
    \E\wrapb{Z^{2}} = O(1) \cdot \E[Z]^2.
\end{align}
Typically, this is because the random variable $Z$ has $O(1)$-multiplicative fluctuations. From \cref{eq:second-moment-method-weaker} we can only conclude $\mathbb{P}(Z>0) = \Omega(1)$. In such settings, one way to recover a high-probability bound is to identify a random variable $A>0$ such that
\begin{align*}
    X \defeq A \cdot Z = 1+o_n(1) \qquad \text{with high probability},
\end{align*}
and then aim to show \cref{eq:second-moment-method-goal} with $X$ in place of $Z$.
That is, $A$ is a reweighting factor that ``explains the fluctuations'' of $Z$.
This method is well-known in the combinatorics literature as the \emph{small subgraph conditioning method} \cite{RW92,RW94}.

In our setting, the random variable $Z_{\bm{G}}(\beta)$ likewise has $O(1)$-multiplicative fluctuations, which cause a direct application of \cref{lem:jensen-formula-second-moment} to lose a constant factor. To remedy this, we will guess a reweighting factor $A_{\bm{G}}(\beta)$ such that we (heuristically) expect
\begin{align}\label{eq:reweighted-partition-function}
    X_{\bm{G}}(\beta) \defeq A_{\bm{G}}(\beta) \cdot Z_{\bm{G}}(\beta)
    = 1 + o_n(1),
    \qquad \forall \beta \in \overline{\D(0,R)}
\end{align}
with high probability.
Our choice of $A_{\bm{G}}(\beta)$ takes the form of a cluster expansion in the entries of the quadratic part of $\bm{G}$, and is motivated by the cluster expansion of \cite{ALR87}.
This follows the work \cite{ALS22}, which applies a similar cluster expansion-reweighted second moment to the (positive real-valued) partition function of the symmetric binary perceptron.

We will choose $A_{\bm{G}}(\beta)$ to be analytic in $\beta$ on an open neighborhood $\Omega \supseteq \overline{\D(0,R)}$, so that $X_{\bm{G}}$ is also analytic in $\Omega$ and its zeros are a superset of those of $Z_{\bm{G}}$.
We then show that for $f_{\bm{G}}(\beta) = X_{\bm{G}}(\beta)$, the upper bound obtained in \cref{lem:jensen-formula-second-moment} is $o_n(1)$.
(In order to make this calculation tractable, it will also be important that $A_{\bm{G}}(\beta)$ is complex analytic in $\bm{G}$; see \cref{rem:why-need-A-analytic}.)
The main point of the analysis is that the reweighting factor $A_{\bm{G}}(\beta)$ contributes a factor of $\sqrt{1 - |\beta|^2 \xi''(0)}$ to the second moment.
This cancels the Curie--Weiss partition function \cref{eq:CW-partition-fn-bound}, leaving
\begin{align*}
    \E_{\bm{G}}\wrapb{\abs{\frac{X_{\bm{G}}(\beta)}{X_{\bm{G}}(0)}}^2} = 1+o_n(1),
    \qquad \forall \beta \in \overline{\D(0,R)}.
\end{align*}
Then \cref{thm:zeros-pspin-glass} readily follows from \cref{lem:jensen-formula-second-moment}.
Finally, we note that while our choice of $A_{\bm{G}}(\beta)$ is motivated by spin glass considerations, for the formal proof it is enough to control the upper bound from \cref{lem:jensen-formula-second-moment}, and this is purely a second moment calculation.
In particular, we do not need to prove that $A_{\bm{G}}(\beta)$ satisfies \cref{eq:reweighted-partition-function}.

\subsection{Reduction from Zero-Counting to Second Moment}\label{subsec:jensen-formula-second-moment}

In this subsection, we prove \cref{lem:jensen-formula-second-moment,lem:no-reweight-reduction-to-curie-weiss}. \cref{lem:no-reweight-reduction-to-curie-weiss} will not play a formal role in the proofs of our main results. We include its proof to explain what \cref{lem:jensen-formula-second-moment} obtains with the natural choice $f_{\bm{G}}(\beta) = Z_{\bm{G}}(\beta)$, and to provide a simple demonstration of the second moment calculations that arise in our method.

\begin{proof}[Proof of \cref{lem:jensen-formula-second-moment}]
Note that (deterministically)
\begin{align*}
    \#\{\omega \in \D(0,r) : f_{\bm{G}}(\omega) = 0\}
    &\le \wrapp{\log \frac{R}{r}}^{-1}
    \sum_{\omega \in \D(0,r) : f_{\bm{G}}(\omega) = 0}
    \log \wrapp{\frac{R}{|\omega|}} \\
    &\le \frac{1}{\varepsilon}
    \sum_{\omega \in \D(0,R) : f_{\bm{G}}(\omega) = 0}
    \log \wrapp{\frac{R}{|\omega|}}
    =
    \frac{1}{\varepsilon}
    \E_{\theta}\wrapb{\log \abs{\frac{f_{\bm{G}}(Re^{i\theta})}{f_{\bm{G}}(0)}}},
\end{align*}
where the final equality is by \cref{thm:jensen-zeros}.
Taking expectation over $\bm{G}$ we obtain
\begin{align*}
    \E_{\bm{G}}\wrapb{\#\{\omega \in \D(0,r) : f_{\bm{G}}(\omega) = 0\}}
    \le \frac{1}{\varepsilon}
    \E_{\bm{G}}\E_{\theta}\wrapb{\log \abs{\frac{f_{\bm{G}}(Re^{i\theta})}{f_{\bm{G}}(0)}}}
    \le \frac{1}{2\varepsilon}
    \E_{\theta}\log \E_{\bm{G}}\wrapb{\abs{\frac{f_{\bm{G}}(Re^{i\theta})}{f_{\bm{G}}(0)}}^2},
\end{align*}
by Jensen's inequality.
\end{proof}

\begin{proof}[Proof of \cref{lem:no-reweight-reduction-to-curie-weiss}]
We first compute the inner expectation by leveraging the independence of the entries of $\bm{G}$. For notational convenience, we write $\beta = Re^{i\theta} \in \C$, which we treat as fixed for the moment. Observe that
\begin{align*}
    \E_{\bm{G}}\wrapb{\abs{\frac{Z_{\bm{G}}(\beta)}{Z_{\bm{G}}(0)}}^{2}} &= \E_{\tau,\sigma \sim \varrho}\wrapb{\E_{\bm{G}}\wrapb{\exp\wrapp{\beta \cdot \mathcal{H}_{\bm{G}}(\tau) + \overline{\beta} \cdot \mathcal{H}_{\bm{G}}(\sigma)}}} \\
    &= \E_{\tau,\sigma \sim \varrho}\wrapb{\exp\wrapp{\frac{1}{2} \cdot \E_{\bm{G}}\wrapb{\wrapp{\beta \cdot \mathcal{H}_{\bm{G}}(\tau) + \overline{\beta} \cdot \mathcal{H}_{\bm{G}}(\sigma)}^{2}}}} \tag{Moment Generating Function of a Gaussian} \\
    &= \E_{\tau,\sigma \sim \varrho}\wrapb{\exp\wrapp{\frac{n}{2} \cdot \wrapp{\beta^{2} + \overline{\beta}^{2}} \cdot \xi(1) + n \cdot \abs{\beta}^{2} \cdot \xi\wrapp{\frac{\langle \tau,\sigma \rangle}{n}}}} \tag{Using \cref{eq:hamiltonian-covariance}} \\
    &= \exp\wrapp{n \cdot \xi(1) \cdot \re \beta^{2}} \cdot Z_{\CW}\wrapp{\abs{\beta}^{2}}. \tag{Using symmetry of the hypercube/sphere}
\end{align*}
Since $\abs{\beta}^{2} = R^2$, plugging this back in yields the desired identity, up to an extra additive error term of
\begin{align*}
    n \cdot \xi(1) \cdot \E_{\theta} \re\wrapp{R^2e^{2i\theta}}.
\end{align*}
This is clearly zero, concluding the proof.
\end{proof}

\subsection{Preliminaries}
Throughout this paper, we use $c,C,K,L > 0$ to denote various constants that may change depending on the context; unless otherwise specified, these constants are universal, i.e. independent of all other parameters (especially $n$). All logarithms are base $e$. As a complex function, we take the branch of the logarithm that is real on the positive real line. We write $\D(z,r)$ for the open disk of radius-$r$ around $z$ in the complex plane $\C$; we abbreviate $\D$ for $\D(0,1)$.

We will also need a very crude upper bound on the magnitude of $Z_{\bm{G}}(\beta)$.
\begin{proposition}\label{prop:logZ-crude}
There exists a constant $C > 0$ depending only on $\xi$ such that
\begin{align*}
    \mathbb{P}_{\bm{G}}\wrapb{\abs{Z_{\bm{G}}(\beta)} \leq \exp\wrapp{C \cdot \max\wrapc{\abs{\beta},1} \cdot n}, \, \forall \beta \in \C} \geq 1 - o_{n}(1).
\end{align*}

\end{proposition}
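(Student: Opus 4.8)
The plan is to reduce the bound on $\abs{Z_{\bm{G}}(\beta)}$ to a uniform bound on the Hamiltonian $\mathcal{H}_{\bm{G}}$ over $\mathrm{supp}(\varrho)$, and then establish the latter by an elementary Gaussian tail estimate together with a union bound. Since $\mathcal{H}_{\bm{G}}(\sigma) \in \R$ for every $\sigma$, for any $\beta \in \C$ we have $\abs{\exp(\beta \mathcal{H}_{\bm{G}}(\sigma))} = \exp(\mathrm{Re}(\beta)\,\mathcal{H}_{\bm{G}}(\sigma)) \leq \exp\wrapp{\abs{\beta} \cdot \abs{\mathcal{H}_{\bm{G}}(\sigma)}}$, and therefore
\begin{align*}
    \abs{Z_{\bm{G}}(\beta)} \leq \E_{\sigma \sim \varrho}\wrapb{\exp\wrapp{\abs{\beta} \cdot \abs{\mathcal{H}_{\bm{G}}(\sigma)}}} \leq \exp\wrapp{\abs{\beta} \cdot \sup_{\sigma \in \mathrm{supp}(\varrho)} \abs{\mathcal{H}_{\bm{G}}(\sigma)}}.
\end{align*}
Hence it suffices to produce a constant $C = C(\xi)$ with $\Pr_{\bm{G}}\wrapb{\sup_{\sigma \in \mathrm{supp}(\varrho)} \abs{\mathcal{H}_{\bm{G}}(\sigma)} \leq C n} \geq 1 - o_{n}(1)$; on that event the proposition follows using $\abs{\beta} \leq \max\wrapc{\abs{\beta},1}$ (which also covers $\beta = 0$, where $Z_{\bm{G}}(\beta) = 1$). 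Note the ``$\forall \beta$'' quantifier is then free, since the good event does not depend on $\beta$.

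\textbf{Hypercube case.} Here $\mathrm{supp}(\varrho) = \mathcal{C}_{n}$ is finite, and by \cref{eq:hamiltonian-covariance} each $\mathcal{H}_{\bm{G}}(\sigma)$ with $\sigma \in \mathcal{C}_{n}$ is a centered Gaussian of variance $n \cdot \xi(1) = n\sum_{p}\upgamma_{p}^{2}$. A standard Gaussian tail bound gives $\Pr_{\bm{G}}\wrapb{\abs{\mathcal{H}_{\bm{G}}(\sigma)} > Cn} \leq 2\exp\wrapp{-C^{2}n/(2\xi(1))}$, so a union bound over the $2^{n}$ configurations yields $\Pr_{\bm{G}}\wrapb{\sup_{\sigma \in \mathcal{C}_{n}} \abs{\mathcal{H}_{\bm{G}}(\sigma)} > Cn} \leq 2\exp\wrapp{(\log 2 - C^{2}/(2\xi(1)))\,n}$, which is $o_{n}(1)$ (in fact exponentially small) as soon as $C > \sqrt{2\xi(1)\log 2}$.

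\textbf{Spherical case.} Writing $\sigma = \sqrt{n}\,u$ with $u$ in the Euclidean unit sphere of $\R^{n}$, and letting $\bm{G}^{(p)} \in (\R^{n})^{\otimes p}$ denote the Gaussian tensor with entries $\bm{G}_{i_{1},\dots,i_{p}}$, we have $\abs{\mathcal{H}_{\bm{G}}(\sigma)} \leq \sum_{p=2}^{p_{\max}} \frac{\upgamma_{p}}{n^{(p-1)/2}} \cdot n^{p/2} \cdot \norm{\bm{G}^{(p)}}_{\mathrm{inj}} = \sqrt{n}\sum_{p=2}^{p_{\max}} \upgamma_{p}\norm{\bm{G}^{(p)}}_{\mathrm{inj}}$, where $\norm{T}_{\mathrm{inj}}$ is the injective (operator) norm of $T$. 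It remains to show $\norm{\bm{G}^{(p)}}_{\mathrm{inj}} = O_{p}(\sqrt{n})$ with probability $1 - o_{n}(1)$, which is a standard net argument: fix a $\tfrac{1}{2p}$-net $\mathcal{N}$ of the unit sphere of size $\abs{\mathcal{N}} \leq (1+4p)^{n}$; for fixed net points $v_{1},\dots,v_{p}$ the inner product $\langle \bm{G}^{(p)}, v_{1} \otimes \cdots \otimes v_{p}\rangle$ is $\mathcal{N}(0,1)$, so a union bound over the $\abs{\mathcal{N}}^{p}$ tuples shows the maximum of these inner products is at most $A\sqrt{n}$ with probability $1 - o_{n}(1)$ once $A^{2} > 2p\log(1+4p)$; and a telescoping/approximation argument upgrades this to $\norm{\bm{G}^{(p)}}_{\mathrm{inj}} \leq (1 - p \cdot \tfrac{1}{2p})^{-1} A\sqrt{n} = 2A\sqrt{n}$. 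Summing over $p = 2,\dots,p_{\max}$ gives $\sup_{\sigma \in \mathcal{S}_{n}} \abs{\mathcal{H}_{\bm{G}}(\sigma)} \leq \wrapp{\sum_{p} 2A\upgamma_{p}}\,n$ with probability $1 - o_{n}(1)$. (Alternatively, one can simply cite the well-known fact that the ground state energy $\tfrac{1}{n}\max_{\sigma \in \mathcal{S}_{n}}\mathcal{H}_{\bm{G}}(\sigma)$ concentrates around an $O(1)$ constant.)

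Taking $C$ to be the maximum of the two constants obtained above (each depending only on $\xi$ through $\{\upgamma_{p}\}$ and $p_{\max}$) completes the proof. I expect the only non-routine step to be the uniform control over the continuum $\mathcal{S}_{n}$ in the spherical case; everything else is a single Gaussian tail bound plus a union bound.
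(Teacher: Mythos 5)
Your proposal is correct, and it fills in exactly the argument the paper only sketches: the paper cites tensor injective norm bounds from the literature (together with the observation that the $p_{\max}=2$ case reduces to an operator-norm bound for a Gaussian matrix) without giving a self-contained proof, whereas you supply the reduction $\abs{Z_{\bm{G}}(\beta)} \leq \exp(\abs{\beta}\sup_{\sigma}\abs{\mathcal{H}_{\bm{G}}(\sigma)})$ explicitly, a direct union bound over $2^n$ configurations for the hypercube, and the standard $\varepsilon$-net argument for the spherical case. Both your constants and your net bookkeeping check out, so this is a valid and essentially equivalent route; the only remark worth making is that the injective-norm bound alone already covers the hypercube (since $\mathcal{C}_{n}\subseteq\mathcal{S}_{n}$ as subsets of the sphere of radius $\sqrt{n}$), so the separate hypercube argument, while giving a slightly sharper constant, is not strictly necessary.
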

This is well-established in the literature. For instance, one can leverage bounds on the \emph{tensor injective norm} of $\bm{G}$ (see e.g. \cite{TS14, NDT15, DM24}); in the case $p_{\max} = 2$, this just means that a random matrix with independent $O(1/n)$-variance centered Gaussian entries has $O(1)$-operator norm with high probability. An overpowered alternative would be to appeal to known free energy formulas (see the discussion in \cref{subsec:prior}).

\begin{lemma}\label{lem:beta-second-ub}
We have $\betasecond^2 \cdot \xi''(0) \le 1$.
\end{lemma}
\begin{proof}
In either case of the definition of $h(m)$ in \cref{def:2nd-moment-regime}, $h''(0) = -1$.
Let $\varphi$ be defined as in \cref{def:2nd-moment-regime} with $\beta = \betasecond$.
The definition of $\betasecond$ implies
\begin{align*}
    0 \ge \varphi''(0)
    = \betasecond^2 \cdot \xi''(0) - 1,
\end{align*}
from which the result follows.
\end{proof}

\section{Evaluation of Reweighted Second Moment}\label{sec:reweighted-second-moment}
In this section, we formally define the reweighting factor $A_{\bm{G}}(\beta)$. We will then prove \cref{prop:reweighted-second-moment} below, which controls the upper bound from \cref{lem:jensen-formula-second-moment} with $f_{\bm{G}}(\beta) = X_{\bm{G}}(\beta)$ defined in \cref{eq:reweighted-partition-function}, assuming as input \cref{prop:curie-weiss-bound,prop:planted-cluster-expansion-moment} which are proved in \cref{sec:curie-weiss-bound,sec:planted-cluster-expansion-moment}. \cref{thm:zeros-pspin-glass} is immediate from \cref{prop:reweighted-second-moment}, and its proof appears at the end of this section. We recall that we write \begin{align*}
    R = \wrapp{1 - \frac{\varepsilon}{2}} \betasecond
\end{align*}
throughout.
    \begin{proposition}\label{prop:reweighted-second-moment}
Let $X_{\bm{G}}(\beta) = A_{\bm{G}}(\beta) \cdot Z_{\bm{G}}(\beta)$, where $A_{\bm{G}}$ is defined in \cref{eq:def-A-bG} below. Then,
\begin{align*}
    \E_{\bm{G}}\wrapb{\abs{\frac{X_{\bm{G}}(\beta)}{X_{\bm{G}}(0)}}^2} \le 1+o_n(1),
    \qquad \forall \beta \in \overline{\D(0,R)},
\end{align*}
where the error term $o_n(1)$ is uniform over $\beta$ in this domain.
\end{proposition}

\subsection{Definition of Reweighting Factor}

We begin by defining $A_{\bm{G}}(\beta)$. Note that
\begin{align}
  \nonumber
  \nabla^2 \mathcal{H}_{\bm{G}}(\bm{0})
  &= \frac{\upgamma_2}{\sqrt{n}} (\bm{G}_{i,j} + \bm{G}_{j,i} \,:\, 1\le i,j\le n ) \\
  \label{eq:nabla2H-expansion}
  &= \frac{\xi''(0)^{1/2}}{\sqrt{2n}} (\bm{G}_{i,j} + \bm{G}_{j,i} \,:\, 1\le i,j\le n ).
\end{align}
We will write throughout this paper
\begin{align}
    \label{eq:def-bM}
    \bm{M} = \nabla^2 \mathcal{H}_{\bm{G}}(\bm{0}).
\end{align}
Let $\binom{[n]}{2}$ denote the set of unordered pairs $\{i,j\}$ with $1\le i<j \le n$. We will think of elements of $\binom{[n]}{2}$ as edges on the complete graph $K_n$. For $e = \{i,j\} \in \binom{[n]}{2}$, we write $\bm{M}_e = \bm{M}_{i,j}$. Let
\begin{align}
    \label{eq:def-zeta}
    \zeta = \zeta(\beta) = \beta \xi''(0)^{1/2}.
\end{align}
Note that for all $\beta \in \overline{\D(0,R)}$,
\begin{align}
    \label{eq:zeta-ub}
    |\zeta| \le R\xi''(0)^{1/2}
    = \wrapp{1 - \frac{\varepsilon}{2}} \betasecond \cdot \xi''(0)^{1/2}
    \le 1 - \frac{\varepsilon}{2},
\end{align}
by \cref{lem:beta-second-ub}. Let $K = \lfloor \log \log n \rfloor$. Then, define
\begin{align}
    \label{eq:uc}
    \UC(n) = \wrapc{
        \Gamma \subseteq \binom{[n]}{2} :
        \text{
          $\Gamma$ is the edge set of a vertex-disjoint union of cycles
          and $|\Gamma| \le K$
        }
    }.
\end{align}
For $\Gamma \in \UC(n)$, let $c(\Gamma)$ denote the number of connected components of $\Gamma$ viewed as a graph. Finally define the reweighting factor
\begin{align}
    \label{eq:def-A-bG}
    A_{\bm{G}}(\beta)
    &=
    (1-\zeta^2)^{-1/2} \exp\wrapp{
        -\frac{n\beta^2}{2} \xi(1) + \frac{n\zeta^2}{4}
        - \frac{\zeta^2}{2} - \frac{\zeta^4}{8}
    } \\
    \nonumber
    &\qquad \times
    \exp\wrapp{
        -\frac{\beta}{2} \tr (\bm{M})
        -\frac{\beta^2}{2}
        \sum_{1\le i<j\le n}
        \bm{M}_{i,j}^2
    }\wrapp{
        \sum_{\Gamma \in \UC(n)}
        (-1)^{c(\Gamma)}
        \prod_{e\in \Gamma} (\beta \bm{M}_e)
    }.
\end{align}
We explain the motivation for this definition in \cref{sec:reweight-factor-motivation}. In light of \cref{eq:zeta-ub}, it is possible to pick a branch of the square root such that $(1-\zeta^2)^{-1/2}$ is analytic for $\beta$ in an open neighborhood $\Omega \supseteq \overline{\D(0,R)}$. Thus $A_{\bm{G}}(\beta)$ is analytic in $\Omega$.

\subsection{Gaussian Change of Measure Formula}

In preparation for the proof of \cref{prop:reweighted-second-moment}, we introduce a Gaussian change of measure formula, which will be useful for evaluating the exponentially-tilted Gaussian moments that arise in this proof.
If the tilting vector $\bm{v}$ is real, this formula is standard, and holds more generally for any measurable $f$ with suitably bounded growth.
However, the case of complex $\bm{v}$ will require more care.
\begin{fact}\label{fac:planting}
Let $m\in \N$ and $\bm{g} \sim \mathcal{N}(0,I_m)$. Consider a function $f : \C^m \rightarrow \C$ of the form
\begin{align}\label{eq:def-f}
    f(\bm{g}) = \exp(\langle \bm{A} \bm{g}, \bm{g}\rangle  + \langle \bm{w}, \bm{g} \rangle) p(\bm{g}),
\end{align}
where $\bm{A} \in \R^{m\times m}$ is symmetric with $\bm{A} \prec \frac12 I_m$ in the positive semidefinite ordering, $\bm{w} \in \C^m$, $p$ is a multivariate polynomial (with possibly complex coefficients) , and $\bm{A}, \bm{w}, p$ are deterministic. Then, for any deterministic $\bm{v} \in \C^m$,
\[
    \E \wrapb{e^{\langle \bm{g}, \bm{v} \rangle} f(\bm{g})}
    = \E \wrapb{e^{\langle \bm{g}, \bm{v} \rangle}}
    \E \wrapb{f(\bm{g}+\bm{v})}.
\]
\end{fact}
\begin{proof}
We may work in a coordinate basis in which $\bm{A}$ is diagonal, and write $\bm{A} = \diag(a_1,\ldots,a_m)$. For $0\le t\le m$, let $\bm{v}_{\le t} = (\bm{v}_1,\ldots,\bm{v}_t,0,\ldots,0) \in \C^m$ denote the vector whose first $t$ entries match those of $\bm{v}$, and whose last $m-t$ entries are $0$. We will argue by induction over $t$ that
\begin{equation}
\label{eq:planting-induction}
\E \wrapb{e^{\langle \bm{g}, \bm{v}_{\le t} \rangle} f(\bm{g})}
= \E \wrapb{e^{\langle \bm{g}, \bm{v}_{\le t} \rangle}}
\E \wrapb{f(\bm{g}+\bm{v}_{\le t})}.
\end{equation}
The base case $t=0$ is trivial. Suppose we have proved \cref{eq:planting-induction} up to $t$, for some $t<m$. Let
\[
    f_t(x) = \E \wrapb{
      e^{\langle \bm{g}, \bm{v}_{\le t} \rangle} f(\bm{g}_1,\ldots,\bm{g}_{t},x,\bm{g}_{t+2},\ldots,\bm{g}_m)
    }.
\]
Then, for $\varphi(x) = \frac{1}{\sqrt{2\pi}} e^{-x^2/2}$ the standard Gaussian density,
\begin{align*}
    \E \wrapb{e^{\langle \bm{g}, \bm{v}_{\le t+1} \rangle} f(\bm{g})}
    &= \E \wrapb{e^{\bm{g}_{t+1}\bm{v}_{t+1}} f_t(\bm{g}_{t+1})}
    = \int_{\R} e^{\bm{v}_{t+1} x} f_t(x) \varphi(x)\,dx \\
    &= e^{\bm{v}_{t+1}^2/2} \int_{\R} f_t(x) \varphi(x-\bm{v}_{t+1})\,dx
    = \E \wrapb{e^{\bm{g}_{t+1}\bm{v}_{t+1}}}
    \int_{\R} f_t(x) \varphi(x-\bm{v}_{t+1})\,dx.
\end{align*}
Since $x \mapsto f_t(x) \varphi(x - \bm{v}_{t+1})$ is complex analytic, its integral around the contour
\[
    -T \rightarrow T \rightarrow T + i\imag(\bm{v}_{t+1}) \rightarrow -T + i\imag(\bm{v}_{t+1}) \rightarrow -T
\]
is $0$ for any $T$. Sending $T\to\infty$ then shows
\[
    \int_{\R} f_t(x) \varphi(x-\bm{v}_{t+1})\,dx
    = \int_{\R + i\imag(\bm{v}_{t+1})} f_t(x) \varphi(x-\bm{v}_{t+1})\,dx
    = \E \wrapb{f_t(\bm{g}_{t+1}+\bm{v}_{t+1})},
\]
where the assumed hypotheses on $f$ and $\bm{A}$ imply the contribution from the ends of the contour integral go to $0$ as $T\to\infty$.
Thus, for $\bm{e}_{t+1} \in \R^m$ the $(t+1)$-th basis vector,
\[
    \E \wrapb{e^{\langle \bm{g}, \bm{v}_{\le t+1} \rangle} f(\bm{g})}
    = \E \wrapb{e^{\bm{g}_{t+1}\bm{v}_{t+1}}}
    \E \wrapb{f_t(\bm{g}_{t+1}+\bm{v}_{t+1})}
    = \E \wrapb{e^{\bm{g}_{t+1}\bm{v}_{t+1}}}
    \E \wrapb{e^{\langle \bm{g}, \bm{v}_{\le t} \rangle} f(\bm{g} + \bm{e}_{t+1}\bm{v}_{t+1})}.
\]
The induction hypothesis that \cref{eq:planting-induction} holds for $t$ then implies \cref{eq:planting-induction} holds for $t+1$. This completes the induction. Finally, the conclusion \cref{eq:planting-induction} for $t=m$ implies the result.
\end{proof}
\begin{remark}\label{rem:why-need-A-analytic}
If $\bm{v}$ is real, the contour integration step in the above proof is not necessary. Instead, the above formula is a simple consequence of the fact that
\begin{align*}
    \E \wrapb{e^{\langle \bm{g}, \bm{v} \rangle} f(\bm{g})}
    \Big/ \E \wrapb{e^{\langle \bm{g}, \bm{v} \rangle}}
    = \E \wrapb{e^{\langle \bm{g}, \bm{v} \rangle - \frac12 \|\bm{v}\|^2} f(\bm{g})},
\end{align*}
and the density reweighting on the right-hand side amounts to shifting the mean of $\bm{g}$ by $\bm{v}$. Due to the contour integration step, for complex $\bm{v}$ we need that $f$ is complex analytic in $\bm{g}$. As a result, our method will require that $A_{\bm{G}}(\beta)$ is complex analytic in $\bm{G}$; this is one of the factors motivating our choice of $A_{\bm{G}}(\beta)$.
\end{remark}
We will apply \cref{fac:planting} through the following corollary, where the disorder vector $\bm{G}$ takes the role of $\bm{g}$.
\begin{corollary}\label{cor:planted-hamiltonian}
Identify the Hamiltonian $\mathcal{H}_{\bm{G}}$ with the Gaussian vector
\begin{align*}
    \bm{G} &= (\bm{G}_{i_1,\ldots,i_p} : 2\le p\le p_{\max}, 1\le i_1,\ldots,i_p \le n) \in \C^m, &
    m &= n^2 + n^3 + \cdots + n^{p_{\max}}.
\end{align*}
For $\beta \in \C$ and $\sigma,\tau \in \mathcal{S}_{n}$, define a planted Hamiltonian (which can be identified with a mean-shifted Gaussian vector $\bm{G}^{\pl,\beta,\sigma,\tau}$)
\begin{align}
    \label{eq:planted-law}
    \mathcal{H}^{\pl,\beta,\sigma,\tau}_{\bm{G}}(x)
    = \mathcal{H}^{\nul}_{\bm{G}}(x)
    + \beta \xi\wrapp{\frac{\langle x, \sigma \rangle}{n}}
    + \bar\beta \xi\wrapp{\frac{\langle x, \tau \rangle}{n}},
\end{align}
where $\mathcal{H}^{\nul}_{\bm{G}}$ denotes a draw from the original law \cref{eq:pspin-hamiltonian} of $\mathcal{H}_{\bm{G}}$.
Then, for any $f : \C^m \rightarrow \C$ of the form \cref{eq:def-f}, where $\bm{A}$, $\bm{w}$, $p$ therein are as in \cref{fac:planting},
\begin{align*}
    \E \wrapb{
        \exp\wrapp{
            \beta \mathcal{H}_{\bm{G}}(\sigma)
            + \bar\beta \mathcal{H}_{\bm{G}}(\tau)
        }
        f(\mathcal{H}_{\bm{G}})
    }
    =
    \E \wrapb{
        \exp\wrapp{
            \beta \mathcal{H}_{\bm{G}}(\sigma)
            + \bar\beta \mathcal{H}_{\bm{G}}(\tau)
        }
    }
    \E \wrapb{f\wrapp{\mathcal{H}^{\pl,\beta,\sigma,\tau}_{\bm{G}}}}.
\end{align*}
\end{corollary}

\subsection{Second Moment Calculation}

We next introduce the two inputs that will be proved in future sections and prove \cref{prop:reweighted-second-moment}.
Let $\ov(\varrho) \in \mathcal{P}([-1,1])$ denote the distribution of the overlap $\langle \sigma, \tau \rangle / n$ for independent $\sigma,\tau \sim \varrho$.
\begin{proposition}[Proved in \cref{sec:curie-weiss-bound}]\label{prop:curie-weiss-bound}
Let $\alpha = 0.01$.
The following estimates hold for all $\beta \in \overline{\D(0,R)}$.
\begin{align}
    \label{eq:curie-weiss-bound-orthogonal}
    \E_{q\sim\ov(\varrho)} \wrapb{
        \indic\wrapb{|q| \le n^{-1/2 + \alpha}}
        \exp\wrapp{
            n|\beta|^2\xi(q)
        }
    }
    &= \frac{1}{\sqrt{1 - |\beta|^2 \xi''(0)}} + o_n(1), \\
    \label{eq:curie-weiss-bound-not-orthogonal}
    \E_{q\sim\ov(\varrho)} \wrapb{
        \indic\wrapb{|q| > n^{-1/2 + \alpha}}
        \exp\wrapp{
            n|\beta|^2\xi(q)
        }
    }
    &\le e^{-\Omega(n^{2\alpha})}.
\end{align}
\end{proposition}
For $\bm{W} \in \C^{n\times n}$, $\lambda \in \C$, $\sigma, \tau \in \mathcal{S}_{n}$, further define
\begin{align*}
    \bm{W}^{\lambda,\sigma,\tau} = \bm{W}
    + \frac{\lambda}{n} \sigma\sigma^\top
    + \frac{\bar\lambda}{n} \tau\tau^\top
\end{align*}
and
\begin{align*}
    f_\lambda(\bm{W})
    = \wrapp{
      \sum_{\Gamma \in \UC(n)}
      (-1)^{c(\Gamma)}
      \prod_{e\in \Gamma} (\lambda \bm{W}_e)
    } \wrapp{
      \sum_{\Gamma \in \UC(n)}
      (-1)^{c(\Gamma)}
      \prod_{e\in \Gamma} (\bar\lambda \bm{W}_e)
    }.
\end{align*}
For $q \in \supp\ov(\varrho)$, let $\varrho_2(q)$ be the distribution of $(\sigma,\tau) \sim \varrho^{\otimes 2}$ conditioned on $\langle \sigma, \tau \rangle / n = q$.
Let $\GOE(n)$ denote the law of the symmetric matrix $\bm{W}$ with independent centered Gaussian entries on and above the diagonal, with variance
\begin{align*}
    \E\wrapb{\bm{W}_{i,j}^2}
    = \begin{cases}
        2/n & i = j, \\
        1/n & i\neq j.
    \end{cases}
\end{align*}
Then let
\begin{align}\label{eq:def-Val}
    \Val(\lambda,q)
    &=
    |1-\lambda^2|^{-1}
    \exp\wrapp{
        \frac{n}{2}\re(\lambda^2)
        - \re(\lambda^2) - \frac{1}{4} \re(\lambda^4)
    } \\
    \nonumber
    &\times
    \E_{(\sigma,\tau) \sim \varrho_2(q)}
    \E_{\bm{W} \sim \GOE(n)} \wrapb{
      \exp\wrapp{
        - \re(\lambda) \tr (\bm{W}^{\lambda,\sigma,\tau})
        - \re(\lambda^2)
        \sum_{1\le i<j\le n}
        (\bm{W}^{\lambda,\sigma,\tau}_{i,j})^2
      } f_\lambda(\bm{W}^{\lambda,\sigma,\tau})
    }.
\end{align}

\begin{proposition}[Proved in \cref{sec:planted-cluster-expansion-moment}]\label{prop:planted-cluster-expansion-moment}
Let $\alpha = 0.01$.
Consider any $\lambda \in \overline{\D(0,1-\frac{\varepsilon}{2})}$ and $q\in \supp\ov(\varrho)$.
If $|q| \le n^{-1/2 + \alpha}$, then
\begin{align}\label{eq:planted-cluster-expansion-moment-orthogonal}
    \Val(\lambda,q) = (1 - |\lambda|^2)^{1/2} + o_n(1),
\end{align}
for an error $o_n(1)$ uniform over such $q$.
Moreover, there exists $C = C(\varepsilon)$ such that for all other $q$,
\begin{align}\label{eq:planted-cluster-expansion-moment-not-orthogonal}
    |\Val(\lambda,q)| \le C \cdot 2^{4K}.
\end{align}
\end{proposition}

\begin{proof}[Proof of \cref{prop:reweighted-second-moment}]
We calculate
\begin{align}
    \nonumber
    \E_{\bm{G}}\wrapb{
        \abs{\frac{X_{\bm{G}}(\beta)}{X_{\bm{G}}(0)}}^2
    }
    &=
    \E_{\bm{G}}\wrapb{
        Z_{\bm{G}}(\beta) Z_{\bm{G}}(\bar \beta)
        A_{\bm{G}}(\beta) A_{\bm{G}}(\bar \beta)
    } \\
    \label{eq:reweighted-second-moment-step1}
    &=
    \E_{\sigma,\tau \sim \varrho}
    \E_{\bm{G}} \wrapb{
        \exp\wrapp{
            \beta \mathcal{H}_{\bm{G}}(\sigma)
            + \bar\beta \mathcal{H}_{\bm{G}}(\tau)
        }
        A_{\bm{G}}(\beta)
        A_{\bm{G}}(\bar \beta)
    }.
\end{align}
We will evaluate the inner expectation by applying \cref{cor:planted-hamiltonian} with $f(\bm{G}) = A_{\bm{G}}(\beta) A_{\bm{G}}(\bar\beta)$.
To justify its use, we note that for some polynomial $p$, (recalling $\bm{M}, \zeta$ defined in \cref{eq:nabla2H-expansion,eq:def-bM,eq:def-zeta})
\begin{align*}
    f(\bm{G}) &= \exp\wrapp{
        -\re(\beta) \tr (\bm{M})
        -\re(\beta^2)
        \sum_{1\le i<j\le n}
        \bm{M}_{i,j}^2
    } p(\bm{G}) \\
    &= \exp \wrapp{
        -\re(\zeta) \sqrt{\frac{2}{n}} \sum_{i=1}^n \bm{G}_{i,i}
        -\frac{\re(\zeta^2)}{2n}
        \sum_{1\le i<j\le n}
        (\bm{G}_{i,j}+\bm{G}_{j,i})^2
    } p(\bm{G})
\end{align*}
In particular, if we treat $\bm{G}$ as a vector in $\C^m$ as in \cref{cor:planted-hamiltonian}, with $m$ defined therein, then $f(\bm{G})$ is of the form $\exp(\langle \bm{A} \bm{G}, \bm{G} \rangle + \langle \bm{w}, \bm{G} \rangle) p(\bm{G})$, where $\bm{A} \in \R^{m\times m}$ has entries
\begin{align*}
    \bm{A}_{(i_1,\ldots,i_p),(j_1,\ldots,j_{p'})} =
    \frac{-\re(\zeta^2)}{2n}
    \indic\wrapb{
        \text{
            $p=p'=2$, $i_1\neq i_2$, and
            ($(i_1,i_2)=(j_1,j_2)$ or $(i_1,i_2)=(j_2,j_1)$)
        }
    }.
\end{align*}
Thus all nonzero entries of $\bm{A}$ are in $2\times 2$ blocks with entries $\frac{-\re(\zeta^2)}{2n}$.
By \cref{eq:zeta-ub},
\begin{align*}
    \bm{A}
    \preceq \max\wrapp{
        \frac{-\re(\zeta^2)}{n}, 0
    } I
    \preceq
    \frac{1}{n} I.
\end{align*}
Thus \cref{cor:planted-hamiltonian} applies, and \cref{eq:reweighted-second-moment-step1} simplifies to
\begin{align}
    \nonumber
    \E_{\bm{G}} \wrapb{
        \abs{\frac{X_{\bm{G}}(\beta)}{X_{\bm{G}}(0)}}^2
    }
    &= \E_{\sigma,\tau \sim \varrho} \wrapb{
        \E_{\bm{G}} \wrapb{
            \exp\wrapp{
                \beta \mathcal{H}_{\bm{G}}(\sigma)
                + \bar\beta \mathcal{H}_{\bm{G}}(\tau)
            }
        }
        \E^{\pl,\beta,\sigma,\tau} \wrapb{
            A_{\bm{G}}(\beta) A_{\bm{G}}(\bar \beta)
        }
    } \\
    \label{eq:reweighted-second-moment-step2}
    &= \E_{q\sim \ov(\varrho)} \E_{(\sigma,\tau) \sim \varrho_2(q)} \wrapb{
        \E_{\bm{G}} \wrapb{
            \exp\wrapp{
                \beta \mathcal{H}_{\bm{G}}(\sigma)
                + \bar\beta \mathcal{H}_{\bm{G}}(\tau)
            }
        }
        \E^{\pl,\beta,\sigma,\tau} \wrapb{
            A_{\bm{G}}(\beta) A_{\bm{G}}(\bar \beta)
        }
    },
\end{align}
where $\E^{\pl,\beta,\sigma,\tau}$ denotes expectation with respect to $\mathcal{H}_{\bm{G}}$ drawn from the planted law $\mathcal{H}^{\pl,\beta,\sigma,\tau}_{\bm{G}}$ defined in \cref{eq:planted-law}.
The first of the two innermost expectations in \cref{eq:reweighted-second-moment-step2} evaluates as (similarly to the proof of \cref{lem:no-reweight-reduction-to-curie-weiss})
\begin{align*}
    \E_{\bm{G}} \wrapb{
        \exp\wrapp{
            \beta \mathcal{H}_{\bm{G}}(\sigma)
            + \bar\beta \mathcal{H}_{\bm{G}}(\tau)
        }
    }
    = \exp \wrapp{
        n\re(\beta^2) \xi(1) + n |\beta|^2 \xi(q)
    }.
\end{align*}
Note that this depends on $(\sigma,\tau)$ only through their overlap $q$.
So,
\begin{align}\label{eq:reweighted-second-moment-step3}
    \E_{\bm{G}} \wrapb{
        \abs{\frac{X_{\bm{G}}(\beta)}{X_{\bm{G}}(0)}}^2
    }
    = \E_{q\sim \ov(\varrho)} \wrapb{
        \exp \wrapp{
            n\re(\beta^2) \xi(1) + n |\beta|^2 \xi(q)
        }
        \E_{(\sigma,\tau) \sim \varrho_2(q)}
        \E^{\pl,\beta,\sigma,\tau} \wrapb{
            A_{\bm{G}}(\beta) A_{\bm{G}}(\bar \beta)
        }
    }.
\end{align}
For the remaining inner expectation, note from \cref{eq:nabla2H-expansion} that
\begin{align*}
    \bm{W} \coloneqq \xi''(0)^{-1/2} \nabla^2 \mathcal{H}^{\nul}_{\bm{G}}(0)
\end{align*}
has $\GOE(n)$ distribution.
Under $\E^{\pl,\beta,\sigma,\tau}$, $\mathcal{H}_{\bm{G}}$ is drawn from the planted law $\mathcal{H}^{\pl,\beta,\sigma,\tau}_{\bm{G}}$, so
\begin{align*}
    \bm{M}
    = \nabla^2 \mathcal{H}^{\pl,\beta,\sigma,\tau}_{\bm{G}}(0)
    = \xi''(0)^{1/2} \bm{W}
    + \frac{\beta \xi''(0)}{n} \sigma\sigma^\top
    + \frac{\bar\beta \xi''(0)}{n} \tau\tau^\top
    = \xi''(0)^{1/2} \bm{W}^{\zeta,\sigma,\tau}.
\end{align*}
Then, under $\E^{\pl,\beta,\sigma,\tau}$,
\begin{align*}
    A_{\bm{G}}(\beta) A_{\bm{G}}(\bar \beta)
    &= |1-\zeta^2|^{-1}
    \exp\wrapp{
        - n\re(\beta^2)\xi(1)
        + \frac{n}{2}\re(\zeta^2)
        - \re(\zeta^2) - \frac{1}{4} \re(\zeta^4)
    } \\
    &\qquad \times
    \exp\wrapp{
        - \re(\beta) \tr (\bm{M})
        - \re(\beta^2)
        \sum_{1\le i<j\le n}
        \bm{M}_{i,j}^2
    } \\
    &\qquad \times \wrapp{
        \sum_{\Gamma \in \UC(n)}
        (-1)^{c(\Gamma)}
        \prod_{e\in \Gamma} (\beta \bm{M}_e)
    } \wrapp{
        \sum_{\Gamma \in \UC(n)}
        (-1)^{c(\Gamma)}
        \prod_{e\in \Gamma} (\bar\beta \bm{M}_e)
    } \\
    &= |1-\zeta^2|^{-1}
    \exp\wrapp{
        - n\re(\beta^2)\xi(1)
        + \frac{n}{2}\re(\zeta^2)
        - \re(\zeta^2) - \frac{1}{4} \re(\zeta^4)
    } \\
    &\qquad \times
    \exp\wrapp{
        - \re(\zeta) \tr (\bm{W}^{\zeta,\sigma,\tau})
        - \re(\zeta^2)
        \sum_{1\le i<j\le n}
        (\bm{W}^{\zeta,\sigma,\tau}_{i,j})^2
    } f_\zeta(\bm{W}^{\zeta,\sigma,\tau}).
\end{align*}
So, for any $q \in \supp \ov(\varrho)$,
\begin{align*}
    \E_{(\sigma,\tau) \sim \varrho_2(q)}
    \E^{\pl,\beta,\sigma,\tau} \wrapb{
        A_{\bm{G}}(\beta) A_{\bm{G}}(\bar \beta)
    }
    = \exp\wrapp{
        -n\re(\beta^2)\xi(1)
    } \Val(\zeta,q).
\end{align*}
Thus \cref{eq:reweighted-second-moment-step3} simplifies to
\begin{align}
    \label{eq:reweighted-second-moment-step4}
    \E_{\bm{G}} \wrapb{
        \abs{\frac{X_{\bm{G}}(\beta)}{X_{\bm{G}}(0)}}^2
    }
    &= \E_{q\sim \ov(\varrho)} \wrapb{
        \exp \wrapp{
            n |\beta|^2 \xi(q)
        }
        \Val(\zeta,q)
    }
    = (\romI) + (\romII),
\end{align}
where (recall $\alpha = 0.01)$
\begin{align*}
    (\romI) &= \E_{q\sim \ov(\varrho)} \wrapb{
        \indic\wrapb{|q| \le n^{-1/2 + \alpha}}
        \exp \wrapp{
            n |\beta|^2 \xi(q)
        }
        \Val(\zeta,q)
    }, \\
    (\romII) &= \E_{q\sim \ov(\varrho)} \wrapb{
        \indic\wrapb{|q| > n^{-1/2 + \alpha}}
        \exp \wrapp{
            n |\beta|^2 \xi(q)
        }
        \Val(\zeta,q)
    }.
\end{align*}
Note that \cref{eq:zeta-ub} yields $|\zeta| \le 1 - \frac{\varepsilon}{2}$, so the hypothesis of \cref{prop:planted-cluster-expansion-moment} is satisfied with $\lambda = \zeta$. By \cref{prop:curie-weiss-bound,prop:planted-cluster-expansion-moment},
\begin{align*}
    (\romI)
    &\le \wrapp{\sqrt{1 - |\zeta|^2} + o_n(1)} \cdot
    \E_{q\sim \ov(\varrho)} \wrapb{
        \indic\wrapb{|q| \le n^{-1/2 + \alpha}}
        \exp \wrapp{
            n |\beta|^2 \xi(q)
        }
    } \\
    &\le \wrapp{\sqrt{1 - |\zeta|^2} + o_n(1)}
    \wrapp{\frac{1}{\sqrt{1 - |\beta|^2 \xi''(0)}} + o_n(1)}
    = 1 + o_n(1).
\end{align*}
Here we recall from \cref{eq:def-zeta} that $|\zeta| = |\beta| \xi''(0)^{1/2}$.
Moreover,
\begin{align*}
    (\romII)
    \le 2^{4K} \cdot
    \E_{q\sim \ov(\varrho)} \wrapb{
        \indic\wrapb{|q| > n^{-1/2 + \alpha}}
        \exp \wrapp{
            n |\beta|^2 \xi(q)
        }
    }
    \le 2^{4\log \log n} \cdot e^{-\Omega(n^{2\alpha})}
    = o_n(1).
\end{align*}
Plugging into \cref{eq:reweighted-second-moment-step4} concludes the proof.
\end{proof}

\subsection{Proof of Main Zero-Freeness Result}

We now derive \cref{thm:zeros-pspin-glass} from \cref{prop:reweighted-second-moment}.

\begin{proof}[Proof of \cref{thm:zeros-pspin-glass}]
As discussed just below \cref{eq:def-A-bG}, $A_{\bm{G}}$ is analytic in an open neighborhood $\Omega \supseteq \overline{\D(0,R)}$.
Therefore, $X_{\bm{G}}$ is also analytic in $\Omega$.
Setting $f_{\bm{G}}(\beta) = X_{\bm{G}}(\beta)$ in \cref{lem:jensen-formula-second-moment} and combining with \cref{prop:reweighted-second-moment} yields
\begin{align*}
    \E\wrapb{\#\{\omega \in \D(0,r) : X_{\bm{G}}(\omega) = 0\}}
    &\le \frac{1}{2\varepsilon} \E_{\theta} \log \E_{\bm{G}} \wrapb{
        \abs{\frac{X_{\bm{G}}(Re^{i\theta})}{X_{\bm{G}}(0)}}^2
    }
    \le \frac{1}{2\varepsilon} \E_{\theta} \log (1+o_n(1))
    = o_n(1).
\end{align*}
By Markov's inequality,
\begin{align*}
    \mathbb{P}\wrapb{\#\{\omega \in \D(0,r) : X_{\bm{G}}(\omega) = 0\} = 0}
    = 1 - o_n(1).
\end{align*}
The result follows because deterministically
\begin{align*}
    \#\{\omega \in \D(0,r) : Z_{\bm{G}}(\omega) = 0\}
    \le \#\{\omega \in \D(0,r) : X_{\bm{G}}(\omega) = 0\}.
\end{align*}
\end{proof}

\section{Bounds on Curie--Weiss Partition Functions}\label{sec:curie-weiss-bound}

In this section, we prove \cref{prop:curie-weiss-bound}. 
To provide intuition, we begin with a heuristic calculation justifying \cref{prop:curie-weiss-bound} under two simplifying assumptions.
\begin{itemize}
    \item A Gaussian approximation to the overlap distribution $\ov(\varrho)$. That is, replace $\ov(\varrho)$ by $\mathcal{N}(0,1/n)$. This is reasonable because in the spherical setting, the actual overlap distribution is a high dimensional sphere projected onto a one dimensional subspace, while in the Ising setting, the overlap distribution is a binomial distribution -- both of which are well-approximated by a Gaussian.
    \item A quadratic approximation to $\xi(q)$. That is, pretend that $\xi(q) = \frac{1}{2}\xi''(0) q^2$. This is essentially saying that the higher degree terms will only contribute terms which are lower order in $n$.
\end{itemize}
With these assumptions, the first part of the proposition follows via the Gaussian integral

\begin{align*}
    \E_{q\sim\ov(\varrho)} \wrapb{
        \indic\wrapb{|q| \le n^{-1/2 + \alpha}}
        \exp\wrapp{
            n|\beta|^2\xi(q)
        }
    }
    &\approx 
    \E_{q\sim \mathcal{N}(0,1/n)} \wrapb{
        \exp\wrapp{
            \frac{1}{2}n|\beta|^2 \xi''(0) q^2
        }
    } \\
    &= \int_{-\infty}^{\infty} \sqrt{\frac{n}{2\pi}} \exp\wrapp{-\frac{n}{2}\lprp{1 - |\beta|^2 \xi''(0)} q^2} \,dq \\
    &= \frac{1}{\sqrt{1 - |\beta|^2 \xi''(0)}}.
\end{align*}

The second part of the proposition is similarly easy with this simplification, because it just ends up being a Gaussian tail bound.

For the formal proof, we follow the approach in \cite{MSB21} and operate in the following slightly more general setting.
\begin{itemize}
    \item $\psi:[-1,1]\to \R$ is a smooth function (in our case, $\psi(m) = |\beta|^2 \xi(m)$)
    \item $\varrho$ is either $\mathsf{Unif}(\mathcal{C}_n)$ or $\mathsf{Unif}(\mathcal{S}_n)$
    \item $h:[-1,1]\to \R$ is the entropy term defined in \cref{def:2nd-moment-regime} (depending on the choice of $\varrho$)
    \item $\varphi(m)\defeq \psi(m) + h(m)$ satisfying
    \begin{itemize}
        \item $\varphi$ is uniquely maximized at $m^*=0$.
        \item $\varphi''(0)=-\epsilon$ for some $\epsilon>0$
    \end{itemize}
    \item $\alpha = 0.01$. We use this to define the interval from which we expect the dominant contribution to come from, as $\mathcal{I}_n = [-n^{-1/2 + \alpha}, n^{-1/2 + \alpha}]$.
\end{itemize}

Given these assumptions, the following are true.
\begin{restatable}{lemma}{CWIntegral}\label{lem:CW-integral}
    Define the following partition functions depending on $\psi$.
    \begin{align*}
        Z_\psi   &\defeq \E_{m\sim \ov(\varrho)}\wrapb{\exp\wrapp{n\psi(m)}}, \\[4pt]
        Z^0_\psi &\defeq \E_{m\sim \ov(\varrho)}\wrapb{\indic\wrapb{|m| \le n^{-1/2 + \alpha}} \cdot \exp\wrapp{n\psi(m)}}, \\[4pt]
        Z^1_\psi &\defeq \E_{m\sim \ov(\varrho)}\wrapb{\indic\wrapb{|m| > n^{-1/2 + \alpha}} \cdot \exp\wrapp{n\psi(m)}},
    \end{align*}
    We have the following estimates:
    \begin{align*}
        Z^0_\psi &= (1 + o_{n}(1)) \cdot \sqrt{\frac{1}{\varepsilon}} \cdot \exp\wrapp{n \cdot \varphi\wrapp{0}}, \\
        Z^1_\psi &= e^{-\Omega\lprp{n^{2\alpha}}}.
    \end{align*}
\end{restatable}
We now use the lemma to give a proof of \cref{prop:curie-weiss-bound}, after which we prove \cref{lem:CW-integral}.

\begin{proof}[Proof of \cref{prop:curie-weiss-bound}]
    Set $\psi(m) = |\beta|^2 \xi(m)$, and observe that $\varphi(m) = \psi(m) + h(m)$ satisfies the conditions above with $\epsilon = 1 - |\beta|^2 \xi''(0)$. Also, $\varphi(0) = 0$. Plugging this into the second conclusion of \cref{lem:CW-integral} gives

    $$Z^1_\psi = \E_{m\sim \ov(\varrho)} \wrapb{
        \indic\wrapb{|m| > n^{-1/2+\alpha}}
        \exp\wrapp{
            n|\beta|^2\xi(m)
        }
    }
    = e^{-\Omega(n^{2\alpha})}$$

    Next, plugging into the first conclusion of \cref{lem:CW-integral} gives
    $$Z^0_\psi = \E_{m\sim \ov(\varrho)} \wrapb{
        \indic\wrapb{|m| \le n^{-\frac{1}{2} + \alpha}}
        \exp\wrapp{
            n|\beta|^2\xi(m)
        }
    }
    = (1 + o_{n}(1)) \cdot \sqrt{\frac{1}{1 - |\beta|^2 \xi''(0)}}$$
\end{proof}

We now turn to a proof of \cref{lem:CW-integral}, which formalizes the idea that in a small neighborhood of $m^*$, our integral is well-approximated by a Gaussian integral -- this uses the Laplace approximation of \cref{lem:laplace-approx}, which says that any distribution that looks like $\mu \propto e^{n\varphi(m)}$ for $\varphi$ with a unique maximum essentially only depends on the second derivative of $\varphi$ at its maximizer $m^*$.
\begin{proof}[Proof of \cref{lem:CW-integral}]
First we estimate $Z_\psi^0$ by considering  the spherical and Ising cases separately. As we will see the proof works for any $\alpha \in (0,1/6)$. Set $\delta_{n} = n^{-\frac{1}{2} + \alpha}$ and $\mathcal{I}_n = [-\delta_n,\delta_n]$.

\paragraph{The Spherical Case.} When $\varrho = \mathsf{Unif}(\mathcal{S}_{n})$, we have
\begin{align*}
    Z_\psi^0 &= (1 + o_{n}(1)) \cdot \sqrt{\frac{n}{2\pi}} \cdot \int_{-\delta_{n}}^{\delta_{n}} \frac{1}{(1 - m^{2})^{3/2}} \cdot \exp(n \cdot \varphi(m)) \,dm \tag{\cref{cor:stirling} \cref{item:stirling-interior-sphere}} \\
    &= (1 + o_{n}(1)) \cdot \sqrt{\frac{1}{\varepsilon}} \cdot \exp\wrapp{n \cdot \varphi(0)}, \tag{\cref{lem:laplace-approx} and $\delta_{n} = n^{-\frac{1}{2} + \alpha}$}
\end{align*}
as we claimed in the lemma.

\paragraph{The Ising Case.} Unlike the spherical case above, since the support of $\ov(\varrho)$ is discrete when $\varrho = \mathsf{Unif}(\mathcal{C}_{n})$, we cannot directly invoke the Laplace Approximation from \cref{lem:laplace-approx}. To overcome this, we approximate the sum by a corresponding integral, and bound the error. For this, we will need \cref{lem:riemann-approx} from the appendix about Riemann approximation.

For convenience, define the smooth function
\begin{align*}
    \zeta_{n}(m) \defeq \frac{1}{2^{n}} \cdot \binom{n}{\frac{1 + m}{2} \cdot n} \cdot \exp(n \cdot \psi(m)), \qquad \forall m \in [-1,1],
\end{align*}
where the domain of $m \mapsto \binom{n}{\frac{1+m}{2} \cdot n}$ is extended to all of $[-1,1]$ via the continuous binomial coefficient: $\binom{x}{y} \defeq \frac{\Gamma(x+1)}{\Gamma(y+1) \cdot \Gamma(x-y+1)}$. Note that $\ov(\varrho)(m) = \frac{1}{2^{n}}\binom{n}{\frac{1+m}{2} \cdot n}$ for all $m \in \supp(\ov(\varrho))$.

Since
\begin{align*}
    Z_\psi^0 = \sum_{m \in \supp\wrapp{\ov(\varrho)} \cap \mathcal{I}_n} \zeta_{n}(m),
\end{align*}
\cref{lem:riemann-approx} implies
\begin{align*}
    Z_\psi^0 \leq \underset{(\mathsf{Error \, Term})}{\underbrace{\frac{n}{2} \cdot \frac{\delta_{n}}{n} \cdot \sup_{m \in \mathcal{I}_n} \abs{\zeta_{n}'(m)}}} + \underset{(\mathsf{Main \, Term})}{\underbrace{\frac{n}{2} \cdot \int_{\mathcal{I}_n} \zeta_{n}(m) \,dm}}
\end{align*}
where the $\frac{n}{2}$ scaling comes from the difference between consecutive elements of $\supp\wrapp{\ov(\varrho)}$. We may bound the main term as follows, similar to the argument for the spherical case:
\begin{align*}
    (\mathsf{Main \, Term}) &\leq (1 + o_{n}(1)) \cdot \sqrt{\frac{n}{2\pi}} \cdot \int_{-\delta_{n}}^{\delta_{n}} \sqrt{\frac{1}{1 - m^{2}}} \cdot \exp(n \cdot \varphi(m)) \,dm \tag{\cref{cor:stirling} \cref{item:stirling-interior-cube}} \\
    &\leq (1 + o_{n}(1)) \cdot \sqrt{\frac{1}{\varepsilon}} \cdot \exp\wrapp{n \cdot \varphi(0)} \tag{\cref{lem:laplace-approx} and $\delta_{n} = n^{-\frac{1}{2} + \alpha}$}
\end{align*}
Hence, it suffices to show that $(\mathsf{Error \, Term}) \leq o_{n}(1) \cdot (\mathsf{Main \, Term})$. Towards this, we claim the following.
\begin{claim}[see e.g. Lemma B.8 in \cite{MSB21}]\label{claim:stirling-deriv}
For any $m \in (-1,1)$ bounded away from the endpoints, we have
\begin{align*}
    \zeta_{n}'(m) = \zeta_{n}(m) \cdot \wrapp{n \cdot \varphi'(m) + \frac{m}{1-m^{2}} + O(1/n)}.
\end{align*}
\end{claim}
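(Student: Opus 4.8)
The plan is to compute the logarithmic derivative $\zeta_{n}'(m)/\zeta_{n}(m) = \tfrac{d}{dm}\log\zeta_{n}(m)$ and then multiply back by $\zeta_{n}(m)$. Since $\zeta_{n}(m) = \varrho^{\allone}(m)\cdot\exp(n\cdot\psi(m))$, the factor $\exp(n\psi(m))$ contributes exactly $n\psi'(m)$ to the logarithmic derivative. Because $\varphi = \psi + h$, the claim therefore reduces to showing
\begin{align*}
    \frac{d}{dm}\log\varrho^{\allone}(m) = n\cdot h'(m) + \frac{m}{1-m^{2}} + O(1/n),
\end{align*}
uniformly over $m$ ranging in any compact subset of $(-1,1)$. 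Here we interpret $\varrho^{\allone}$ via its smooth extension through the Gamma function, namely $\varrho^{\allone}(m) = 2^{-n}\,\Gamma(n+1)\big/\bigl(\Gamma(\tfrac{1+m}{2}n+1)\,\Gamma(\tfrac{1-m}{2}n+1)\bigr)$, which agrees with $\varrho^{\allone}$ on $\supp(\varrho^{\allone})$ and is the function implicitly differentiated in the Riemann approximation step above.

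Writing $\Psi \defeq (\log\Gamma)'$ for the digamma function and differentiating $\log\varrho^{\allone}(m)$ in $m$ gives
\begin{align*}
    \frac{d}{dm}\log\varrho^{\allone}(m) = -\frac{n}{2}\wrapp{\Psi\wrapp{\tfrac{1+m}{2}n+1} - \Psi\wrapp{\tfrac{1-m}{2}n+1}}.
\end{align*}
Set $x_{\pm} \defeq \tfrac{1\pm m}{2}n + 1$, so that $x_{\pm} \asymp n$ with constants controlled by a lower bound on $1 - |m|$. Using the classical expansion $\Psi(x) = \log x - \tfrac{1}{2x} + O(x^{-2})$ as $x\to\infty$, together with $\log x_{\pm} = \log\tfrac{1\pm m}{2} + \log n + \tfrac{2}{(1\pm m)n} + O(n^{-2})$ and $\tfrac{1}{2x_{\pm}} = \tfrac{1}{(1\pm m)n} + O(n^{-2})$, we obtain
\begin{align*}
    \Psi(x_{+}) - \Psi(x_{-}) = \log\frac{1+m}{1-m} + \frac{1}{(1+m)n} - \frac{1}{(1-m)n} + O(n^{-2}).
\end{align*}
Multiplying by $-\tfrac{n}{2}$, the leading term is $-\tfrac{n}{2}\log\tfrac{1+m}{1-m} = n\cdot h'(m)$ since $h'(m) = -\tfrac12\log\tfrac{1+m}{1-m}$ in the cubical case, the order-$1$ term is $-\tfrac12\bigl(\tfrac{1}{1+m} - \tfrac{1}{1-m}\bigr) = \tfrac{m}{1-m^{2}}$, and the remainder is $O(n\cdot n^{-2}) = O(1/n)$. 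Combining with the $n\psi'(m)$ contribution and multiplying through by $\zeta_{n}(m)$ yields the claim.

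The argument is essentially bookkeeping of digamma asymptotics, so there is no substantive obstacle. The one point that requires a little care is the additive constant $1$ inside the Gamma arguments: it is precisely this constant that, after multiplication by $\tfrac{n}{2}$, produces the $\tfrac{m}{1-m^{2}}$ term, and one must check that its interaction with the $-\tfrac{1}{2x}$ term of the digamma expansion leaves all error contributions at order $O(1/n)$. Uniformity over compact subsets of $(-1,1)$ is immediate since all implied constants depend only on a lower bound for $1 - |m|$.
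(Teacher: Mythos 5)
Your proof is correct. The paper does not give its own argument for this claim — it simply cites Lemma~B.8 of \cite{MSB21} — and your digamma computation is the standard (and essentially the only reasonable) way to establish it: write $\log\varrho^{\allone}$ via $\log\Gamma$, differentiate to get a difference of digamma values at $x_{\pm} = \tfrac{1\pm m}{2}n+1$, and apply $\Psi(x)=\log x - \tfrac{1}{2x} + O(x^{-2})$. Your bookkeeping is correct: the $+1$ in $x_{\pm}$ contributes $\tfrac{2}{(1\pm m)n}$ through $\log x_{\pm}$, the $-\tfrac{1}{2x}$ term contributes $-\tfrac{1}{(1\pm m)n}$, and the difference $\tfrac{1}{(1+m)n}-\tfrac{1}{(1-m)n}$, scaled by $-n/2$, yields exactly $\tfrac{m}{1-m^{2}}$, with the rest absorbed into $O(1/n)$ uniformly on compacts of $(-1,1)$. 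One small remark worth making explicit: this claim, and your proof, are specific to the cubical case $\varrho = \mathsf{Unif}(\mathcal{C}_{n})$ (where the paper indeed invokes it); for the spherical density the analogous order-one correction is $\tfrac{3m}{1-m^{2}}$, not $\tfrac{m}{1-m^{2}}$, so the claim does not hold verbatim there — but the paper handles the spherical case separately via the Laplace approximation, so no issue arises.
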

We now use this to complete the proof of the cubical case. Since $\varphi'\wrapp{m^{*}} = 0$,
\begin{align*}
    \sup_{m \in \mathcal{I}_n} \abs{\varphi'(m)} \leq \delta_{n} \cdot \sup_{m \in \mathcal{I}_n} \abs{\varphi''(m)} \leq O(\varepsilon) \cdot \delta_{n} = O\wrapp{n^{-\frac{1}{2} + \alpha}}
\end{align*}
by the Mean Value Theorem. Since $m \in \mathcal{I}_n$ is bounded away from $\pm1$, it follows that
\begin{align*}
    (\mathsf{Error \, Term}) &= \frac{\delta_{n}}{2} \cdot \sup_{m \in \mathcal{I}_n} \abs{\zeta_{n}(m) \cdot \wrapp{n \cdot \varphi'(m) + \frac{m}{1 - m^{2}} + O(1/n)}} \\
    &\leq O(\varepsilon n) \cdot \delta_{n}^{2} \cdot \zeta_{n}\wrapp{m^{*}} \\
    &\leq O\wrapp{\varepsilon n^{-\frac{1}{2} + 2\alpha}} \cdot \sqrt{\frac{1}{1 - (m^{*})^{2}}} \cdot \exp\wrapp{n \cdot \varphi\wrapp{m^{*}}} \tag{\cref{cor:stirling} \cref{item:stirling-interior-cube}} \\
    &\leq O\wrapp{n^{-1/6}} \cdot (\mathsf{Main \, Term}). \tag{Using $\alpha < 1/6$}
\end{align*}
Combined with the above upper bound on $(\mathsf{Main \, Term})$, we are done.
\,\\\\
We move on to the bound on $Z^1_\psi$. This formalizes the idea that our integral should have tail bounds similar to that of a Gaussian, using the strong concavity of $\varphi$ around its maximizer.

\begin{align*}
    Z^1_\psi
        &= \int_{[-1,1]\setminus I_n}\ov(\varrho)(m) \cdot \exp(n \cdot \psi(m)) dm \\
        &\leq \poly(n) \cdot \int_{[-1,1] \setminus \mathcal{I}_{n}} \exp(n \cdot \varphi(m)) \,dm \tag{\cref{cor:stirling} \cref{eq:stirling-polyn-loss}} \\
        &\leq \poly(n) \cdot \sup_{m \in \supp\wrapp{\ov(\varrho)} \setminus \mathcal{I}_{n}} \exp(n \cdot \varphi(m)) \\
        &\leq \poly(n) \cdot \exp\wrapp{-C\varepsilon \delta_{n}^{2}n} \cdot \int_{\mathcal{I}_{n} / 2} \exp(n \cdot \varphi(m)) \,dm \tag{\cref{lem:local-concave}} \\
        &\leq \poly(n) \cdot \exp\wrapp{-C\varepsilon \delta_{n}^{2}n} \cdot \int_{\mathcal{I}_{n}} \varrho(m) \cdot \exp(n \cdot \psi(m)) \,dm \tag{\cref{cor:stirling} \cref{eq:stirling-polyn-loss}} \\
        &\leq e^{-\Omega \lprp{n^{2\alpha}}}\cdot Z_\psi^0 \tag{Absorbing the $\poly(n)$} \\
        &\leq e^{-\Omega \lprp{n^{2\alpha}}}. \tag{Using $Z_\psi^0=O(1)$ by \cref{lem:CW-integral}}
\end{align*}
\end{proof}

\section{Expectation of Reweighting Term in Planted Model}\label{sec:planted-cluster-expansion-moment}

In this section, we will prove \cref{prop:planted-cluster-expansion-moment}. Throughout this section, we assume the setting of this proposition: we let $\alpha = 0.01$, $\lambda \in \overline{\D(0,1-\frac{\varepsilon}{2})}$, and $q\in \supp \ov(\varrho)$.
Although the first part of this proposition assumes $|q| \le n^{-1/2 + \alpha}$, we will always state this assumption when it is used.
This section is structured as follows.
\begin{itemize}
    \item The main term in $\Val(\lambda,q)$ is the expectation
    \begin{align*}
        \E_{(\sigma,\tau) \sim \varrho_2(q)}
        \E_{\bm{W} \sim \GOE(n)} \wrapb{
          \exp\wrapp{
            - \re(\lambda) \tr (\bm{W}^{\lambda,\sigma,\tau})
            - \re(\lambda^2)
            \sum_{1\le i<j\le n}
            (\bm{W}^{\lambda,\sigma,\tau}_{i,j})^2
          } f_\lambda(\bm{W}^{\lambda,\sigma,\tau})
        }.
    \end{align*}
    In \cref{subsec:cluster-exp-terms}, we address the exponential factor using a standard Gaussian change of measure calculation.
    This results in a simplified formula \cref{eq:val-step1} for $\Val(\lambda,q)$, where it remains to evaluate
    \begin{align}
        \label{eq:val-overview}
        \E_{(\sigma,\tau) \sim \rho_2(q)}
        \wrapb{
            \exp\wrapp{
                \kappa(\sigma,\tau)
            }
            \E_{\bm{W}} \wrapb{
                f_\tlambda(\bm{W}^{\tlambda,\sigma,\tau})
            }
        }.
    \end{align}
    Here $\tlambda$ is a slight rescaling of $\lambda$, see \cref{eq:def-tchi}, while $\kappa(\sigma,\tau)$ is an error term small enough that its exponential will not contribute significantly to this expectation; thus the interesting term in \cref{eq:val-overview} is the cluster expansion $f_\tlambda(\bm{W}^{\tlambda,\sigma,\tau})$.
    \item In \cref{subsec:cluster-gaussian}, we evaluate the inner expectation $\E_{\bm{W}} \wrapb{f_\tlambda(\bm{W}^{\tlambda,\sigma,\tau})}$ in \cref{eq:val-overview}.
    This is a relatively simple calculation, as $f_\tlambda(\bm{W}^{\tlambda,\sigma,\tau})$ is a product of two multilinear functions in the entries of $\bm{W}$.
    The result is a polynomial in the entries of $\sigma,\tau$, whose monomials are indexed by graphs consisting of cycles (as in $\UC(n)$) \emph{and paths}.
    \item This simplifies \cref{eq:val-overview} to a combinatorial sum, whose evaluation is the main task of this section.
    In \cref{subsec:cluster-overview}, we provide a heuristic overview of this calculation in the case $|q| \le n^{-1/2 + \alpha}$ and explain how the estimate $(1-|\lambda|^2)^{1/2}$ in \cref{prop:planted-cluster-expansion-moment} appears.
    In \cref{subsec:cluster-rigorous} we carry out this calculation rigorously and prove both parts of \cref{prop:planted-cluster-expansion-moment}.
    Some proofs are deferred to \cref{subsec:sph-mmt}.
\end{itemize}

\subsection{Exponential Terms}
\label{subsec:cluster-exp-terms}

The following fact will be useful.
\begin{fact}
  \label{fac:real-squared}
  For all $z\in \C$, $\re(z)^2 = \frac{1}{2} (|z|^2 + \re(z^2))$.
\end{fact}
\begin{proof}
  Write $z = u+iv$, and note that $|z|^2 = u^2+v^2$ while $\re(z^2) = u^2-v^2$.
\end{proof}
We next evaluate the contribution from the trace term in $\Val(\lambda,q)$, which is independent of everything else (because the term $f_\lambda(\bm{W}^{\lambda,\sigma,\tau})$ does not depend on the diagonal entries of $\bm{W}^{\lambda,\sigma,\tau}$).
\begin{proposition}
  \label{prop:trace-term}
  For any $\sigma,\tau \in \mathcal{S}_n$, $\E_{\bm{W}} \exp(-\re(\lambda) \tr(\bm{W}^{\lambda,\sigma,\tau})) = \exp(-\frac{1}{2} |\lambda|^2 - \frac{1}{2} \re(\lambda^2))$.
\end{proposition}
\begin{proof}
  Note that
  \[
    \tr(\bm{W}^{\lambda,\sigma,\tau})
    = \tr(\bm{W}) + \lambda + \bar \lambda
    \stackrel{d}{=} \sqrt{2} Z + 2\re(\lambda),
  \]
  where $Z \sim \mathcal{N}(0,1)$.
  Thus, by \cref{fac:real-squared},
  \begin{align*}
    \E_{\bm{W}} \wrapb{
        \exp(-\re(\lambda) \tr(\bm{W}^{\lambda,\sigma,\tau}))
    }
    &= \E \wrapb{
        \exp(-\sqrt{2}\re(\lambda)Z - 2\re(\lambda)^2)
    }
    = \exp(-\re(\lambda)^2) \\
    &= \exp\wrapp{-\frac{1}{2} |\lambda|^2 - \frac{1}{2} \re(\lambda^2)}.
  \end{align*}
\end{proof}
\noindent The remaining expectation can be evaluated with the help of the following lemma.
\begin{lemma}
  \label{lem:gaussian-quadratic-rewt}
  Let $f : \C \to \C$ be a polynomial and $s\in \R$ satisfy $|s| \le 1$.
  Further let $W \sim \mathcal{N}(0,1/n)$ and $\Delta \in \C$ be arbitrary and fixed. Then,
  \[
    \E \wrapb{
      \exp\wrapp{-s(W+\Delta)^2} f(W+\Delta)
    }
    = \sqrt{\frac{n}{n+2s}} \exp\wrapp{
      - \frac{n}{n+2s} \cdot s\Delta^2
    }
    \E \wrapb{
        f\wrapp{
          \sqrt{\frac{n}{n+2s}}W + \frac{n}{n+2s}\Delta
        }
    }
  \]
\end{lemma}
\begin{remark}
  The proof below works for any measurable $f$ of suitably bounded growth and all $s\in (-\infty, n/2)$.
\end{remark}
\begin{proof}
  This is a routine gaussian calculation:
  \begin{align*}
    \E \wrapb{
      \exp\wrapp{-s(W+\Delta)^2} f(W+\Delta)
    }
    &= \sqrt{\frac{n}{2\pi}}
    \int_\R \exp\wrapp{
      -\frac{n}{2} x^2 - s(x+\Delta)^2
    } f(x+\Delta)
    \,d x \\
    &= \sqrt{\frac{n}{2\pi}}
    \int_\R \exp\wrapp{
      -\frac{n+2s}{2} \wrapp{x + \frac{2s\Delta}{n+2s}}^2 - \frac{n}{n+2s} \cdot s\Delta^2
    } f(x+\Delta)
    \,d x \\
    &= \sqrt{\frac{n}{n+2s}} \exp\wrapp{
      - \frac{n}{n+2s} \cdot s\Delta^2
    }
    \E \wrapb{
        f\wrapp{
          \wrapp{-\frac{2s\Delta}{n+2s} + \sqrt{\frac{n}{n+2s}}W}
          +\Delta
        }
    }.
  \end{align*}
\end{proof}

\begin{proposition}
  \label{prop:exp-quadratic-term}
  For any $\sigma,\tau \in \mathcal{S}_n$ with $\langle \sigma,\tau \rangle / n = q$,
  \begin{align*}
    &\E_{\bm{W}} \wrapb{
      \exp\wrapp{
        - \re(\lambda^2)
        \sum_{1\le i<j\le n}
        (W^{\lambda,\sigma,\tau}_{i,j})^2
      }
      f_\lambda(\bm{W}^{\lambda,\sigma,\tau})
    } \\
    &= \exp\wrapp{
      -\frac{n}{2} \re(\lambda^2) + \frac{1}{2} \re(\lambda^2) - \frac{1}{4} |\lambda|^4 - \frac{1}{4} \re(\lambda^4)
      - \re(\lambda^2)|\lambda|^2 q^2
      + \kappa(\sigma,\tau) + O(n^{-1})
    }
    \E_{\bm{W}} \wrapb{f_\tlambda(\bm{W}^{\tlambda,\sigma,\tau})},
  \end{align*}
  where $\kappa(\sigma,\tau)$ is an error term satisfying
  \begin{align}
    \label{eq:kappa-bd}
    |\kappa(\sigma,\tau)|
    \le \frac{1}{n^2} (\|\sigma\|_4^4 + \|\tau\|_4^4)
  \end{align}
  and
  \begin{align}
    \label{eq:def-tchi}
    \tlambda = \sqrt{\frac{n}{n+2\re(\lambda^2)}} \lambda.
  \end{align}
\end{proposition}
\begin{proof}
  It will be helpful to abbreviate
  \begin{align*}
    s &= \re(\lambda^2), &
    \bm{\Delta} &= \frac{\lambda}{n} \sigma\sigma^\top + \frac{\bar\lambda}{n} \tau\tau^\top, &
    \bm{\Delta}_{i,j} &= \frac{\lambda}{n} \sigma_i\sigma_j + \frac{\bar\lambda}{n} \tau_i\tau_j.
  \end{align*}
  We apply \cref{lem:gaussian-quadratic-rewt} (whose hypothesis $|s| \le 1$ applies because $\lambda \in \overline{\D(0,1-\frac{\varepsilon}{2})}$) to each entry $\bm{W}_{i,j}$ of $W$.
  This shows
  \begin{align}
    \nonumber
    &\E_{\bm{W}} \wrapb{
      \exp\wrapp{
        -s
        \sum_{1\le i<j\le n}
        (W^{\lambda,\sigma,\tau}_{i,j})^2
      }
      f_\lambda(\bm{W}^{\lambda,\sigma,\tau})
    } \\
    \label{eq:exp-quadratic-term-step1}
    &= \wrapp{\frac{n}{n+2s}}^{n(n-1)/4} \exp\wrapp{
      - \frac{ns}{n+2s} \sum_{1\le i<j\le n} \bm{\Delta}_{i,j}^2
    } \E_{\bm{W}} \wrapb{
        f_\lambda\wrapp{
          \sqrt{\frac{n}{n+2s}}\bm{W}
          + \frac{n}{n+2s}\bm{\Delta}
        }
    }.
  \end{align}
  It is easy to check that
  \begin{align*}
    \lambda \wrapp{
      \sqrt{\frac{n}{n+2s}}\bm{W}
      + \frac{n}{n+2s}\bm{\Delta}
    }
    &= \tlambda\bm{W}^{\tlambda,\sigma,\tau}, &
    \bar\lambda \wrapp{
      \sqrt{\frac{n}{n+2s}}\bm{W}
      + \frac{n}{n+2s}\bm{\Delta}
    }
    &= \bar\tlambda\bm{W}^{\tlambda,\sigma,\tau},
  \end{align*}
  and thus
  \[
    f_\lambda\wrapp{
      \sqrt{\frac{n}{n+2s}}\bm{W}
      + \frac{n}{n+2s}\bm{\Delta}
    }
    = f_\tlambda(\bm{W}^{\tlambda,\sigma,\tau}).
  \]
  It remains to evaluate the first two factors on the right-hand side of \cref{eq:exp-quadratic-term-step1}.
  We have
  \begin{align*}
    \wrapp{\frac{n}{n+2s}}^{n(n-1)/4}
    &= \wrapp{1 + \frac{2s}{n}}^{-n(n-1)/4}
    = \exp\wrapp{
      \frac{2s}{n} - \frac{2s^2}{n^2} + O(n^{-3})
    }^{-n(n-1)/4} \\
    &= \exp\wrapp{
      -\frac{n}{2} s + \frac{1}{2} s + \frac{1}{2} s^2 + O(n^{-1})
    }
  \end{align*}
  For use below, define
  \begin{align*}
    \tkappa(\sigma,\tau) &= \sum_{i=1}^n (\lambda \sigma_i^2 + \bar\lambda \tau_i^2)^2, &
    \kappa(\sigma,\tau) &= \frac{s}{2n^2} \tkappa(\sigma,\tau).
  \end{align*}
  Then,
  \[
    |\tkappa(\sigma,\tau)|
    \le \sum_{i=1}^n |\lambda \sigma_i^2 + \bar\lambda \tau_i^2|^2
    \le 2\sum_{i=1}^n \wrapp{|\lambda \sigma_i^2|^2 + |\bar\lambda \tau_i^2|^2}
    \le 2(\|\sigma\|_4^4 + \|\tau\|_4^4),
  \]
  where the final inequality follows from $\lambda \in \overline{\D(0,1-\frac{\varepsilon}{2})}$.
  Consequently $\kappa(\sigma,\tau)$ satisfies \cref{eq:kappa-bd}.
  Then,
  \begin{align*}
    \sum_{1\le i<j\le n} (\bm{\Delta}_{i,j})^2
    &= \frac{1}{n^2} \sum_{1\le i<j\le n} (\lambda \sigma_i\sigma_j + \bar\lambda \tau_i\tau_j)^2 \\
    &= \frac{1}{2n^2} \wrapp{
      \sum_{i,j=1}^n (\lambda \sigma_i\sigma_j + \bar\lambda \tau_i\tau_j)^2
      - \tkappa(\sigma,\tau)
    } \\
    &= \frac{1}{2n^2} \wrapp{
      n^2 (\lambda^2 + \bar\lambda^2 + 2|\lambda|^2 q^2)
      - \tkappa(\sigma,\tau)
    } \\
    &= \re(\lambda^2) + |\lambda|^2 q^2
    - \frac{1}{2n^2} \tkappa(\sigma,\tau)
  \end{align*}
  Thus,
  \begin{align*}
    &\wrapp{\frac{n}{n+2s}}^{n(n-1)/4} \exp\wrapp{
      - \frac{ns}{n+2s} \sum_{1\le i<j\le n} \bm{\Delta}_{i,j}^2
    } \\
    &= \exp\wrapp{
      -\frac{n}{2} s + \frac{1}{2} s + \frac{1}{2} s^2
      - s(\re(\lambda^2) + |\lambda|^2 q^2)
      + \frac{n}{n+2s} \kappa(\sigma,\tau)
      + O(n^{-1})
    } \\
    &= \exp\wrapp{
      -\frac{n}{2} \re(\lambda^2) + \frac{1}{2} \re(\lambda^2) - \frac{1}{2} \re(\lambda^2)^2
      - \re(\lambda^2) |\lambda|^2 q^2
      + \kappa(\sigma,\tau)
      + O(n^{-1})
    }.
  \end{align*}
  \cref{fac:real-squared} yields $\re(\lambda^2)^2 = \frac{1}{2} |\lambda|^4 + \frac{1}{2} \re(\lambda^4)$, which completes the proof.
\end{proof}
\noindent \cref{prop:trace-term,prop:exp-quadratic-term} imply the following simplified formula for $\Val(\lambda,q)$.
For $\tlambda$ defined in \cref{eq:def-tchi},
\begin{align}
  \nonumber
  \Val(\lambda,q)
  &=
  |1-\lambda^2|^{-1}
  \exp\wrapp{
    - \re(\lambda^2)
    - \frac{1}{2} \re(\lambda^4)
    - \frac{1}{2} |\lambda|^2
    - \frac{1}{4} |\lambda|^4
    - \re(\lambda^2)|\lambda|^2 q^2
    + O(n^{-1})
  } \\
  \label{eq:val-step1}
  &\qquad \times \E_{(\sigma,\tau) \sim \rho_2(q)}
  \wrapb{
    \exp\wrapp{
      \kappa(\sigma,\tau)
    }
    \E_{\bm{W}} \wrapb{
      f_\tlambda(\bm{W}^{\tlambda,\sigma,\tau})
    }
  }.
\end{align}

\subsection{Cluster Expansion Term: Expectation over Gaussian Matrix}
\label{subsec:cluster-gaussian}

It remains to evaluate the expectation in the right-hand side of \cref{eq:val-step1}.
In this subsection, we will evaluate the inner expectation over $\bm{W}$, which will yield a combinatorial sum over subgraphs on $[n]$.
The rest of this section will evaluate this combinatorial sum.
In \cref{subsec:cluster-overview}, we provide a non-rigorous outline of how this is achieved, followed by a rigorous evaluation in \cref{subsec:cluster-rigorous}.
For convenience, recall that
\[
  f_\tlambda(\bm{W}^{\tlambda,\sigma,\tau})
  = p_1(\bm{W},\sigma,\tau)p_2(\bm{W},\sigma,\tau),
\]
where
\begin{align*}
  p_1(\bm{W},\sigma,\tau)
  &=
  \sum_{\Gamma \in \UC(n)}
  (-1)^{c(\Gamma)}
  \prod_{\{i,j\} \in \Gamma} \wrapp{
    \tlambda \bm{W}_{i,j}
    + \frac{\tlambda^2}{n} \sigma_i\sigma_j
    + \frac{|\tlambda|^2}{n} \tau_i\tau_j
  }, \\
  p_2(\bm{W},\sigma,\tau)
  &=
  \sum_{\Gamma \in \UC(n)}
  (-1)^{c(\Gamma)}
  \prod_{\{i,j\} \in \Gamma} \wrapp{
    \bar\tlambda \bm{W}_{i,j}
    + \frac{|\tlambda|^2}{n} \sigma_i\sigma_j
    + \frac{\bar\tlambda^2}{n} \tau_i\tau_j
  }.
\end{align*}
We can expand $p_1,p_2$ as polynomials in $\bm{W}$.
These polynomials contain monomials in $\bm{W}$ which are indexed by the following set:
\begin{align}
  \label{eq:P}
  P(n) = \wrapc{
    \Gamma_0 \subseteq \binom{[n]}{2} :
    \begin{array}{c}
      \text{
        $\Gamma_0$ is the edge set of a vertex-disjoint union} \\
      \text{of paths and cycles
        and $|\Gamma_0| \le K$
      }
    \end{array}
  }.
\end{align}
Furthermore, for $\Gamma_0 \in P(n)$, let
\begin{align}
  \label{eq:uc-completion}
  \UC(n,\Gamma_0) = \wrapc{
    \Lambda \subseteq \binom{[n]}{2} :
    \Gamma_0 \cap \Lambda = \emptyset\,\,\text{and}\,\,
    \Gamma_0 \cup \Lambda \in \UC(n)
  }.
\end{align}
Then, we can write
\begin{align*}
  p_1(\bm{W},\sigma,\tau)
  &= \sum_{\Gamma_0 \in P(n)}
  q_1(\Gamma_0,\sigma,\tau)
  \prod_{e\in \Gamma_0} (\tlambda \bm{W}_e), &
  p_2(\bm{W},\sigma,\tau)
  &= \sum_{\Gamma_0 \in P(n)}
  q_2(\Gamma_0,\sigma,\tau)
  \prod_{e\in \Gamma_0} (\bar\tlambda \bm{W}_e),
\end{align*}
where
\begin{align*}
  q_1(\Gamma_0,\sigma,\tau)
  &=
  \sum_{\Lambda \in \UC(n,\Gamma_0)}
  (-1)^{c(\Gamma_0\cup\Lambda)}
  \prod_{\{i,j\} \in \Lambda} \wrapp{
    \frac{\tlambda^2}{n} \sigma_i\sigma_j
    + \frac{|\tlambda|^2}{n} \tau_i\tau_j
  } \\
  q_2(\Gamma_0,\sigma,\tau)
  &=
  \sum_{\Lambda \in \UC(n,\Gamma_0)}
  (-1)^{c(\Gamma_0\cup\Lambda)}
  \prod_{\{i,j\} \in \Lambda} \wrapp{
    \frac{|\tlambda|^2}{n} \sigma_i\sigma_j
    + \frac{\bar\tlambda^2}{n} \tau_i\tau_j
  }
\end{align*}
Consequently
\[
  \E_{\bm{W}} \wrapb{f_\tlambda(\bm{W}^{\tlambda,\sigma,\tau})}
  = \sum_{\Gamma_0\in P(n)}
  \wrapp{\frac{|\tlambda|^2}{n}}^{|\Gamma_0|}
  q_1(\Gamma_0,\sigma,\tau)
  q_2(\Gamma_0,\sigma,\tau).
\]
This implies
\begin{align}
  \nonumber
  \Val(\lambda,q)
  &=
  |1-\lambda^2|^{-1}
  \exp\wrapp{
    - \re(\lambda^2)
    - \frac{1}{2} \re(\lambda^4)
    - \frac{1}{2} |\lambda|^2
    - \frac{1}{4} |\lambda|^4
    - \re(\lambda^2)|\lambda|^2 q^2
    + O(n^{-1})
  } \\
  \label{eq:val-step2}
  &\qquad \times
  \wrapb{
    \sum_{\Gamma_0\in P(n)}
    \wrapp{\frac{|\tlambda|^2}{n}}^{|\Gamma_0|}
    \E_{(\sigma,\tau) \sim \rho_2(q)}
    \wrapb{
      \exp\wrapp{
        \kappa(\sigma,\tau)
      }
      q_1(\Gamma_0,\sigma,\tau)
      q_2(\Gamma_0,\sigma,\tau)
    }
  }.
\end{align}

\subsection{Cluster Expansion Term: Technical Overview of Combinatorial Sum}
\label{subsec:cluster-overview}

\cref{eq:planted-cluster-expansion-moment-not-orthogonal} in \cref{prop:planted-cluster-expansion-moment} follows from relatively crude estimates on $q_1,q_2$.
In this subsection, we non-rigorously explain how to obtain \cref{eq:planted-cluster-expansion-moment-orthogonal} in \cref{prop:planted-cluster-expansion-moment}, that for $|q| \le n^{-1/2+\alpha}$ the quantity \cref{eq:val-step2} is $(1-|\lambda|^2)^{1/2}$ to leading order.

First, recall from \cref{eq:kappa-bd} that $|\kappa(\sigma,\tau)| \le n^{-2} (\|\sigma\|_4^4 + \|\tau\|_4^4)$.
In the Ising setting $\rho = \unif(C_n)$, this is deterministically $o_n(1)$.
In the spherical setting $\rho = \unif(\mathcal{S}_n)$, this is $o_n(1)$ unless $\sigma$ or $\tau$ is delocalized (which occurs with low probability), and deterministically bounded by $2$.
Thus the factor $\exp(\kappa(\sigma,\tau))$ should not affect the expectation in \cref{eq:val-step2}.

We will see that the contribution to \cref{eq:val-step2} from $\Gamma_0 \in P(n) \setminus \UC(n)$ (i.e. those $\Gamma_0$ that contain at least one path) is also $o_n(1)$. 
Roughly speaking, this comes from degree of freedom considerations.
Consider expanding $q_1, q_2$ in \cref{eq:val-step2} to obtain a combinatorial sum over $\Gamma_0 \in P_n$ and $\Lambda_1,\Lambda_2 \in \UC(n,\Gamma_0)$.
The term indexed by $(\Gamma_0,\Lambda_1,\Lambda_2)$ is of order $n^{-E(\Gamma_0,\Lambda_1,\Lambda_2)}$, where $E(\Gamma_0,\Lambda_1,\Lambda_2) = |\Gamma_0| + |\Lambda_1| + |\Lambda_2|$ is is the total number of edges in these subgraphs.
The number of $(\Gamma_0,\Lambda_1,\Lambda_2)$ of a certain shape is of order $n^{V(\Gamma_0,\Lambda_1,\Lambda_2)}$, where $V(\Gamma_0,\Lambda_1,\Lambda_2)$ is the number of independent vertex choices required to specify $(\Gamma_0,\Lambda_1,\Lambda_2)$.
If $\Gamma_0 \in \UC(n)$, then $\Lambda_1,\Lambda_2$ are also unions of disjoint cycles, so the total degrees of freedom $V(\Gamma_0,\Lambda_1,\Lambda_2) - E(\Gamma_0,\Lambda_1,\Lambda_2)$ is zero; this is why we expect the total contribution of such graphs to be $\Theta(1)$.
If $\Gamma_0$ contains paths, each path in increases the degrees of freedom of $\Gamma_0$ by $1$ (it has one more vertex than edges), but decreases the degrees of freedom of both $\Lambda_1$ and $\Lambda_2$ by $1$, as their paths must complete all paths in $\Gamma_0$ to cycles.
Thus each path in $\Gamma_0$ decrements $V(\Gamma_0,\Lambda_1,\Lambda_2) - E(\Gamma_0,\Lambda_1,\Lambda_2)$, and we expect \cref{eq:val-step2} to be dominated by $\Gamma_0$ consisting of only cycles.

So we focus on the following sum, which approximates the last line of \cref{eq:val-step2}:
\begin{align}
  \label{eq:heuristic-expr1}
  \sum_{\Gamma_0\in \UC(n)}
  \wrapp{\frac{|\tlambda|^2}{n}}^{|\Gamma_0|}
  \E_{(\sigma,\tau) \sim \rho_2(q)}
  \wrapb{
    q_1(\Gamma_0,\sigma,\tau)
    q_2(\Gamma_0,\sigma,\tau)
  }
\end{align}

Fix any $\Gamma_0 \in \UC(n)$.
Then each $\Lambda \in \UC(n,\Gamma_0)$ is a vertex-disjoint union of cycles which shares no vertices with $\Gamma_0$.
We can describe each such union of cycles with a sequence $(A_3,A_4,\ldots)$, where $A_k$ is the number of cycles of length $k$.
Let
\begin{align}
  \label{eq:def-Ct}
  \Ct(K) = \wrapc{
    \Omega = (A_3,A_4,\ldots) :
    A_k \ge 0, \sum_{k=3}^\infty kA_k \le K
  }.
\end{align}
(The sequence $(A_3,A_4,\ldots)$ is infinite, but eventually zero due to the constraint $\sum_{k=3}^\infty kA_k \le K$.)
For $\Omega = (A_3,A_4,\ldots) \in \Ct(K)$ (and $\Gamma_0 \in \UC(n)$ as above), let
\begin{align*}
  \UC(n,\Omega) &= \wrapc{
    \Gamma \in \UC(n) :
    \text{$\Gamma$ has $A_k$ cycles of length $k$}
  }, \\
  \UC(n,\Gamma_0,\Omega) &= \wrapc{
    \Lambda \in \UC(n,\Gamma_0) :
    \text{$\Lambda$ has $A_k$ cycles of length $k$}
  }.
\end{align*}
Also, let
\begin{align*}
  |\Omega| &= \sum_{k=3}^\infty kA_k, &
  c(\Omega) &= \sum_{k=3}^\infty A_k, &
  \sym(\Omega) &= \prod_{k=3}^\infty \wrapp{(2k)^{A_k} A_k!}.
\end{align*}
Note that $|\Omega|$ and $c(\Omega)$ equal $|\Gamma|$ and $c(\Gamma)$ for any $\Gamma \in \UC(n,\Omega)$.
Also, $\sym(\Omega)$ is the number of automorphisms of a $\Gamma \in \UC(n,\Omega)$: each $k$-cycle has $2k$ symmetries, and the $A_k$ $k$-cycles can be permuted in $A_k!$ ways.
Continuing,
\begin{align*}
  q_1(\Gamma_0,\sigma,\tau)
  &= (-1)^{c(\Gamma_0)}
  \sum_{\Lambda \in \UC(n,\Gamma_0)}
  (-1)^{c(\Lambda)}
  \prod_{\{i,j\} \in \Lambda} \wrapp{
    \frac{\tlambda^2}{n} \sigma_i\sigma_j
    + \frac{|\tlambda|^2}{n} \tau_i\tau_j
  } \\
  &= (-1)^{c(\Gamma_0)}
  \sum_{\Omega \in \Ct(K)}
  (-1)^{c(\Omega)}
  \sum_{\Lambda \in \UC(n,\Gamma_0,\Omega)}
  \prod_{\{i,j\} \in \Lambda} \wrapp{
    \frac{\tlambda^2}{n} \sigma_i\sigma_j
    + \frac{|\tlambda|^2}{n} \tau_i\tau_j
  }.
\end{align*}
The key estimate is
\begin{align}
  \label{eq:overview-approx1}
  \sym(\Omega) \cdot \sum_{\Lambda \in \UC(n,\Gamma_0,\Omega)}
  \prod_{\{i,j\} \in \Lambda} \wrapp{
    \frac{\tlambda^2}{n} \sigma_i\sigma_j
    + \frac{|\tlambda|^2}{n} \tau_i\tau_j
  }
  &\approx \prod_{k=3}^\infty \tr\wrapp{\wrapp{
    \frac{\tlambda^2}{n} \sigma\sigma^\top
    + \frac{|\tlambda|^2}{n} \tau\tau^\top
  }^k}^{A_k} \\
  \label{eq:overview-approx2}
  &\approx \prod_{k=3}^\infty \wrapp{\tlambda^{2k} + |\tlambda|^{2k}}^{A_k}
  \equiv V(\Omega).
\end{align}
In the step \cref{eq:overview-approx1}, both sides are sums of the form
\[
  \sum_{E\in S}
  \prod_{\{i,j\} \in E} \wrapp{
    \frac{\tlambda^2}{n} \sigma_i\sigma_j
    + \frac{|\tlambda|^2}{n} \tau_i\tau_j
  },
\]
where $E$ is a (multi-)set of pairs $\{i,j\}$ and $S$ is a set of such $E$.
The left-hand side of \cref{eq:overview-approx1} sums over $E$ which form the edge set of a union of cycles containing $A_k$ cycles of length $k$, which avoids the vertices of $\Gamma_0$ and does not repeat vertices.
The right-hand side sums over $E$ which form the edge set of a union of cycles containing $A_k$ cycles of length $k$, but which are allowed to use vertices of $\Gamma_0$ or repeat vertices.
(We need a factor of $\sym(\Omega)$ on the left, because the left-hand side sums over edge sets while the right-hand side sums over sequences of vertices.)
The approximation \cref{eq:overview-approx1} heuristically holds because the difference between these sets of $E$ contributes sub-leading order (since both $|\Gamma_0|$ and $|E|$ are bounded by $K$, which is small).
The approximation \cref{eq:overview-approx2} holds because we assumed $q = \langle \sigma,\tau \rangle / n$ is close to $0$.
Thus
\begin{align}
  \nonumber
  (-1)^{c(\Gamma_0)}
  q_1(\Gamma_0,\sigma,\tau)
  &\approx
  \sum_{\Omega \in \Ct(K)}
  \frac{(-1)^{c(\Omega)} V(\Omega)}{\sym(\Omega)}
  = \sum_{\Omega = (A_3,A_4,\ldots) \in \Ct(K)}
  \prod_{k=3}^\infty \wrapp{- \frac{\tlambda^{2k} + |\tlambda|^{2k}}{2k} }^{A_k} \cdot \frac{1}{A_k!} \\
  \label{eq:overview-approx3}
  &\approx
  \sum_{A_3,A_4,\ldots \ge 0}
  \prod_{k=3}^\infty \wrapp{- \frac{\tlambda^{2k} + |\tlambda|^{2k}}{2k} }^{A_k} \cdot \frac{1}{A_k!} \\
  \nonumber
  &=
  \prod_{k=3}^\infty \exp\wrapp{
    - \frac{\tlambda^{2k} + |\tlambda|^{2k}}{2k}
  }
  = \exp\wrapp{
    - \sum_{k=1}^\infty \frac{\tlambda^{2k} + |\tlambda|^{2k}}{2k}
  } \exp\wrapp{
    \frac{\tlambda^2 + |\tlambda|^2}{2} + \frac{\tlambda^4 + |\tlambda|^4}{4}
  } \\
  \nonumber
  &= (1-\tlambda^2)^{1/2} (1-|\tlambda|^2)^{1/2} \exp\wrapp{
    \frac{\tlambda^2 + |\tlambda|^2}{2} + \frac{\tlambda^4 + |\tlambda|^4}{4}
  }
\end{align}
Analogously,
\[
  (-1)^{c(\Gamma_0)} q_2(\Gamma_0,\sigma,\tau)
  \approx (1-\bar\tlambda^2)^{1/2} (1-|\tlambda|^2)^{1/2} \exp\wrapp{
    \frac{\bar\tlambda^2 + |\tlambda|^2}{2} + \frac{\bar\tlambda^4 + |\tlambda|^4}{4}
  }
\]
so
\[
  q_1(\Gamma_0,\sigma,\tau)q_2(\Gamma_0,\sigma,\tau)
  \approx |1-\tlambda^2| (1-|\tlambda|^2) \exp\wrapp{
    \re(\tlambda^2) + |\tlambda|^2 + \frac{\re(\tlambda^4) + |\tlambda|^4}{2}
  }.
\]
Thus \cref{eq:heuristic-expr1} is approximately
\begin{align}
  \label{eq:heuristic-expr2}
  |1-\tlambda^2| (1-|\tlambda|^2) \exp\wrapp{
    \re(\tlambda^2) + |\tlambda|^2 + \frac{\re(\tlambda^4) + |\tlambda|^4}{2}
  }
  \sum_{\Gamma_0\in \UC(n)}
  \wrapp{\frac{|\tlambda|^2}{n}}^{|\Gamma_0|}.
\end{align}
The remaining sum can be evaluated by a similar argument.
Note that for $\Omega \in \Ct(K)$,
\[
  |\UC(n,\Omega)|
  = \frac{n(n-1)\cdots(n-|\Omega|+1)}{\sym(\Omega)}
  \approx \frac{n^{|\Omega|}}{\sym(\Omega)}.
\]
Thus,
\begin{align}
  \nonumber
  \sum_{\Gamma_0\in \UC(n)}
  \wrapp{\frac{|\tlambda|^2}{n}}^{|\Gamma_0|}
  &=
  \sum_{\Omega \in \Ct(K)}
  |\UC(n,\Omega)|
  \wrapp{\frac{|\tlambda|^2}{n}}^{|\Omega|}
  \approx
  \sum_{\Omega \in \Ct(K)}
  \frac{|\tlambda|^{2|\Omega|}}{\sym(\Omega)} \\
  \label{eq:overview-approx4}
  &\approx
  \sum_{A_3,A_4,\ldots \ge 0}
  \prod_{k=3}^{\infty} \wrapp{\frac{|\tlambda|^{2k}}{2k}}^{A_k} \cdot \frac{1}{A_k!}
  = \prod_{k=3}^{\infty} \exp\wrapp{\frac{|\tlambda|^{2k}}{2k}} \\
  \nonumber
  &=(1-|\tlambda|^2)^{-1/2} \cdot \exp\wrapp{-\frac{|\tlambda|^2}{2}-\frac{|\tlambda|^4}{4}}.
\end{align}
Then \cref{eq:heuristic-expr2} is approximately
\begin{align}
  \label{eq:heuristic-expr3}
  |1-\tlambda^2| (1-|\tlambda|^2)^{1/2} \exp\wrapp{
    \re(\tlambda^2) + \frac{1}{2} |\tlambda|^2 + \frac{1}{2} \re(\tlambda^4) + \frac{1}{4} |\tlambda|^4
  }.
\end{align}
Recall that $\tlambda = \lambda + O(n^{-1})$.
Plugging this into \cref{eq:val-step2} then shows $\Val(\lambda,q) \approx (1-|\lambda|^2)^{1/2}$.
In the next subsection we make this argument rigorous by justifying the approximations made above:
\begin{itemize}
  \item The contribution to \cref{eq:val-step2} from delocalized $\sigma,\tau$ is small, so we may ignore the $\kappa(\sigma,\tau)$ factor. (In the Ising setting this is automatic because $\sigma,\tau \in \{\pm 1\}^n$.)
  \item The contribution to \cref{eq:val-step2} from $\Gamma_0 \in P(n) \setminus \UC(n)$ is small.
  \item The approximation \cref{eq:overview-approx1} of a sum over vertex-disjoint cycles by a product of traces holds, with small error in expectation over $(\sigma,\tau)$.
  \item The approximation \cref{eq:overview-approx2} of the trace holds for all $|q| \le n^{-1/2 + \alpha}$.
  \item The approximations \cref{eq:overview-approx3} and \cref{eq:overview-approx4} of sums over $\Omega \in \Ct(K)$ by infinite sums hold.
  \item $\Val(\lambda,q)$ can be bounded crudely to prove \cref{eq:planted-cluster-expansion-moment-not-orthogonal} of \cref{prop:planted-cluster-expansion-moment}.
\end{itemize}

\subsection{Cluster Expansion Term: Rigorous Evaluation of Combinatorial Sum}
\label{subsec:cluster-rigorous}

For the proof of \cref{prop:planted-cluster-expansion-moment}, we will define a sequence $X_1,\ldots,X_7$ below, where $X_1$ is the combinatorial sum in \cref{eq:val-step2} and $X_7$ is its approximation \cref{eq:heuristic-expr3} that we heuristically computed in the previous subsection.
We will show that consecutive elements of this sequence are approximately equal.
We remind the reader here that we allow $q$ to take any value in $\supp\ov(\rho)$, unless otherwise specified. In particular, only the proof of \cref{prop:x5-x6} below will use the assumption $|q| \le n^{-1/2 + \alpha}$ from the first part of \cref{prop:planted-cluster-expansion-moment}.

For $\sigma,\tau$ with $\langle \sigma,\tau \rangle / n = q$ and $\Omega = (A_3, A_4, \ldots) \in \Ct(K)$, define
\begin{align*}
  U(q,\Omega)
  &= \prod_{k=3}^\infty
  \tr\wrapp{\wrapp{
    \frac{\tlambda^2}{n} \sigma\sigma^\top
    + \frac{|\tlambda|^2}{n} \tau\tau^\top
  }^k}^{A_k}, &
  V(\Omega)
  &= \prod_{k=3}^\infty
  \wrapp{\tlambda^{2k} + |\tlambda|^{2k}}^{A_k}.
\end{align*}
Note that $U(q,\Omega)$ is well-defined because it depends on $(\sigma,\tau)$ only through their overlap $q$.
Also,
\begin{align*}
  \overline{U(q,\Omega)}
  &= \prod_{k=3}^\infty
  \tr\wrapp{\wrapp{
    \frac{|\tlambda|^2}{n} \sigma\sigma^\top
    + \frac{\bar\tlambda^2}{n} \tau\tau^\top
  }^k}^{A_k}, &
  \overline{V(\Omega)}
  &= \prod_{k=3}^\infty
  \wrapp{|\tlambda|^{2k} + \bar\tlambda^{2k}}^{A_k}.
\end{align*}
Throughout this subsection, all expectations $\E$ are over $(\sigma,\tau) \sim \rho_2(q)$ unless denoted otherwise.
Define
\begin{align*}
  X_1 &= \sum_{\Gamma_0\in P(n)}
  \wrapp{\frac{|\tlambda|^2}{n}}^{|\Gamma_0|}
  \E
  \wrapb{
    \exp\wrapp{
      \kappa(\sigma,\tau)
    }
    q_1(\Gamma_0,\sigma,\tau)
    q_2(\Gamma_0,\sigma,\tau)
  } \\
  X_2 &= \sum_{\Gamma_0\in \UC(n)}
  \wrapp{\frac{|\tlambda|^2}{n}}^{|\Gamma_0|}
  \E
  \wrapb{
    \exp\wrapp{
      \kappa(\sigma,\tau)
    }
    q_1(\Gamma_0,\sigma,\tau)
    q_2(\Gamma_0,\sigma,\tau)
  } \\
  X_3 &= \sum_{\Gamma_0\in \UC(n)}
  \wrapp{\frac{|\tlambda|^2}{n}}^{|\Gamma_0|}
  \wrapp{
    \sum_{\Omega \in \Ct(K - |\Gamma_0|)}
    \frac{(-1)^{c(\Omega)} U(q,\Omega)}{\sym(\Omega)}
  }
  \wrapp{
    \sum_{\Omega \in \Ct(K - |\Gamma_0|)}
    \frac{(-1)^{c(\Omega)} \overline{U(q,\Omega)}}{\sym(\Omega)}
  } \\
  &\qquad \times
  \E
  \wrapb{
    \exp\wrapp{
      \kappa(\sigma,\tau)
    }
  } \\
  X_4 &= \sum_{\Gamma_0\in \UC(n)}
  \wrapp{\frac{|\tlambda|^2}{n}}^{|\Gamma_0|}
  \wrapp{
    \sum_{\Omega \in \Ct(K - |\Gamma_0|)}
    \frac{(-1)^{c(\Omega)} U(q,\Omega)}{\sym(\Omega)}
  }
  \wrapp{
    \sum_{\Omega \in \Ct(K - |\Gamma_0|)}
    \frac{(-1)^{c(\Omega)} \overline{U(q,\Omega)}}{\sym(\Omega)}
  } \\
  X_5 &= \sum_{\Omega_0\in \Ct(K)}
  \frac{|\tlambda|^{2|\Omega_0|}}{\sym(\Omega_0)}
  \wrapp{
    \sum_{\Omega \in \Ct(K - |\Omega_0|)}
    \frac{(-1)^{c(\Omega)} U(q,\Omega)}{\sym(\Omega)}
  }
  \wrapp{
    \sum_{\Omega \in \Ct(K - |\Omega_0|)}
    \frac{(-1)^{c(\Omega)} \overline{U(q,\Omega)}}{\sym(\Omega)}
  } \\
  X_6 &= \sum_{\Omega_0\in \Ct(K)}
  \frac{|\tlambda|^{2|\Omega_0|}}{\sym(\Omega_0)}
  \wrapp{
    \sum_{\Omega \in \Ct(K - |\Omega_0|)}
    \frac{(-1)^{c(\Omega)} V(\Omega)}{\sym(\Omega)}
  }
  \wrapp{
    \sum_{\Omega \in \Ct(K - |\Omega_0|)}
    \frac{(-1)^{c(\Omega)} \overline{V(\Omega)}}{\sym(\Omega)}
  } \\
  X_7 &= |1-\tlambda^2| (1-|\tlambda|^2)^{1/2} \exp\wrapp{
    \re(\tlambda^2) + \frac{1}{2} |\tlambda|^2 + \frac{1}{2} \re(\tlambda^4) + \frac{1}{4} |\tlambda|^4
  }.
\end{align*}
Note that $X_1,X_2,X_3$ implicitly depend on $q$ through $\E$, while $X_4,X_5$ depend on $q$ through $U(q,\Omega)$, and $X_6,X_7$ do not depend on $q$.
For use below, let
\[
  \mathcal{Q}_1(n) = \wrapc{
    \text{multi-sets consisting of two elements of $[n]$}
  }
\]
and
\[
  \mathcal{Q}_2(n) = \wrapc{
    \text{multi-sets consisting of elements of $\mathcal{Q}_1$}
  }.
\]
For example, $\{1,1\}$ and $\{2,3\}$ are elements of $\mathcal{Q}_1$, and $\{\{1,2\},\{1,2\},\{3,4\},\{5,5\}\}$ is an element of $\mathcal{Q}_2$.
We will use \cref{lem:error-term-bounding} below to bound several error terms in the approximation argument.
\begin{fact}[Spherical moments are dominated by gaussian moments; proved in \cref{subsec:sph-mmt}]
  \label{fac:sph-mmt}
  For any fixed $i\in [n]$ and integer $k\ge 0$, we have $\E_{\sigma\sim\unif(\mathcal{S}_n)} [|\sigma_i|^{2k}] \le (2k-1)!!$.
\end{fact}
\begin{lemma}
  \label{lem:error-term-bounding}
  Let $S \subseteq \mathcal{Q}_2$ be such that
  \[
    \max_{\Gamma \in S} |\Gamma| \le K.
  \]
  Then, for any $q \in \supp\ov(\rho)$,
  \begin{align}
    \label{eq:error-term-bounding}
    \E \wrapb{
      \sum_{\Gamma \in S}
      \prod_{\{i,j\} \in \Gamma}
      \abs{
        \frac{\tlambda^2}{n} \sigma_i\sigma_j
        + \frac{|\tlambda|^2}{n} \tau_i\tau_j
      }
    }^2
    \le (4K)^{2K} \wrapp{
      \sum_{\Gamma \in S}
      \wrapp{\frac{|\tlambda|^2}{n}}^{|\Gamma|}
    }^2.
  \end{align}
  The same estimate holds with $\frac{|\tlambda|^2}{n} \sigma_i\sigma_j + \frac{\bar\tlambda^2}{n} \tau_i\tau_j$ in place of $\frac{\tlambda^2}{n} \sigma_i\sigma_j + \frac{|\tlambda|^2}{n} \tau_i\tau_j$.
\end{lemma}
\begin{proof}
  Let $Y$ denote the left-hand side of \cref{eq:error-term-bounding}.
  Then, (with $\cup$ denoting multi-set union)
  \begin{align*}
    Y&\le \E \wrapb{
      \sum_{\Gamma \in S}
      \prod_{\{i,j\} \in \Gamma}
      \wrapp{
        \frac{|\tlambda|^2}{n} |\sigma_i||\sigma_j|
        + \frac{|\tlambda|^2}{n} |\tau_i||\tau_j|
      }
    }^2 \\
    &= \sum_{\Gamma_1,\Gamma_2 \in S}
    \wrapp{\frac{|\tlambda|^2}{n}}^{|\Gamma_1 \cup \Gamma_2|}
    \E \wrapb{
      \prod_{\{i,j\} \in \Gamma_1 \cup \Gamma_2}
      \wrapp{
        |\sigma_i||\sigma_j|
        + |\tau_i||\tau_j|
      }
    } \\
    &\le \sum_{\Gamma_1,\Gamma_2 \in S}
    \wrapp{\frac{|\tlambda|^2}{n}}^{|\Gamma_1 \cup \Gamma_2|}
    \prod_{\{i,j\} \in \Gamma_1 \cup \Gamma_2}
    \E \wrapb{
      \wrapp{
        |\sigma_i||\sigma_j|
        + |\tau_i||\tau_j|
      }^{|\Gamma_1 \cup \Gamma_2|}
    }^{\frac{1}{|\Gamma_1 \cup \Gamma_2|}},
  \end{align*}
  where the final estimate is by H\"older's inequality.
  Abbreviate $k = |\Gamma_1 \cup \Gamma_2|$, and note that $k\le 2K$ for all $\Gamma_1,\Gamma_2 \in S$.
  The inner expectation can be bounded by
  \begin{align*}
    2^{k-1}\E \wrapb{
      |\sigma_i|^k|\sigma_j|^k
      + |\tau_i|^k|\tau_j|^k
    }
    &= 2^k \E \wrapb{
      |\sigma_i|^k|\sigma_j|^k
    }
    \le 2^k \E [ |\sigma_i|^{2k} ] \\
    &\le 2^k \cdot (2k-1)!!
    \le (2k)^k
    \le (4K)^{2K}.
  \end{align*}
The first inequality is by the Cauchy--Schwarz inequality, combined with the fact that $\sigma_i$ and $\sigma_j$ have the same distribution; the second uses \cref{fac:sph-mmt} if we are in the spherical setting $\rho = \unif(\mathcal{S}_n)$ (and is trival in the Ising setting $\rho = \unif(C_n)$).
  Thus
  \[
    Y \le (4K)^{2K}
    \sum_{\Gamma_1,\Gamma_2 \in S}
    \wrapp{\frac{|\tlambda|^2}{n}}^{|\Gamma_1 \cup \Gamma_2|}
    = (4K)^{2K} \wrapp{
      \sum_{\Gamma \in S}
      \wrapp{\frac{|\tlambda|^2}{n}}^{|\Gamma|}
    }^2.
  \]
\end{proof}
\noindent We now proceed to bound the differences $|X_t - X_{t+1}|$.
\begin{proposition}
  \label{prop:x1-x2}
  We have $|X_1 - X_2| \le n^{-1} \cdot e^2 |\tlambda|^6 (1-|\tlambda|^2)^{-5} (2K)^{4K}$.
\end{proposition}
\begin{proof}
  Note that
  \[
    X_1 - X_2
    = \sum_{\Gamma_0\in P(n) \setminus \UC(n)}
    \wrapp{\frac{|\tlambda|^2}{n}}^{|\Gamma_0|}
    \E
    \wrapb{
      \exp\wrapp{
        \kappa(\sigma,\tau)
      }
      q_1(\Gamma_0,\sigma,\tau)
      q_2(\Gamma_0,\sigma,\tau)
    }
  \]
  By \cref{eq:kappa-bd}, we have deterministically $|\kappa(\sigma,\tau)| \le 2$.
  So,
  \begin{align}
    \nonumber
    |X_1 - X_2|
    &\le e^2 \sum_{\Gamma_0\in P(n) \setminus \UC(n)}
    \wrapp{\frac{|\tlambda|^2}{n}}^{|\Gamma_0|}
    \E
    \wrapb{
      |q_1(\Gamma_0,\sigma,\tau)|
      |q_2(\Gamma_0,\sigma,\tau)|
    } \\
    \label{eq:x1-x2-step1}
    &\le e^2 \sum_{\Gamma_0\in P(n) \setminus \UC(n)}
    \wrapp{\frac{|\tlambda|^2}{n}}^{|\Gamma_0|}
    \E
    [|q_1(\Gamma_0,\sigma,\tau)|^2]^{1/2}
    \E
    [|q_2(\Gamma_0,\sigma,\tau)|^2]^{1/2},
  \end{align}
  by the Cauchy-Schwarz inequality.
  Similarly to how each element of $\UC(n)$ can be described by its cycle counts, which are an element of $\Ct(K)$, we will describe each element of $P(n) \setminus \UC(n)$ by its cycle and path counts, which are an element of the following set:
  \[
    \Ct_1(K)
    = \wrapc{
      \hOmega = \wrapp{(A_3,A_4,\ldots),(B_1,B_2,\ldots)} :
      A_k, B_k \ge 0,
      \sum_{k=3}^\infty kA_k
      + \sum_{k=1}^\infty kB_k \le K,
      \sum_{k=1}^\infty B_k > 0
    }.
  \]
  For $\hOmega = ((A_3,A_4,\ldots),(B_1,B_2,\ldots)) \in \Ct_1(K)$, define
  \begin{align*}
    && |\hOmega| &= \sum_{k=3}^\infty kA_k + \sum_{k=1}^\infty kB_k, &
    V(\hOmega) &= \sum_{k=3}^\infty kA_k + \sum_{k=1}^\infty (k+1) B_k, &
    p(\hOmega) &= \sum_{k=1}^\infty B_k,
  \end{align*}
  and
  \[
    \sym(\hOmega) = \prod_{k=3}^\infty \wrapp{(2k)^{A_k} A_k!} \prod_{k=1}^\infty \wrapp{2^{B_k} B_k!}.
  \]
  Then define
  \[
    P(n,\hOmega) = \wrapc{
      \Gamma \in P(n) :
      \text{$\Gamma$ has $A_k$ cycles of length $k$ and $B_k$ paths with $k$ edges}
    }.
  \]
  For $\Gamma \in P(n,\hOmega)$, also let $p(\Gamma) = p(\hOmega)$ be the number of paths in $\Gamma$.

  Recall $\UC(n,\Gamma_0)$ defined in \cref{eq:uc-completion}.
  If $\Gamma_0$ contains $p$ paths with endpoints $v(1),\ldots,v(2p)$, each $\Lambda \in \UC(n,\Gamma_0)$ consists of $p$ paths whose endpoint set is also $\{v(1),\ldots,v(2p)\}$, and some cycles, which are all mutually vertex-disjoint and vertex-disjoint with $\Gamma_0$ (except at $v(1),\ldots,v(2p)$).
  Elements of $\UC(n,\Gamma_0)$ can be described by elements of $\Ct_2(K,p)$, which we now define.
  For $p\ge 0$, let
  \begin{align*}
    \PM(2p) &= \wrapc{\text{
      perfect matchings $\pi = \wrapc{
        \{\pi^1_1,\pi^1_2\},\ldots,
        \{\pi^p_1,\pi^p_2\}
      }$ on $\{1,\ldots,2p\}$}
    }, \\
    L(p) &= \wrapc{
      \vec \ell = (\ell_1,\ldots,\ell_p) \in Z_{\ge 1}^p
    },
  \end{align*}
  and
  \[
    \Ct_2(K,p) = \Ct(K) \times \PM(2p) \times L(p).
  \]
  For $\tOmega = (\Omega,\pi,\vec\ell) \in \Ct_2(K,p)$, with $\Omega = (A_3,A_4,\ldots)$, $\pi = \{\{\pi^1_1,\pi^1_2\},\ldots,\{\pi^p_1,\pi^p_2\}\}$, and $\vec\ell = (\ell_1,\ldots,\ell_p)$, let
  \begin{align*}
    |\tOmega| &= \sum_{k=3}^\infty kA_k + \sum_{k=1}^p \ell_k, &
    V(\tOmega) &= \sum_{k=3}^\infty kA_k + \sum_{k=1}^p (\ell_k-1) = |\tOmega| - p.
  \end{align*}
  Then, let
  \[
    \UC(n,\Gamma_0,\tOmega)
    = \wrapc{
      \Lambda \in \UC(n,\Gamma_0) :
      \begin{array}{l}
      \text{$\Lambda$ contains a path of length $\ell_k$ connecting} \\
      \text{$v(\pi^k_1)$ and $v(\pi^k_2)$, and $A_k$ cycles of length $k$}
      \end{array}
    }.
  \]
  Note that all $\Lambda \in \UC(n,\Gamma_0,\tOmega)$ have $|\tOmega|$ edges and $V(\tOmega)$ vertices.
  Returning to \cref{eq:x1-x2-step1}, we estimate for any $\Gamma_0 \in P(n)$
  \begin{align}
    \nonumber
    \E [|q_1(\Gamma_0,\sigma,\tau)|^2]
    &\le \wrapp{
      \sum_{\Lambda \in \UC(n,\Gamma_0)}
      \prod_{\{i,j\} \in \Lambda} \abs{
        \frac{\tlambda^2}{n} \sigma_i\sigma_j
        + \frac{|\tlambda|^2}{n} \tau_i\tau_j
      }
    }^2 \\
    \nonumber
    &\le (4K)^{2K} \wrapp{
      \sum_{\Lambda \in \UC(n,\Gamma_0)}
      \wrapp{\frac{|\tlambda|^2}{n}}^{|\Lambda|}
    }^2 \\
    \label{eq:x1-x2-step2}
    &= (4K)^{2K} \wrapp{
      \sum_{\tOmega \in \Ct_2(K,p)}
      |\UC(n,\Gamma_0,\tOmega)|
      \wrapp{\frac{|\tlambda|^2}{n}}^{|\tOmega|}
    }^2,
  \end{align}
  where the second inequality is by \cref{lem:error-term-bounding}.
  We can bound
  \[
    |\UC(n,\Gamma_0,\tOmega)|
    = \frac{(n-V(\Gamma_0)) \cdots (n-V(\Gamma_0)-V(\tOmega)+1)}{\sym(\Omega)}
    \le \frac{n^{V(\tOmega)}}{\sym(\Omega)},
  \]
  where we recall $\Omega$ is the first coordinate of $\tOmega$.
  Thus (since $V(\tOmega) = |\tOmega| - p$) the inner sum in \cref{eq:x1-x2-step2} is bounded by
  \begin{align}
    \label{eq:x1-x2-step3}
    n^{-p}
    \sum_{\tOmega \in \Ct_2(K,p)}
    \frac{|\tlambda|^{2|\tOmega|}}{\sym(\Omega)}
    \le n^{-p} |\PM(2p)|
    \sum_{\Omega \in \Ct(K)}
    \frac{|\tlambda|^{2|\Omega|}}{\sym(\Omega)}
    \cdot \sum_{\vec\ell \in L(p)}
    |\tlambda|^{2\|\vec\ell\|_1}.
  \end{align}
  Recall that $p = p(\Gamma_0)$ is the number of paths in $\Gamma_0$, which is clearly bounded by $K$.
  Thus,
  \[
    |\PM(2p)| = (2p-1)!! \le (2K-1)!! \le K^K,
  \]
  while the remaining two sums in \cref{eq:x1-x2-step3} are bounded by
  \begin{align}
    \label{eq:cycle-term-bound}
    \sum_{\Omega \in \Ct(K)}
    \frac{|\tlambda|^{2|\Omega|}}{\sym(\Omega)}
    \le \sum_{A_3,A_4,\ldots \ge 0}
    \prod_{k=3}^\infty
    \wrapp{\frac{|\tlambda|^{2k}}{2k}}^{A_k}\cdot \frac{1}{A_k!}
    = \exp\wrapp{\sum_{k=3}^\infty \frac{|\tlambda|^{2k}}{2k}}
    \le (1-|\tlambda|^2)^{-1/2}
  \end{align}
  and
  \[
    \sum_{\vec\ell \in L(p)}
    |\tlambda|^{2\|\vec\ell\|_1}
    = \wrapp{
      \sum_{\ell \ge 1}
      |\tlambda|^{2\ell}
    }^p
    = \wrapp{
      \frac{|\tlambda|^2}{1-|\tlambda|^2}
    }^p.
  \]
  Altogether the inner sum of \cref{eq:x1-x2-step2} is bounded by
  \[
    K^K
    (1-|\tlambda|^2)^{-1/2}
    \wrapp{
      \frac{|\tlambda|^2}{n(1-|\tlambda|^2)}
    }^p,
  \]
  which implies
  \[
    \E [|q_1(\Gamma_0,\sigma,\tau)|^2]
    \le (2K)^{4K}
    (1-|\tlambda|^2)^{-1}
    \wrapp{
      \frac{|\tlambda|^2}{n(1-|\tlambda|^2)}
    }^{2p}.
  \]
  By an analogous argument, we can bound $\E [|q_2(\Gamma_0,\sigma,\tau)|^2]$ by the same quantity.
  Plugging back into \cref{eq:x1-x2-step1} yields
  \begin{align}
    \nonumber
    |X_1-X_2|
    &\le
    e^2 (1-|\tlambda|^2)^{-1} (2K)^{4K}
    \sum_{\Gamma_0\in P(n) \setminus \UC(n)}
    \wrapp{\frac{|\tlambda|^2}{n}}^{|\Gamma_0|}
    \wrapp{
      \frac{|\tlambda|^2}{n(1-|\tlambda|^2)}
    }^{2p(\Gamma_0)} \\
    \label{eq:x1-x2-step4}
    &= e^2 (1-|\tlambda|^2)^{-1} (2K)^{4K}
    \sum_{\hOmega \in \Ct_1(K)}
    |P(n,\hOmega)|
    \wrapp{\frac{|\tlambda|^2}{n}}^{|\hOmega|}
    \wrapp{
      \frac{|\tlambda|^2}{n(1-|\tlambda|^2)}
    }^{2p(\hOmega)}
  \end{align}
  Note that
  \[
    |P(n,\hOmega)|
    = \frac{n(n-1)\cdots(n-V(\hOmega)+1)}{\sym(\hOmega)}
    \le \frac{n^{V(\hOmega)}}{\sym(\hOmega)}.
  \]
  Since $V(\hOmega) = |\hOmega| + p(\hOmega)$, the sum in \cref{eq:x1-x2-step4} is bounded by
  \begin{align*}
    &\sum_{\hOmega \in \Ct_1(K)}
    \frac{n^{p(\hOmega)} |\tlambda|^{2|\hOmega|}}{\sym(\hOmega)}
    \wrapp{
      \frac{|\tlambda|^2}{n(1-|\tlambda|^2)}
    }^{2p(\hOmega)} \\
    &\le \sum_{A_3,A_4,\ldots \ge 0}
    \prod_{k=3}^\infty
    \wrapp{\frac{|\tlambda|^{2k}}{2k}}^{A_k}
    \cdot \frac{1}{A_k!}
    \cdot \sum_{\substack{B_1,B_2,\ldots \ge 0 \\ \text{not all $0$}}}
    \prod_{k=1}^\infty
    \wrapp{\frac{|\tlambda|^{2k}}{2} \cdot \frac{|\tlambda|^4}{n(1-|\tlambda|^2)^2}}^{B_k} \cdot \frac{1}{B_k!}.
  \end{align*}
  The first sum is bounded by
  \[
    \exp\wrapp{
      \sum_{k=3}^\infty
      \frac{|\tlambda|^{2k}}{2k}
    }
    \le (1-|\tlambda|^2)^{-1/2},
  \]
  while the second is bounded by
  \[
    \exp\wrapp{
      \sum_{k=1}^\infty
      \frac{|\tlambda|^{2k}}{2} \cdot \frac{|\tlambda|^4}{n(1-|\tlambda|^2)^2}
    } - 1
    = \exp\wrapp{
      \frac{|\tlambda|^6}{2n(1-|\tlambda|^2)^3}
    } - 1
    \le \frac{|\tlambda|^6}{n(1-|\tlambda|^2)^3}.
  \]
  In conclusion,
  \[
    |X_1-X_2|
    \le e^2 (1-|\tlambda|^2)^{-1} (2K)^{4K} \cdot (1-|\tlambda|^2)^{-1/2} \cdot \frac{|\tlambda|^6}{n(1-|\tlambda|^2)^3}
    \le \frac{e^2 |\tlambda|^6 (2K)^{4K}}{n(1-|\tlambda|^2)^5}.
  \]
\end{proof}
\begin{lemma}
  \label{lem:Z-Gamma}
  For $\Gamma_0 \in \UC(n)$, define
  \begin{align}
    \label{eq:def-Z-Gamma}
    Z(\Gamma_0) = \sum_{\Omega \in \Ct(K-|\Gamma_0|)}
    \frac{(-1)^{c(\Omega)} U(q,\Omega)}{\sym(\Omega)}.
  \end{align}
  Then, for all $q\in [-1,1]$, $|Z(\Gamma_0)| \le 4^K (1-|\tlambda|^2)^{-1/2}$.
\end{lemma}
\begin{proof}
  Note that
  \[
    \norm{
      \frac{\tlambda^2}{n} \sigma\sigma^\top
      + \frac{|\tlambda|^2}{n} \tau\tau^\top
    }_{\op}
    \le 2|\tlambda|^2.
  \]
  Thus (crudely)
  \[
    |U(q,\Omega)|
    \le \prod_{k=3}^\infty
    \wrapp{2\cdot (2|\tlambda|^2)^k}^{A_k}
    \le (2|\tlambda|)^{2|\Omega|}
    \le 4^K \cdot |\tlambda|^{2|\Omega|}.
  \]
  So,
  \[
    |Z(\Gamma_0)|
    \le 4^K \cdot \sum_{\Omega \in \Ct(K-|\Gamma_0|)}
    \frac{|\tlambda|^{2|\Omega|}}{\sym(\Omega)}.
  \]
  Identically to \cref{eq:cycle-term-bound}, the inner sum is bounded by $(1-|\tlambda|^2)^{-1/2}$.
\end{proof}
\begin{proposition}
  \label{prop:x2-x3}
  We have $|X_2-X_3| \le n^{-1} \cdot 3e^2 (4K)^{2K} K^4 (1-|\tlambda|^2)^{-2}$.
\end{proposition}
\begin{proof}
  For any $\Gamma_0 \in \UC(n)$, we can write
  \begin{align*}
    q_1(\Gamma_0,\sigma,\tau)
    &=
    (-1)^{c(\Gamma_0)}
    \sum_{\Lambda \in \UC(n,\Gamma_0)}
    (-1)^{c(\Lambda)}
    \prod_{\{i,j\} \in \Lambda} \wrapp{
      \frac{\tlambda^2}{n} \sigma_i\sigma_j
      + \frac{|\tlambda|^2}{n} \tau_i\tau_j
    } \\
    &=
    (-1)^{c(\Gamma_0)}
    \sum_{\Omega \in \Ct(K-|\Gamma_0|)}
    (-1)^{c(\Omega)}
    \sum_{\Lambda \in \UC(n,\Gamma_0,\Omega)}
    \prod_{\{i,j\} \in \Lambda} \wrapp{
      \frac{\tlambda^2}{n} \sigma_i\sigma_j
      + \frac{|\tlambda|^2}{n} \tau_i\tau_j
    }.
  \end{align*}
  For $\Omega \in \Ct(K-|\Gamma_0|)$, let
  \[
    T(\Omega) = [n]^{|\Omega|},
  \]
  and name the entries of $\vec t \in T(\Omega)$ by $\vec t = (t^k_{\ell,m} : k\ge 3, 1\le \ell \le A_k, 1\le m\le k)$.
  For such $\vec t \in T(\Omega)$, define the multiset
  \[
    E(\vec t)
    = \bigcup_{\substack{k\ge 3 \\ 1\le \ell \le A_k}}
    \wrapc{
      \{t^k_{\ell,1},t^k_{\ell,2}\},
      \{t^k_{\ell,2},t^k_{\ell,3}\},\ldots
      \{t^k_{\ell,k},t^k_{\ell,1}\}
    }.
  \]
  These index unions of cycles with vertices in $[n]$, which may repeat vertices or contain self-loops.
  Let $T_+(\Omega) \subseteq J(\Omega)$ consist of the $\vec t$ with disjoint entries and no entries in common with vertices of $\Gamma_0$, and $T_-(\Omega) = T(\Omega) \setminus T_+(\Omega)$.
  Then note that
  \[
    \sum_{\Lambda \in \UC(n,\Gamma_0,\Omega)}
    \prod_{\{i,j\} \in \Lambda} \wrapp{
      \frac{\tlambda^2}{n} \sigma_i\sigma_j
      + \frac{|\tlambda|^2}{n} \tau_i\tau_j
    }
    = \frac{1}{\sym(\Omega)} \sum_{\vec t \in T_+(\Omega)}
    \prod_{\{i,j\} \in E(\vec t)} \wrapp{
      \frac{\tlambda^2}{n} \sigma_i\sigma_j
      + \frac{|\tlambda|^2}{n} \tau_i\tau_j
    },
  \]
  while
  \[
    U(q,\Omega)
    = \sum_{\vec t \in T(\Omega)}
    \prod_{\{i,j\} \in E(\vec t)} \wrapp{
      \frac{\tlambda^2}{n} \sigma_i\sigma_j
      + \frac{|\tlambda|^2}{n} \tau_i\tau_j
    }.
  \]
  Define
  \begin{align*}
    \Delta_1(\Omega) &= \sum_{\vec t \in T_-(\Omega)}
    \prod_{\{i,j\} \in E(\vec t)} \wrapp{
      \frac{\tlambda^2}{n} \sigma_i\sigma_j
      + \frac{|\tlambda|^2}{n} \tau_i\tau_j
    }, &
    W_1(\Gamma_0) &= \sum_{\Omega \in \Ct(K-|\Gamma_0|)}
    \frac{(-1)^{c(\Omega)} \Delta_1(\Omega)}{\sym(\Omega)}.
  \end{align*}
  Then,
  \[
    \sum_{\Lambda \in \UC(n,\Gamma_0,\Omega)}
    \prod_{\{i,j\} \in \Lambda} \wrapp{
      \frac{\tlambda^2}{n} \sigma_i\sigma_j
      + \frac{|\tlambda|^2}{n} \tau_i\tau_j
    }
    = \frac{1}{\sym(\Omega)} (U(q,\Omega) - \Delta_1(\Omega)).
  \]
  So, for $Z(\Gamma_0)$ defined in \cref{eq:def-Z-Gamma},
  \[
    q_1(\Gamma_0,\sigma,\tau)
    =
    (-1)^{c(\Gamma_0)}
    \sum_{\Omega \in \Ct(K-|\Gamma_0|)}
    \frac{(-1)^{c(\Omega)}}{\sym(\Omega)}
    (U(q,\Omega) - \Delta_1(\Omega))
    = (-1)^{c(\Gamma_0)} (Z(\Gamma_0) - W_1(\Gamma_0)),
  \]
  Analogously, define
  \begin{align*}
    \Delta_2(\Omega) &= \sum_{\vec t \in T_-(\Omega)}
    \prod_{\{i,j\} \in E(\vec t)} \wrapp{
      \frac{|\tlambda|^2}{n} \sigma_i\sigma_j
      + \frac{\bar\tlambda^2}{n} \tau_i\tau_j
    }, &
    W_2(\Gamma_0) &= \sum_{\Omega \in \Ct(K-|\Gamma_0|)}
    \frac{(-1)^{c(\Omega)} \Delta_2(\Omega)}{\sym(\Omega)}.
  \end{align*}
  An identical argument shows
  \[
    q_2(\Gamma_0,\sigma,\tau)
    =
    (-1)^{c(\Gamma_0)} (\overline{Z(\Gamma_0)} - W_2(\Gamma_0)).
  \]
  Thus,
  \begin{align}
    \nonumber
    &|X_2-X_3| \\
    \nonumber
    &\le \sum_{\Gamma_0\in \UC(n)}
    \wrapp{\frac{|\tlambda|^2}{n}}^{|\Gamma_0|}
    \E
    \wrapb{
      \exp\wrapp{
        \kappa(\sigma,\tau)
      }
      |(Z(\Gamma_0)-W_1(\Gamma_0))(\overline{Z(\Gamma_0)}-W_2(\Gamma_0)) - Z(\Gamma_0)\overline{Z(\Gamma_0)}|
    } \\
    \nonumber
    &\le e^2 \sum_{\Gamma_0\in \UC(n)}
    \wrapp{\frac{|\tlambda|^2}{n}}^{|\Gamma_0|}
    \E
    \wrapb{
      |Z(\Gamma_0)||W_2(\Gamma_0)| + |Z(\Gamma_0)||W_1(\Gamma_0)| + |W_1(\Gamma_0)||W_2(\Gamma_0)|
    } \\
    \nonumber
    &\le e^2 \sum_{\Gamma_0\in \UC(n)}
    \wrapp{\frac{|\tlambda|^2}{n}}^{|\Gamma_0|}
    \Big[
      |Z(\Gamma_0)|\E[|W_2(\Gamma_0)|^2]^{1/2}
      + |Z(\Gamma_0)| \E[|W_1(\Gamma_0)|^2]^{1/2} \\
      \nonumber
      &\qquad\qquad\qquad\qquad\qquad
      + \E[|W_1(\Gamma_0)|^2]^{1/2} \E[|W_2(\Gamma_0)|^2]^{1/2}
    \Big] \\
    \label{eq:x2-x3-step1}
    &= e^2 \sum_{\Gamma_0\in \UC(n)}
    \wrapp{\frac{|\tlambda|^2}{n}}^{|\Gamma_0|}
    \wrapb{
      2|Z(\Gamma_0)| \E[|W_1(\Gamma_0)|^2]^{1/2}
      + \E[|W_1(\Gamma_0)|^2]
    }.
  \end{align}
  Here, we use that $Z(\Gamma_0)$ depends on $(\sigma,\tau)$ only through $q = \langle \sigma,\tau \rangle / n$, and thus does not depend on the realization of $(\sigma,\tau) \sim \rho_2(q)$.
  The final equality is by symmetry.
  We then estimate
  \begin{align*}
    \E[|W_1(\Gamma_0)|^2]
    &\le \sum_{\Omega_1,\Omega_2 \in \Ct(K-|\Gamma_0|)}
    \frac{\E[|\Delta_1(\Omega_1)||\Delta_1(\Omega_2)|]}{\sym(\Omega_1)\sym(\Omega_2)} \\
    &\le \sum_{\Omega_1,\Omega_2 \in \Ct(K-|\Gamma_0|)}
    \frac{\E[|\Delta_1(\Omega_1)|^2]^{1/2} \E[|\Delta_1(\Omega_2)|^2]^{1/2}}{\sym(\Omega_1)\sym(\Omega_2)}
    = \wrapp{
      \sum_{\Omega \in \Ct(K-|\Gamma_0|)}
      \frac{\E[|\Delta_1(\Omega)|^2]^{1/2}}{\sym(\Omega)}
    }^2.
  \end{align*}
  Furthermore, by \cref{lem:error-term-bounding},
  \[
    \E[|\Delta_1(\Omega)|^2]
    \le (4K)^{2K} \wrapp{
      \sum_{\vec t \in T_-(\Omega)}
      \wrapp{\frac{|\tlambda|^2}{n}}^{|\Omega|}
    }^2
    = (4K)^{2K} |T_-(\Omega)|^2 \wrapp{\frac{|\tlambda|^2}{n}}^{2|\Omega|}.
  \]
  However, we have
  \begin{align*}
    |T_-(\Omega)|
    = |T(\Omega)| - |T_+(\Omega)|
    &= n^{|\Omega|} - (n - |\Gamma_0|)\cdots (n - |\Gamma_0|-|\Omega|+1) \\
    &\le n^{|\Omega|-1} \sum_{k=1}^{|\Gamma_0|+|\Omega|-1} k
    \le K^2 n^{|\Omega|-1},
  \end{align*}
  where we use that $|\Gamma_0|+|\Omega| \le K$.
  Thus,
  \[
    \E[|\Delta_1(\Omega)|^2]
    \le n^{-2} (4K)^{2K} K^4 |\tlambda|^{4|\Omega|},
  \]
  which implies
  \[
    \E[|W_1(\Gamma_0)|^2]
    \le n^{-2} (4K)^{2K} K^4 \wrapp{
      \sum_{\Omega \in \Ct(K-|\Gamma_0|)}
      \frac{|\tlambda|^{2|\Omega|}}{\sym(\Omega)}
    }^2.
  \]
  Identically to \cref{eq:cycle-term-bound}, the inner sum is bounded by $(1-|\tlambda|^2)^{-1/2}$.
  So,
  \[
    \E[|W_1(\Gamma_0)|^2]
    \le n^{-2} (4K)^{2K} K^4 (1-|\tlambda|^2)^{-1}.
  \]
  Combining with \cref{lem:Z-Gamma} and plugging into \cref{eq:x2-x3-step1} yields
  \begin{align}
    \nonumber
    |X_2-X_3|
    &\le e^2 \wrapp{
      2n^{-1} (16K)^{K} K^2 (1-|\tlambda|^2)^{-1}
      + n^{-2} (4K)^{2K} K^4 (1-|\tlambda|^2)^{-1}
    }
    \sum_{\Gamma_0\in \UC(n)}
    \wrapp{\frac{|\tlambda|^2}{n}}^{|\Gamma_0|} \\
    \label{eq:x2-x3-step2}
    &\le 3e^2 n^{-1} (4K)^{2K} K^4 (1-|\tlambda|^2)^{-1}
    \sum_{\Gamma_0\in \UC(n)}
    \wrapp{\frac{|\tlambda|^2}{n}}^{|\Gamma_0|}.
  \end{align}
  This final sum is bounded by
  \[
    \sum_{\Omega \in \Ct(K)}
    |\UC(n,\Omega)|
    \wrapp{\frac{|\tlambda|^2}{n}}^{|\Omega|}
    \le \sum_{\Omega \in \Ct(K)}
    \frac{|\tlambda|^{2|\Omega|}}{\sym(\Omega)}
    \le (1-|\tlambda|^2)^{-1/2},
  \]
  where the last estimate is identical to \cref{eq:cycle-term-bound}.
  The result follows.
\end{proof}
\begin{lemma}
  \label{lem:kappa-estimate}
  We have $|\E[\exp(\kappa(\sigma,\tau))] - 1| \le 4n^{-1} \log^4 n$ for sufficiently large $n$.
\end{lemma}

\begin{proof}
  In the Ising setting this is trivial from \cref{eq:kappa-bd}, so assume $\sigma,\tau$ are on the sphere.
  For any fixed $i\in [n]$, it is standard that $\mathbb{P}(|\sigma_i| \ge \log n) \le e^{-\Omega(\log^2 n)}$.
  Let $\mathcal{E}$ be the event that
  \[
    \max_{1\le i\le n} \max(|\sigma_i|,|\tau_i|) \le \log n.
  \]
  By a union bound, $\mathbb{P}(\mathcal{E}^c) \le e^{-\Omega(\log^2 n)}$.
  We estimate
  \begin{align*}
    |\E[\exp(\kappa(\sigma,\tau))] - 1|
    &\le \E[|\exp(\kappa(\sigma,\tau))- 1|] \\
    &= \E[\indic\{\mathcal{E}\}|\exp(\kappa(\sigma,\tau))- 1|]
    + \E[\indic\{\mathcal{E}^c\}|\exp(\kappa(\sigma,\tau))- 1|].
  \end{align*}
  On $\mathcal{E}$, we have
  \[
    \kappa(\sigma,\tau)
    \stackrel{\eqref{eq:kappa-bd}}{\le}
    \frac{1}{n^2}(\|\sigma\|_4^4 + \|\tau\|_4^4)
    \le 2n^{-1} \log^4 n.
  \]
  So,
  \[
    \E[\indic\{\mathcal{E}\}|\exp(\kappa(\sigma,\tau))- 1|] \le 3n^{-1} \log^4 n.
  \]
  Furthermore, since $|\kappa(\sigma,\tau)|$ is bounded deterministically by $2$,
  \[
    \E[\indic\{\mathcal{E}^c\}|\exp(\kappa(\sigma,\tau))- 1|]
    \le (e^2+1) \mathbb{P}(\mathcal{E}^c)
    \le e^{-\Omega(\log^2 n)}
    \le n^{-1} \log^4 n.
  \]
\end{proof}
\begin{proposition}
  \label{prop:x4}
  We have $|X_4| \le 2^{4K} (1-|\tlambda|^2)^{-2}$.
\end{proposition}
\begin{proof}
  For $Z(\Gamma_0)$ defined in \cref{eq:def-Z-Gamma},
  \[
    X_4 = \sum_{\Gamma_0 \in \UC(n)}
    \wrapp{\frac{|\tlambda|^2}{n}}^{|\Gamma_0|}
    |Z(\Gamma_0)|^2.
  \]
  By \cref{lem:Z-Gamma},
  \[
    |X_4| \le 2^{4K} (1-|\tlambda|^2)^{-1}
    \sum_{\Gamma_0 \in \UC(n)}
    \wrapp{\frac{|\tlambda|^2}{n}}^{|\Gamma_0|}.
  \]
  The last term is bounded by $(1-|\tlambda|^2)^{-1/2} \le (1-|\tlambda|^2)^{-1}$ identically to \cref{eq:cycle-term-bound}.
\end{proof}
\begin{proposition}
  \label{prop:x3-x4}
  We have $|X_3 - X_4| \le n^{-1} \log^4 n \cdot 2^{4K+2} (1-|\tlambda|^2)^{-2}$.
\end{proposition}
\begin{proof}
  Immediate from \cref{lem:kappa-estimate} and \cref{prop:x4}.
\end{proof}
At this point we are ready to prove the part of \cref{prop:planted-cluster-expansion-moment} pertaining to general overlaps $q$.
\begin{proof}[Proof of \cref{eq:planted-cluster-expansion-moment-not-orthogonal} of \cref{prop:planted-cluster-expansion-moment}]
  We can estimate
  \[
    |X_1|
    \le |X_1-X_2|
    + |X_2-X_3|
    + |X_3-X_4|
    + |X_4|.
  \]
  These are bounded by \cref{prop:x1-x2,prop:x2-x3,prop:x3-x4,prop:x4}.
  In particular, the bound on $|X_4|$ from \cref{prop:x4} gives the main contribution, while the remaining terms are of lower order.
  So,
  \[
    |X_1| \le 2(1-|\tlambda|^2)^{-2} \cdot 2^{4K}.
  \]
  Recall that $X_1$ is the combinatorial sum in \cref{eq:val-step2}
  Thus,
  \begin{align*}
    |\Val(q,\lambda)|
    &\le
    |1-\lambda^2|^{-1}
    \exp\wrapp{
      - \re(\lambda^2)
      - \frac{1}{2} \re(\lambda^4)
      - \frac{1}{2} |\lambda|^2
      - \frac{1}{4} |\lambda|^4
      - \re(\lambda^2)|\lambda|^2 q^2
      + O(n^{-1})
    } \\
    &\qquad \times
    2(1-|\tlambda|^2)^{-2} \cdot 2^{4K}.
  \end{align*}
  Recall from \cref{eq:def-tchi} that $|\tlambda - \lambda| = O(n^{-1})$.
  Since $\lambda \in \overline{\D(0,1-\frac{\varepsilon}{2})}$, the last display is bounded by $C(\varepsilon) \cdot 2^{4K}$ for some $C(\varepsilon)$ depending only on $\varepsilon$.
\end{proof}
\begin{proposition}
  \label{prop:x4-x5}
  We have $|X_4-X_5| \le n^{-1} \cdot 2^{4K} K^2 (1-|\tlambda|^2)^{-2}$.
\end{proposition}
\begin{proof}
  Let $Z(\Gamma_0)$ is defined in \cref{eq:def-Z-Gamma}.
  Note that this quantity depends on $\Gamma_0$ only through $|\Gamma_0|$, so we may denote it $Z(|\Gamma_0|)$, i.e.
  \[
    Z(k) = \sum_{\Omega \in \Ct(K-k)}
    \frac{(-1)^{c(\Omega)} U(q,\Omega)}{\sym(\Omega)}.
  \]
  We can write
  \begin{align*}
    X_4 = \sum_{\Gamma_0\in \UC(n)}
    \wrapp{\frac{|\tlambda|^2}{n}}^{|\Gamma_0|}
    |Z(|\Gamma_0|)|^2
    &= \sum_{\Omega_0\in \Ct(K)}
    \frac{|\UC(n,\Omega_0)|}{n^{|\Omega_0|}}
    |\tlambda|^{2|\Omega_0|}
    |Z(|\Omega_0|)|^2 \\
    X_5 &= \sum_{\Omega_0\in \Ct(K)}
    \frac{|\tlambda|^{2|\Omega_0|}}{\sym(\Omega_0)}
    |Z(|\Omega_0|)|^2,
  \end{align*}
  Note that for any $\Omega_0 \in \Ct(K)$,
  \[
    |\UC(n,\Omega_0)|
    = \frac{n(n-1)\cdots(n-|\Omega_0|+1)}{\sym(\Omega_0)}.
  \]
  So,
  \[
    \abs{\frac{|\UC(n,\Omega_0)|}{n^{|\Omega_0|}} - \frac{1}{\sym(\Omega_0)}}
    \le \frac{(1 + \cdots + (|\Omega_0|-1))}{n\sym(\Omega_0)}
    \le \frac{K^2}{n\sym(\Omega_0)}.
  \]
  Thus,
  \begin{align*}
    |X_4-X_5|
    &\le \frac{K^2}{n}
    \sum_{\Omega_0\in \Ct(K)}
    \frac{|\tlambda|^{2|\Omega_0|}}{\sym(\Omega_0)}
    |Z(|\Omega_0|)|^2
    \stackrel{Lem.~\ref{lem:Z-Gamma}}{\le}
    \frac{2^{4K} K^2 (1-|\tlambda|^2)^{-1}}{n}
    \sum_{\Omega_0\in \Ct(K)}
    \frac{|\tlambda|^{2|\Omega_0|}}{\sym(\Omega_0)} \\
    &\!\!\stackrel{\eqref{eq:cycle-term-bound}}{\le}
    \frac{2^{4K} K^2 (1-|\tlambda|^2)^{-2}}{n}.
  \end{align*}
\end{proof}
\begin{proposition}
  \label{prop:x5-x6}
  If $|q| \le n^{-1/2 + \alpha}$, we have $|X_5-X_6| \le 3n^{-1+2\alpha} \cdot 2^{8K} (1-|\tlambda|^2)^{-2}$.
\end{proposition}
\begin{proof}
  For $k\ge 3$, let $E_k = \tlambda^{2k} + |\tlambda|^{2k}$ and, for $\sigma,\tau \in \mathcal{S}_n$ with $\langle \sigma,\tau \rangle / n = q$,
  \begin{align*}
    \Delta_k &= \tr\wrapp{\wrapp{
      \frac{\tlambda^2}{n} \sigma\sigma^\top
      + \frac{|\tlambda|^2}{n} \tau\tau^\top
    }^k} - E_k \\
    &= \tr\wrapp{
      \wrapp{
        \frac{\tlambda^2}{n} \sigma\sigma^\top
        + \frac{|\tlambda|^2}{n} \tau\tau^\top
      }^k
      - \wrapp{
        \frac{\tlambda^2}{n} \sigma\sigma^\top
      }^k
      - \wrapp{
        \frac{|\tlambda|^2}{n} \tau\tau^\top
      }^k
    }.
  \end{align*}
  This can be expanded as the sum of $2^k-2$ terms, each of which is bounded in absolute value by
  \[
    |\tlambda|^{2k} \cdot q^2 \le |\tlambda|^{2k} \cdot n^{-1+2\alpha}.
  \]
  Thus, for all $k\le K$,
  \[
    |\Delta_k| \le |\tlambda|^{2k} \cdot n^{-1+2\alpha} \cdot (2^k-2) \le n^{-1+2\alpha} \cdot 2^K.
  \]
  The estimate $|\tlambda|\le 1$ also implies
  \[
    |E_k| \le 2.
  \]
  Then, for $\Omega = (A_3,A_4,\ldots) \in \Ct(K)$,
  \begin{align*}
    U(q,\Omega) &= \prod_{k=3}^\infty (E_k + \Delta_k)^{A_k}, &
    V(\Omega) &= \prod_{k=3}^\infty (E_k)^{A_k}.
  \end{align*}
  Let $T = \sum_{k=3}^\infty A_k$, and note $T \le |\Omega| \le K$.
  Then, $U(q,\Omega) - V(\Omega)$ consists of $2^T-1$ terms, each of which is bounded in absolute value by
  \[
    |\tlambda|^{2|\Omega|} \cdot
    2^{T-1} \cdot (n^{-1+2\alpha} \cdot 2^K)
    \le |\tlambda|^{2|\Omega|} \cdot
    n^{-1+2\alpha} \cdot 2^{2K}.
  \]
  Thus
  \[
    |U(q,\Omega) - V(\Omega)|
    \le (2^T-1) \cdot |\tlambda|^{2|\Omega|} \cdot n^{-1+2\alpha} \cdot 2^{2K}
    \le |\tlambda|^{2|\Omega|} \cdot n^{-1+2\alpha} \cdot 2^{3K}.
  \]
  Let
  \[
    Y(k)
    = \sum_{\Omega \in \Ct(K - k)}
    \frac{(-1)^{c(\Omega)} V(\Omega)}{\sym(\Omega)}.
  \]
  Note that for $k\le K$,
  \begin{align*}
    |Y(k) - Z(k)|
    &\le \sum_{\Omega \in \Ct(K - k)}
    \frac{|U(q,\Omega) - V(\Omega)|}{\sym(\Omega)} \\
    &\le n^{-1+2\alpha} \cdot 2^{3K}
    \sum_{\Omega \in \Ct(K - k)}
    \frac{|\tlambda|^{2|\Omega|}}{\sym(\Omega)}
    \stackrel{\eqref{eq:cycle-term-bound}}{\le}
    n^{-1+2\alpha} \cdot 2^{3K} (1-|\tlambda|^2)^{-1/2}.
  \end{align*}
  Thus,
  \begin{align*}
    \abs{
      |Z(k)|^2 - |Y(k)|^2
    }
    &= \abs{
      Z(k)\overline{Z(k)} -
      \wrapp{Z(k) + (Y(k) - Z(k))}
      \wrapp{\overline{Z(k)} + (\overline{Y(k) - Z(k)})}
    } \\
    &\le 2|Z(k)||Y(k)-Z(k)|
    + |Y(k)-Z(k)|^2 \\
    &\!\!\!\!\!\!\stackrel{Lem.~\ref{lem:Z-Gamma}}{\le}
    2n^{-1+2\alpha} \cdot 2^{5K+1} (1-|\tlambda|^2)^{-1}
    + n^{-2} \log^4 n \cdot 2^{8K} (1-|\tlambda|^2)^{-1} \\
    &\le
    3n^{-1+2\alpha} \cdot 2^{8K} (1-|\tlambda|^2)^{-1}.
  \end{align*}
  Finally,
  \begin{align*}
    X_5 &= \sum_{\Omega_0\in \Ct(K)}
    \frac{|\tlambda|^{2|\Omega_0|}}{\sym(\Omega_0)}
    |Z(|\Omega_0|)|^2, &
    X_6 &= \sum_{\Omega_0\in \Ct(K)}
    \frac{|\tlambda|^{2|\Omega_0|}}{\sym(\Omega_0)}
    |Y(|\Omega_0|)|^2,
  \end{align*}
  so
  \[
    |X_5-X_6|
    \le 3n^{-1+2\alpha} \cdot 2^{8K} (1-|\tlambda|^2)^{-1} \cdot
    \sum_{\Omega_0\in \Ct(K)}
    \frac{|\tlambda|^{2|\Omega_0|}}{\sym(\Omega_0)}
    \stackrel{\eqref{eq:cycle-term-bound}}{\le}
    3n^{-1+2\alpha} \cdot 2^{8K} (1-|\tlambda|^2)^{-2}.
  \]
\end{proof}
\begin{proposition}
  \label{prop:x6-x7}
  We have $|X_6-X_7| = o_n(1)$.
\end{proposition}
\begin{proof}
  Let
  \[
    \widetilde X_7
    =
    \sum_{\Omega_0,\Omega_1,\Omega_2 \in \Ct(\infty)}
    \frac{|\tlambda|^{2|\Omega_0|}}{\sym(\Omega_0)}
    \cdot
    \frac{(-1)^{c(\Omega_1)} V(\Omega_1)}{\sym(\Omega_1)}
    \cdot
    \frac{(-1)^{c(\Omega_2)} \overline{V(\Omega_2)}}{\sym(\Omega_2)}.
  \]
  Since $X_6$ consists of a sub-sum of this infinite sum, which grows to include all terms as $K\to\infty$, it suffices to show this infinite sum equals $X_7$ and converges absolutely.
  The absolute sum of $\widetilde X_7$ is
  \[
    \sum_{\Omega_0,\Omega_1,\Omega_2 \in \Ct(\infty)}
    \frac{|\tlambda|^{2|\Omega_0|}}{\sym(\Omega_0)}
    \cdot
    \frac{|V(\Omega_1)|}{\sym(\Omega_1)}
    \cdot
    \frac{|V(\Omega_2)|}{\sym(\Omega_2)}
    = \wrapp{
      \sum_{\Omega \in \Ct(\infty)}
      \frac{|\tlambda|^{2|\Omega|}}{\sym(\Omega)}
    }
    \wrapp{
      \sum_{\Omega \in \Ct(\infty)}
      \frac{|V(\Omega)|}{\sym(\Omega)}
    }^2.
  \]
  The first factor is bounded by $(1-|\tlambda|^2)^{-1/2}$ by \cref{eq:cycle-term-bound}.
  For the second,
  \[
    \sum_{\Omega \in \Ct(\infty)}
    \frac{|V(\Omega)|}{\sym(\Omega)}
    = \sum_{A_3,A_4,\ldots \ge 0}
    \prod_{k=3}^\infty \wrapp{\frac{\abs{\tlambda^{2k} + |\tlambda|^{2k}}}{2k}}^{A_k} \cdot \frac{1}{A_k!}
    \le \exp\wrapp{
      \sum_{k=3}^\infty
      \frac{|\tlambda|^{2k}}{k}
    }
    \le (1-|\tlambda|^2)^{-1}.
  \]
  Thus $\widetilde X_7$ is absolutely summable.
  We can now evaluate
  \[
    \widetilde X_7
    = \wrapp{
      \sum_{\Omega \in \Ct(\infty)}
      \frac{|\tlambda|^{2|\Omega|}}{\sym(\Omega)}
    }
    \abs{
      \sum_{\Omega \in \Ct(\infty)}
      \frac{(-1)^{c(\Omega)} V(\Omega)}{\sym(\Omega)}
    }^2.
  \]
  The first sum evaluates as
  \begin{align*}
    \sum_{\Omega \in \Ct(\infty)}
    \frac{|\tlambda|^{2|\Omega|}}{\sym(\Omega)}
    &= \sum_{A_3,A_4,\ldots \ge 0}
    \prod_{k=3}^\infty \wrapp{\frac{|\tlambda|^{2k}}{2k}}^{A_k} \cdot \frac{1}{A_k!}
    = \exp\wrapp{
      \sum_{k=3}^\infty
      \frac{|\tlambda|^{2k}}{2k}
    } \\
    &= (1-|\tlambda|^2)^{-1/2} \exp\wrapp{
      - \frac{|\tlambda|^2}{2}
      - \frac{|\tlambda|^4}{4}
    }.
  \end{align*}
  The second sum evaluates as
  \begin{align*}
    \sum_{\Omega \in \Ct(\infty)}
    \frac{(-1)^{c(\Omega)} V(\Omega)}{\sym(\Omega)}
    &= \sum_{A_3,A_4,\ldots \ge 0}
    \prod_{k=3}^\infty \wrapp{-\frac{\tlambda^{2k} + |\tlambda|^{2k}}{2k}}^{A_k} \cdot \frac{1}{A_k!}
    = \exp\wrapp{
      - \sum_{k=3}^\infty
      \frac{\tlambda^{2k} + |\tlambda|^{2k}}{2k}
    } \\
    &= (1-\tlambda^2)^{1/2} (1-|\tlambda|^2)^{1/2}
    \exp\wrapp{
      \frac{\tlambda^2 + |\tlambda|^2}{2}
      + \frac{\tlambda^4 + |\tlambda|^4}{4}
    }.
  \end{align*}
  Hence
  \[
    \abs{
      \sum_{\Omega \in \Ct(\infty)}
      \frac{(-1)^{c(\Omega)} V(\Omega)}{\sym(\Omega)}
    }^2
    = |1-\tlambda^2| (1-|\tlambda|^2)
    \exp\wrapp{
      \re(\tlambda^2) + |\tlambda|^2
      + \frac{\re(\tlambda^4) + |\tlambda|^4}{2}
    },
  \]
  and combining shows
  \[
    \widetilde X_7
    = |1-\tlambda^2| (1-|\tlambda|^2)^{1/2}
    \exp\wrapp{
      \re(\tlambda^2) + \frac{1}{2} |\tlambda|^2
      + \frac{1}{2} \re(\tlambda^4)
      + \frac{1}{4} |\tlambda|^4
    }
    = X_7.
  \]
\end{proof}
\begin{proof}[Proof of \cref{eq:planted-cluster-expansion-moment-orthogonal} of \cref{prop:planted-cluster-expansion-moment}]
  Since $K = \lfloor \log \log n \rfloor$, the bounds in \cref{prop:x1-x2,prop:x2-x3,prop:x3-x4,prop:x4-x5,prop:x5-x6} are all $o_n(1)$.
  Combining with \cref{prop:x6-x7}, we have
  \[
    |X_1 - X_7|
    \le \sum_{i=1}^6 |X_i-X_{i+1}|
    = o_n(1).
  \]
  Since \cref{eq:def-tchi} implies $|\tlambda - \lambda| = O(n^{-1})$, we further have
  \[
    \abs{
      X_1 - |1-\lambda^2| (1-|\lambda|^2)^{1/2} \exp\wrapp{
        \re(\lambda^2) + \frac{1}{2} |\lambda|^2 + \frac{1}{2} \re(\lambda^4) + \frac{1}{4} |\lambda|^4
      }
    } = o_n(1).
  \]
  Plugging this estimate in for the combinatorial sum in \cref{eq:val-step2} yields the result.
\end{proof}

\subsection{Bound on Spherical Moments}
\label{subsec:sph-mmt}

Finally, we present the deferred proof of \cref{fac:sph-mmt}.
We will use the following continuous version of the rearrangement inequality.
\begin{fact}
    \label{fac:rearrangement}
    Suppose $Z$ is a real-valued random variable and $f : \R \rightarrow \R$ is nonincreasing and measurable, with $\E [|Z|], \E[|f(Z)|], \E[|Zf(Z)|]< \infty$.
    Then $\E[Zf(Z)] \le \E[Z]\E[f(Z)]$.
\end{fact}
\begin{proof}
    Let $Z'$ be an independent copy of $Z$.
    Since $f$ is nonincreasing, we have for all $x,y \in \R$
    \[
        (x-y)(f(x)-f(y)) \le 0.
    \]
    Thus,
    \begin{align*}
        0 \ge \E[(Z-Z')(f(Z)-f(Z'))]
        &= \E[Zf(Z)] - \E[Zf(Z')] - \E[Z'f(Z)] + \E[Z'f(Z')] \\
        &= 2\wrapp{\E[Zf(Z)] - \E[Z]\E[f(Z)]}.
    \end{align*}
\end{proof}

\begin{lemma}[Even moments of spherical marginals are negatively correlated]
  \label{lem:sph-neg-corr}
  For all $k,\ell\ge 0$ and $\sigma \sim \unif(\mathcal{S}_n)$,
  \[
    \E[\sigma_1^{2k}\sigma_2^{2\ell}]
    \le \E[\sigma_1^{2k}]\E[\sigma_2^{2\ell}]
  \]
\end{lemma}
\begin{proof}
  Note that $f(x) = \E[\sigma_2^{2\ell} | \sigma_1^{2k} = x]$ is nonincreasing.
  The result follows from \cref{fac:rearrangement}:
    \[
        \E[\sigma_1^{2k}\sigma_2^{2\ell}]
        = \E[\sigma_1^{2k} f(\sigma_1^{2k})]
        \le \E[\sigma_1^{2k}] \E[f(\sigma_1^{2k})]
        = \E[\sigma_1^{2k}]\E[\sigma_2^{2\ell}].
    \]
\end{proof}

\begin{proof}[Proof of \cref{fac:sph-mmt}]
  Let $Z_1 \sim \mathcal{N}(0,1)$.
  It suffices to prove that
  \begin{align}
    \label{eq:sph-mmt-goal}
    \E[\sigma_1^{2k}] \le \E[Z_1^{2k}].
  \end{align}
  We proceed by induction on $k$, with the base case $k=1$ trivial.
  Suppose we have proved \cref{eq:sph-mmt-goal} up to $k-1$.
  By rotational invariance,
  \[
    \E[\sigma_1^{2k}]
    = \E\wrapb{
      \wrapp{\frac{\sigma_1+\sigma_2}{\sqrt{2}}}^{2k}
    }
    = 2^{-k}
    \sum_{\ell=0}^k
    \binom{2k}{2\ell}
    \E[\sigma_1^{2\ell}\sigma_2^{2(k-\ell)}].
  \]
  Hence,
  \[
    (1-2^{-k+1}) \E[\sigma_1^{2k}]
    = 2^{-k}
    \sum_{\ell=1}^{k-1}
    \binom{2k}{2\ell}
    \E[\sigma_1^{2\ell}\sigma_2^{2(k-\ell)}].
  \]
  An identical argument shows that for $Z_2 \sim \mathcal{N}(0,1)$ independent of $Z_1$,
  \[
    (1-2^{-k+1}) \E[Z_1^{2k}]
    = 2^{-k}
    \sum_{\ell=1}^{k-1}
    \binom{2k}{2\ell}
    \E[Z_1^{2\ell}Z_2^{2(k-\ell)}].
  \]
  However, by \cref{lem:sph-neg-corr} and the inductive hypothesis, for all $1\le \ell \le k-1$,
  \[
    \E[\sigma_1^{2\ell}\sigma_2^{2(k-\ell)}]
    \le \E[\sigma_1^{2\ell}]\E[\sigma_2^{2(k-\ell)}]
    \le \E[Z_1^{2\ell}]\E[Z_2^{2(k-\ell)}]
    = \E[Z_1^{2\ell} Z_2^{2(k-\ell)}].
  \]
  This completes the induction.
\end{proof}

\section{Algorithms}\label{sec:algos}

In this section, we describe our algorithms for estimating $Z_{\bm{G}}\wrapp{\beta^{}}$ for a given $\beta^{} \in \D(0,(1 - \varepsilon) \cdot \betasecond)$. Following Barvinok (see e.g. \cite{Bar16book}), the basic idea behind \cref{thm:alg-pspin-glass} is to truncate the Taylor series of $\beta \mapsto \log Z_{\bm{G}}(\beta)$. With high probability, this series converges rapidly on the entirety of $\D(0,(1-\varepsilon) \cdot \betasecond)$ by the zero-freeness guarantee from \cref{thm:zeros-pspin-glass}.

More formally, define
\begin{align*}
    f(z) \defeq \log Z_{\bm{G}}(z)
\end{align*}
and note that with high probability this is a well-defined holomorphic function on  $\D(0,(1 - \varepsilon/2) \cdot \betasecond)$ with a convergent Taylor series expansion:\footnote{Note that since $Z_{\bm{G}}(0) = 1$, one can always take a small enough radius $r_{\mathsf{triv}}$ (possibly depending on the realization $\bm{G}$) such that $f$ is well-defined and holomorphic on $\D(0,r_{\mathsf{triv}})$. This, in particular, implies that we can always speak about the Taylor series expansion of $f$ in a neighborhood of $z = 0$.}
\begin{align}\label{eq:taylor}
    f(z) = \sum_{k=0}^{\infty} \frac{f^{(k)}(0)}{k!} \cdot z^{k}.
\end{align}
Note that $f(\beta) = \log Z_{\bm{G}}(\beta)$, the value that we wish to approximate up to $\eta$-additive error, and $f(0) = 0$. The key idea behind Barvinok's interpolation method is that if $Z_{\bm{G}}(z)$ is nonzero on $\D(0,(1-\varepsilon/2)\betasecond)$, then we can truncate the series at suitable depth and plug in $z = \beta$ to obtain the desired approximation.

To get a good approximation we need to know where we should truncate the series and we need an algorithm to compute the coefficients of the series. These ingredients will be addressed in the next two subsections. These will then be combined to give a formal proof of \cref{thm:alg-pspin-glass} in the final subsection.
\subsection{Truncating the Series}
The next proposition tells us the depth at which we should truncate. We state it as a generic tool, and only later in the proof of \cref{thm:alg-pspin-glass} do we specialize to the case where $F(z) = Z_{\bm{G}}(z)$.
\begin{remark}
In the case where $Z_{\bm{G}}(z)$ is a polynomial in $z$, it suffices to truncate the series at logarithmic depth in terms of the degree of that polynomial, as explained in~\cites[{Section 2.1}]{Bar16book}. However, we are dealing with an analytic function, and therefore need a different argument to bound the tail of the series \cref{eq:taylor}.
\end{remark}
\begin{proposition}\label{prop:guarantee Taylor}
Let $F : \Omega \to \C$ be an analytic function, where $\Omega \supseteq \overline{\D(0,R)}$ for some $R > 0$. Suppose $F$ is nonzero on $\D(0,R)$, and that there exists $L > 1$ such that $\left|\frac{F(z)}{F(0)}\right| \leq L $ uniformly for all $z \in \overline{\D(0,R)}$. Then for any $0 < r < R$, $z \in \D(0,r)$ and $\eta > 0$, there exists $C$ depending only on $r/R$ such that for $m = C\log\wrapp{\frac{\pi + \log L}{\eta}}$ and $f(z) \defeq \log F(z)$,
\begin{align*}
    \abs{\log F(z) - \sum_{k=0}^{m} \frac{f^{(k)}(0)}{k!} \cdot z^{k}} \leq \eta.
\end{align*}
\end{proposition}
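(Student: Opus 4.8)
The plan is to combine the classical bound on the Taylor coefficients of an analytic function in terms of (the supremum of) its real part — a variant of the Borel--Carath\'eodory inequality — with an elementary geometric estimate for the tail of the series \cref{eq:taylor}.

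\textbf{Step 1: a genuine branch of the logarithm.} Fix $0 < r < R$ and choose an intermediate radius, say $\rho \defeq \sqrt{rR}$, so that $r < \rho < R$ and the ratio $\rho/r = \sqrt{R/r}$ is determined by $r/R$ alone. Since $\overline{\D(0,\rho)} \subseteq \D(0,R)$, the set $\D(0,R)$ is simply connected, and $F$ is nonzero there, we may fix an analytic branch of $f = \log F$ on a neighbourhood of $\overline{\D(0,\rho)}$; it satisfies $\re f(z) = \log\abs{F(z)}$. Put $g(z) \defeq f(z) - f(0)$, so that $g(0) = 0$, and by hypothesis $\re g(z) = \log\abs{F(z)/F(0)} \le \log L =: M$ for all $\abs{z} \le \rho$, where $M > 0$ since $L > 1$.

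\textbf{Step 2: the coefficient estimate.} Write $g(z) = \sum_{k \ge 1} a_k z^k$ on $\D(0,\rho)$, so that $a_k = f^{(k)}(0)/k!$ for $k \ge 1$. Comparing Fourier coefficients along the circle $\abs{z} = \rho$ gives, for each $k \ge 1$, that $\frac{1}{2\pi}\int_{-\pi}^{\pi} \re g(\rho e^{i\theta})\, e^{-ik\theta}\, d\theta = \tfrac12 a_k \rho^k$, while the mean value property yields $\frac{1}{2\pi}\int_{-\pi}^{\pi} \re g(\rho e^{i\theta})\, d\theta = \re g(0) = 0$. Subtracting the second identity (times $M$) and using that the integrand $M - \re g(\rho e^{i\theta})$ is nonnegative, we obtain
\begin{align*}
    \tfrac12 \abs{a_k}\rho^k = \abs{\frac{1}{2\pi}\int_{-\pi}^{\pi}\wrapp{M - \re g(\rho e^{i\theta})} e^{-ik\theta}\, d\theta} \le \frac{1}{2\pi}\int_{-\pi}^{\pi}\wrapp{M - \re g(\rho e^{i\theta})}\, d\theta = M,
\end{align*}
so that $\abs{f^{(k)}(0)/k!} = \abs{a_k} \le 2\log L / \rho^k$ for every $k \ge 1$. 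Note the imaginary part of $f$ (hence the choice of branch) never enters, as only $k \ge 1$ matters.

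\textbf{Step 3: tail of the series and choice of $m$.} For $z \in \D(0,r)$ the constant terms cancel, so $\log F(z) - \sum_{k=0}^{m}\frac{f^{(k)}(0)}{k!}z^k = \sum_{k > m} a_k z^k$, and by Step 2,
\begin{align*}
    \abs{\sum_{k>m} a_k z^k} \le \sum_{k>m} \frac{2\log L}{\rho^k}\, r^k = \frac{2\log L}{1 - r/\rho}\cdot (r/\rho)^{m+1}.
\end{align*}
This is at most $\eta$ once $(m+1)\log(\rho/r) \ge \log\wrapp{\tfrac{2\log L}{\eta(1 - r/\rho)}}$. Since $\log(\rho/r)$ and $\log\tfrac{1}{1-r/\rho}$ depend only on $r/R$, and $\log(2\log L) \le 2\log(\pi + \log L)$ together with $\log(1/\eta) \le \log\tfrac{\pi+\log L}{\eta}$ (using $\pi + \log L > 1$, which is the sole reason for including the additive $\pi$), it suffices to take $m = C\log\tfrac{\pi + \log L}{\eta}$ for a suitable $C = C(r/R)$; this gives the claim. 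The only points requiring care — neither of them deep — are (i) ensuring $\log F$ is honestly analytic, which forces us to retreat from $\abs{z} = R$ (where $F$ could vanish) to the radius $\rho$ strictly inside $\D(0,R)$, and (ii) arranging that the truncation depth depends on $r/R$ and not on $L$, $\eta$ (beyond the explicit logarithmic factor), or $n$, which is why $\rho$ is taken to be a fixed function of $r$ and $R$.
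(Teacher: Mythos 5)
Your proof is correct, and it takes a genuinely different route from the paper's. The paper bounds $|f^{(k)}(0)|$ via Cauchy's integral formula on $\partial\D(0,R)$ and then controls $\frac{1}{2\pi}\int_0^{2\pi}|f(Re^{i\theta})|\,d\theta$ by splitting $\log|F|$ into its positive and negative parts, invoking Jensen's formula, and appealing to the pointwise bound $|f(z)| \le |\log|F(z)|| + 2\pi$. You instead retreat to the intermediate radius $\rho = \sqrt{rR}$ and run the Borel--Carath\'eodory argument: the Fourier-coefficient identity $\tfrac12 a_k \rho^k = \frac{1}{2\pi}\int \re g(\rho e^{i\theta}) e^{-ik\theta}\,d\theta$ together with the mean value property of $\re g$ and the nonnegativity of $M - \re g$ yields $|f^{(k)}(0)/k!| \le 2\log L / \rho^k$ from the \emph{one-sided} bound $\re(f - f(0)) \le \log L$ alone. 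This is cleaner, and it also sidesteps a genuine gap in the paper's argument: the inequality $|f(z)| \le |\log|F(z)|| + 2\pi$ does not hold for the analytic branch of $\log F$ in general — taking $F(z) = e^{nz}$ on $\D(0,1)$ (nonvanishing, with $L = e^n$) gives $|\im f(i)| = n \gg 2\pi = |\log|F(i)|| + 2\pi$, and one checks directly that the paper's intermediate inequality $\frac{1}{2\pi}\int|f| \le 2\pi + 2\log L$ also fails here, even though the proposition's conclusion is vacuously true for this $F$ once $m \ge 1$. The principal branch applied pointwise would satisfy that inequality, but it is not analytic, so it cannot be the $f$ whose Taylor coefficients are being computed, nor the object the truncation is meant to approximate. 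Your Borel--Carath\'eodory route never needs any control on $\im f$, only on $\re f$, and so it genuinely proves the statement. The only quantitative differences are cosmetic: your geometric ratio is $r/\rho = \sqrt{r/R}$ rather than $r/R$, and your coefficient constant is $2\log L$ rather than $2(\pi + \log L)$; both yield $m = C\log\frac{\pi+\log L}{\eta}$ with $C = C(r/R)$ after absorbing a constant depending on $r/R$, using that $\pi + \log L > e$ ensures $\log\frac{\pi+\log L}{\eta} \ge 1$ whenever $\eta \le 1$.
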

\begin{proof}
Note that by absorbing the $k=0$ term in the Taylor expansion into $\log F(z)$, we may assume that $\log F(0) = 0$. By Cauchy's Integral Formula,
\begin{align}\label{eq:cauchy}
\begin{split}
    \abs{f^{(k)}(0)} &= \abs{\frac{k!}{2\pi i} \int_{\partial \D(0,R)} \frac{f(\omega)}{\omega^{k+1}} \,d\omega} = \abs{\frac{k!}{2\pi i} \int_{0}^{2\pi} \frac{f\wrapp{Re^{i\theta}}}{R^{k+1} e^{i\theta(k+1)}} \cdot iRe^{i\theta} \,d\theta} \\
    &\leq \frac{k!}{2\pi R^{k}} \int_{0}^{2\pi} \abs{f\wrapp{Re^{i\theta}}} \,d\theta.
\end{split}
\end{align}
To bound the integral, we decompose the interval $[0,2\pi]$ into two parts, both of which are measurable:
\begin{align*}
    A &\defeq \wrapc{\theta \in [0,2\pi] : \abs{F\wrapp{Re^{i\theta}}} < 1},
    \\
    B &\defeq \wrapc{\theta \in [0,2\pi] : \abs{F\wrapp{Re^{i\theta}}} \geq 1}.
\end{align*}
Recalling $f(z) = \log F(z)$ and the fact that $\abs{f(z)} \leq \abs{\log \abs{F(z)}} + 2\pi$, we see that
\begin{align*}
   \frac{1}{2\pi} \int_{0}^{2\pi} \abs{f\wrapp{Re^{i\theta}}} \,d\theta \leq -\frac{1}{2\pi}\int_{A} \log \abs{F\wrapp{Re^{i\theta}}} \,d\theta + \frac{1}{2\pi}\int_{B} \log \abs{F\wrapp{Re^{i\theta}}} \,d\theta + 2\pi.
\end{align*}
Now, Jensen's Formula (see \cref{thm:jensen-zeros}) tells us that
\begin{align*}
    -\frac{1}{2\pi}\int_{A} \log \abs{F\wrapp{Re^{i\theta}}} \,d\theta &= -\frac{1}{2\pi}\int_{0}^{2\pi} \log \abs{F\wrapp{Re^{i\theta}}} \,d\theta + \frac{1}{2\pi}\int_{B} \log \abs{F\wrapp{Re^{i\theta}}} \,d\theta \\
    &= -\log \abs{F(0)} + \frac{1}{2\pi}\int_{B} \log \abs{F\wrapp{Re^{i\theta}}} \,d\theta \\
    &\leq \log L. \tag{Boundedness of $F$ and $\log F(0) = 0$}
\end{align*}
This implies that
\begin{align*}
    \frac{1}{2\pi} \int_{0}^{2\pi} \abs{f\wrapp{Re^{i\theta}}} \,d\theta \leq 2\pi + 2 \log L
\end{align*}
from which it follows by~\cref{eq:cauchy} that
\begin{align*}
    \frac{\abs{f^{(k)}(0)}}{k!} \cdot \abs{z}^{k} \leq 2(\pi + \log L) \cdot \wrapp{\frac{r}{R}}^{k}.
\end{align*}
Hence, by truncating the series~\cref{eq:taylor} at depth $m$, the resulting error will be bounded by
\begin{align*}
    \sum_{k=m+1}^{\infty} \frac{\abs{f^{(k)}(0)}}{k!} \cdot \abs{z}^{k} \leq 2(\pi + \log L) \cdot \sum_{k=m+1}^{\infty} \wrapp{\frac{r}{R}}^{k} \leq 2 (\pi + \log L) \cdot \wrapp{\frac{r}{R}}^{m+1} \cdot \frac{1}{1 - \frac{r}{R}}.
\end{align*}
To obtain the desired approximation guarantee, it suffices to take $m=C\log\wrapp{\frac{\pi + \log L}{\eta}}$ for some constant $C$ depending only on $r/R$.
\end{proof}

\subsection{Computing the Taylor Coefficients}
To turn \cref{prop:guarantee Taylor} into an algorithm, we need to compute the coefficients $f^{(k)}(0)$ for any realization $\bm{G}$, where recall that $f(z) = \log Z_{\bm{G}}(z)$.
The next lemma tells us how to do so given access to the moments of $\mathcal{H}_{\bm{G}}(\sigma)$, where $\sigma$ is drawn from the Gibbs measure with inverse temperature $\beta = 0$ (i.e. $\sigma \sim \varrho$).
\begin{lemma}\label{lem:compute coefficients of f}
Given access to the first $m+1$ moments of $\mathcal{H}_{\bm{G}}(\sigma)$ where $\sigma \sim \varrho$, we can compute the first $m+1$ Taylor coefficients of $f(z) = \log Z_{\bm{G}}(z)$ at $0$ deterministically in $O(m^{2})$-time.
\end{lemma}
\begin{proof}
Since $Z_{\bm{G}}^{(j)}(0) = \E_{\sigma \sim \varrho}\wrapb{\mathcal{H}_{\bm{G}}(\sigma)^{j}}$, this lemma follows from standard tools; see e.g. \cite[Section 2.2.2]{Bar16book} (where we note that \cite[Section 2.2.2]{Bar16book} applies to the setting of functions $g$ such that $g(0)\neq 0$ that are analytic near $0$.)
\end{proof}

Since $\varrho$ is a very simple distribution ($\varrho = \mathsf{Unif}(\mathcal{C}_{n})$ or $\varrho = \mathsf{Unif}(\mathcal{S}_{n})$), we can compute the moments of $\mathcal{H}_{\bm{G}}(\sigma)$ for $\sigma \sim \varrho$ in a direct way using brute force enumeration.
\begin{lemma}\label{lem:compute-moments}
The first $m+1$ moments of $\mathcal{H}_{\bm{G}}(\sigma)$, where $\sigma \sim \varrho$ (and $\varrho = \mathsf{Unif}(\mathcal{C}_{n})$ or $\varrho = \mathsf{Unif}(\mathcal{S}_{n})$), can be computed deterministically in $n^{O(m)}$-time.
\end{lemma}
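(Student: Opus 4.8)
The plan is to compute, for each $k=0,1,\dots,m$, the moment $\E_{\sigma\sim\varrho}\wrapb{\mathcal{H}_{\bm{G}}(\sigma)^{k}}$ by expanding $\mathcal{H}_{\bm{G}}(\sigma)^{k}$ as an explicit polynomial in the coordinates $\sigma_{1},\dots,\sigma_{n}$ and integrating it against $\varrho$ term by term, using linearity of expectation together with a closed form for the monomial moments of $\varrho$.

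First I would record from \cref{eq:pspin-hamiltonian} that $\mathcal{H}_{\bm{G}}(\sigma)$ is a polynomial in $\sigma$ of degree $p_{\max}$, presented as a sum of $\sum_{p=2}^{p_{\max}} n^{p}=O(n^{p_{\max}})$ terms, each equal to the known real scalar $\upgamma_{p}\,n^{-(p-1)/2}\,\bm{G}_{i_{1},\dots,i_{p}}$ times the monomial $\prod_{j}\sigma_{i_{j}}$. Raising this to the $k$-th power and expanding (for $k\leq m$) produces a polynomial of degree at most $p_{\max}m=O(m)$. Since $p_{\max}$ is a fixed constant of the model, one can enumerate directly over the $O(n^{p_{\max}})^{k}=n^{O(m)}$ tuples obtained by choosing one term from each of the $k$ factors; each such tuple contributes a known coefficient (a product of $k$ scalars) times a monomial $\sigma^{a}\defeq\prod_{i}\sigma_{i}^{a_{i}}$ with $\abs{a}\defeq\sum_{i}a_{i}\leq p_{\max}m$. (Alternatively, one may build $\mathcal{H}_{\bm{G}}(\sigma)^{k}$ by $k-1$ successive polynomial multiplications, collecting like terms after each step; since the number of monomials of degree at most $d$ in $n$ variables is $\binom{n+d}{d}\leq(n+d)^{d}=n^{O(d)}$, intermediate polynomials never exceed $n^{O(m)}$ monomials.) As is customary in this setting, we treat arithmetic operations on reals as unit cost; one checks all bit-lengths stay polynomially bounded.

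The remaining ingredient is a formula for $\E_{\sigma\sim\varrho}\wrapb{\sigma^{a}}$ with $\abs{a}\leq p_{\max}m$. When $\varrho=\mathsf{Unif}(\mathcal{C}_{n})$, independence of the coordinates gives $\E_{\sigma\sim\varrho}\wrapb{\sigma^{a}}=1$ if every $a_{i}$ is even and $0$ otherwise. When $\varrho=\mathsf{Unif}(\mathcal{S}_{n})$, I would write $\sigma=\sqrt{n}\,g/\norm{g}_{2}$ with $g\sim\mathcal{N}(0,I_{n})$ and use the independence of the direction $g/\norm{g}_{2}$ from the magnitude $\norm{g}_{2}$, together with $\norm{g}_{2}^{2}\sim\chi_{n}^{2}$ and $\E\wrapb{\norm{g}_{2}^{2j}}=n(n+2)\cdots(n+2j-2)$, to obtain
\[
\E_{\sigma\sim\mathsf{Unif}(\mathcal{S}_{n})}\wrapb{\sigma^{a}}=\begin{dcases} n^{\abs{a}/2}\cdot\frac{\prod_{i=1}^{n}(a_{i}-1)!!}{n(n+2)\cdots(n+\abs{a}-2)}, & \text{if all }a_{i}\text{ are even},\\ 0, & \text{otherwise.}\end{dcases}
\]
In both cases this is a rational number computable in $\poly(n,m)$-time.

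Finally, for each $k$ I would iterate over the $n^{O(m)}$ monomials $\sigma^{a}$ arising in the expansion of $\mathcal{H}_{\bm{G}}(\sigma)^{k}$, multiply each coefficient by $\E_{\sigma\sim\varrho}\wrapb{\sigma^{a}}$, and accumulate; by linearity this equals $\E_{\sigma\sim\varrho}\wrapb{\mathcal{H}_{\bm{G}}(\sigma)^{k}}$. Ranging over $k=0,\dots,m$ costs $n^{O(m)}$ in total, which is the claimed bound. I do not anticipate a genuine obstacle, since every step is elementary; the only point requiring care is the bookkeeping that keeps the count of monomials (or index tuples) at $n^{O(m)}$ rather than letting it blow up, which is exactly where one uses that $p_{\max}$ is a fixed constant so that $\mathcal{H}_{\bm{G}}(\sigma)^{k}$ has degree $O(m)$, and where one must perform the exponentiation by enumeration over index tuples or by iterated multiplication-with-collection rather than by naively expanding a $k$-fold product without simplification.
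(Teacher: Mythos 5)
Your proof is correct and follows essentially the same approach as the paper: expand $\mathcal{H}_{\bm{G}}(\sigma)^{k}$ by multilinearity, integrate each resulting monomial against $\varrho$ via a closed-form moment formula (even/odd for the cube, Gamma-function/Gaussian for the sphere), and brute-force over $n^{O(k)}$ index tuples, using that $p_{\max}$ is constant. One small aside: your spherical moment formula, derived through the $\sigma = \sqrt{n}\,g/\|g\|_{2}$ decomposition with $\|g\|_{2}^{2}\sim\chi^{2}_{n}$, is correct and equivalent to Folland's Gamma-function expression, whereas \cref{eq:expectation-continuous-even} as printed in the paper carries a spurious factor of $n^{(3-n)/2}$ (one can check it does not reduce to $1$ when all exponents $\hat{\alpha}_{j}$ vanish); this is a harmless normalization typo that does not affect the lemma or its polynomial-time computability claim.
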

\begin{proof}
To expand the $k$th moment $\mathcal{H}_{\bm{G}}(\sigma)^{k}$, we set up some shorthand notation for convenience. For $2 \leq p \leq p_{\max}$ and $\alpha = (i_{1},\dots,i_{p}) \in [n]^{p}$, we write $\bm{G}_{\alpha} \defeq \bm{G}_{i_{1},\dots,i_{p}} \sim \mathcal{N}(0,1)$ and $\sigma^{\alpha} \defeq \prod_{j=1}^{p} \sigma_{i_{j}}$. We can then expand the $k$th moment using linearity of expectation as follows:
\begin{align}\label{eq:compute coefficients exp}
\begin{split}
     \E_{\sigma \sim \varrho}\wrapb{\mathcal{H}_{\bm{G}}(\sigma)^{k}} &= \sum_{\phi : [k] \to \{2,\dots,p_{\max}\}} \sum_{\substack{\alpha_{j} \in [n]^{\phi(j)} \\ \forall \, j=1,\dots,k}} \wrapp{\prod_{j=1}^{k} \frac{\upgamma_{\phi(j)}}{n^{\frac{\phi(j) - 1}{2}}} \cdot \bm{G}_{\alpha_{j}}} \cdot \E_{\sigma \sim \varrho}\wrapb{\prod_{j=1}^{k} \sigma^{\alpha_{j}}}.
\end{split}
\end{align}
For a given $\bm{\alpha} = (\alpha_{1},\dots,\alpha_{k})$ in the summation above, we have that $\prod_{j=1}^{k} \sigma^{\alpha_{j}}$ is a polynomial in entries of $\sigma$, i.e. it can be written as $\prod_{j=1}^{n} \sigma_{j}^{\hat{\alpha}_{j}}$ for a unique $\hat{\alpha} \in \mathbb{\Z}_{\geq0}^{n}$ depending on $\alpha$.
We will say that $\bm{\alpha}$ is \emph{even} if each $\hat{\alpha}_{j}$ is even and \emph{odd} otherwise.
Now, to compute the expectation in~\cref{eq:compute coefficients exp}, we distinguish between the discrete and continuous cases.

If $\varrho$ is the uniform measure on the hypercube $\mathcal{C}_{n}=\{\pm1\}^{n}$, then it follows by symmetry that
\begin{align}\label{eq:expectation discrete}
\E_{\sigma \sim \varrho}\wrapb{\prod_{j=1}^{k} \sigma^{\alpha_{j}}} = \begin{cases}
    1, &\quad\text{if } \bm{\alpha} \text{ is even} \\
    0, &\quad\text{if } \bm{\alpha} \text{ is odd}
\end{cases}.
\end{align}
If $\varrho$ is the uniform measure on the normalized sphere $\mathcal{S}_{n} = \wrapc{\sigma \in \R^{n} : \norm{\sigma}_{2}^{2} = n}$, then the above expectation is again $0$ when $\bm{\alpha}$ is odd. When $\bm{\alpha}$ is even, it is known that
\begin{align}\label{eq:expectation-continuous-even}
    \E_{\sigma \sim \varrho}\wrapb{\prod_{j=1}^{k} \sigma^{\alpha_{j}}} = n^{-\frac{n-1}{2} + \frac{1}{2}\sum_{j=1}^{n} \hat{\alpha}_{j}} \cdot \pi^{-n/2} \cdot \frac{2 \cdot \Gamma\wrapp{1 + \frac{n}{2}} \cdot \prod_{j=1}^{n} \Gamma\wrapp{\frac{1 + \hat{\alpha}_{j}}{2}}}{\Gamma\wrapp{\sum_{j=1}^{n} \frac{1 + \hat{\alpha}_{j}}{2}}},
\end{align}
where $\Gamma(\cdot)$ denotes the Gamma function. See~\cite{Fol01} for a nice explanation of this fact. All these intermediate quantities can be easily computed in polynomial time.

In either case, to compute the $k$th moment, we can simply enumerate all $\phi : [k] \to \{2,\dots,p_{\max}\}$ and all even $\bm{\alpha} \in [n]^{\phi(1)} \times \dotsb \times [n]^{\phi(k)}$, and compute \cref{eq:compute coefficients exp} via brute force. Since $p_{\max}$ is constant, this takes $n^{O(k)}$-time, completing the proof.
\end{proof}

\subsection{Proof of \texorpdfstring{\cref{thm:alg-pspin-glass}}{Theorem Algorithms}}
\begin{proof}[Proof of~\cref{thm:alg-pspin-glass}]
Assume $\beta\in \D(0,(1-\varepsilon) \cdot \betasecond)$, and set $F(z) = Z_{\bm{G}}(z)$, $r = (1 - \varepsilon) \cdot \betasecond$, $R = \wrapp{1 - \frac{\varepsilon}{2}} \cdot \betasecond$ and $m = C\log(C'n/\eta)$ where $C = \Theta(1/\varepsilon)$.

Applying \cref{prop:guarantee Taylor}, we wish to satisfy the assumptions that (with high probability), $Z_{\bm{G}}(z)$ is nonzero on $\D(0,R)$, and $\lvrv{\frac{Z_{\bm{G}}(z)}{Z_{\bm{G}}(0)}} \leq L = e^{O(n)}$ uniformly for all $z \in \overline{\D(0,R)}$. The first assumption follows from \cref{thm:zeros-pspin-glass} (with $\varepsilon$ replaced by $\varepsilon/2$), while the latter follows from \cref{prop:logZ-crude}.

This yields the first claim of \cref{thm:alg-pspin-glass}, i.e. that exponentiating the degree-$m$ Taylor approximation $\widehat{P}(\beta)$ of $\log Z_{\bm{G}}(\beta)$ yields a $e^{\pm\eta}$-multiplicative approximation to $Z_{\bm{G}}(\beta)$ with probability at least $1 - o_{n}(1)$. For the algorithmic part, note that we can compute $\widehat{P}(\beta)$ using \cref{lem:compute coefficients of f,lem:compute-moments} in time $n^{O(m)} = n^{O(\log(n/\eta))}$.

\end{proof}

\section{Discussion and Future Directions}\label{sec:conclusion}
We conclude this paper with some open problems. The first pertains to algorithms with better running time. 
\begin{openproblem}[Faster algorithms]\label{open:faster}
Design a genuine polynomial time algorithm for estimating the partition function $Z_{\bm{G}}(\beta)$ up to the second moment threshold.
\end{openproblem}
We note that for partition functions defined on bounded degree graphs one can design genuine polynomial time algorithms~\cite{PR17} based on appropriate zero-freeness. Thus far it is not clear how to improve the quasi-polynomial running time for mean field models. 

The next two open problems pertain to extending the range in which we obtain efficient and provably correct algorithms.

\begin{openproblem}[The Replica-Symmetric Phase]\label{open:RS}
Design an algorithm for estimating the partition function $Z_{\bm{G}}(\beta)$ in the entire replica-symmetric phase, as discussed in \cref{subsec:compare-thresholds}.
\end{openproblem}

\begin{openproblem}[External Fields]\label{open:fields}
For the Sherrington--Kirkpatrick model, design an algorithm for estimating the partition function in the presence of an external field (either in the direction $\allone$ or in a random direction $\bm{g} \sim \mathcal{N}(0,I)$), up to the \emph{de Almeida--Thouless line} \cite{dAT78}.
\end{openproblem}

While a direct application of the second moment method fails to resolve \cref{open:RS} and \cref{open:fields}, it is conceivable that a suitably truncated version of the argument still works, perhaps taking inspiration from e.g. \cite{Tal98, Bov06, HS24}.

Another natural problem is to establish a zero-free region for the partition function in terms of external fields, rather than the inverse temperature.
\begin{openproblem}[Lee--Yang Zeros]\label{open:LY}
Establish a zero-free region for the function
\begin{align*}
    Z_{\bm{G},\beta}(\lambda) \defeq \E_{\sigma \sim \varrho}\wrapb{\exp\wrapp{\beta \cdot \mathcal{H}_{\bm{G}}(\sigma) + \lambda \cdot \langle \sigma, \allone \rangle}},
\end{align*}
and its multivariate analog.
\end{openproblem}

{\sloppy
\printbibliography
}

\appendix

\section{Analytic Tools}
We collect here the analytic tools used in the proof of Lemma~\ref{lem:CW-integral}.

\begin{lemma}[Local Concavity; see e.g. Lemma B.11 in \cite{MSB21}]\label{lem:local-concave}
Let $\varphi : [-1,1] \to \R$ be twice continuously differentiable, and have a unique global maximizer $x^{*}$. Further assume that $x^{*}$ lies in the interior $(-1,1)$, and that $\varphi''\wrapp{x^{*}} = -\varepsilon$ for some $\varepsilon > 0$. Then there exists $\delta = \delta(\varepsilon) > 0$ such that both of the following hold:
\begin{enumerate}
    \item\label{item:strong-concave} $\varphi''(x) \leq - \frac{\varepsilon}{2}$ on $\mathcal{I}_{\delta} \defeq [x^{*} - \delta, x^{*} + \delta]$.
    \item\label{item:perturb} $\sup_{x \in [-1,1] \setminus\mathcal{I}_{s}} \varphi(x)$ is attained at one of the endpoints $x^{*} \pm s$ of $\mathcal{I}_{s}$ for every $0 \leq s \leq \delta$. In particular, for any $0 \leq s < t \leq \delta$,
    \begin{align*}
        \sup_{x \in [-1,1] \setminus\mathcal{I}_{t}} \varphi(x) \leq \sup_{x \in [-1,1] \setminus\mathcal{I}_{s}} \varphi(x) - C\varepsilon (t - s)^{2}
    \end{align*}
    for some $C > 0$, by $\frac{\varepsilon}{2}$-strong concavity of $\varphi$ in $\mathcal{I}_{\delta}$.
\end{enumerate}
\end{lemma}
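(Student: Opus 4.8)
The plan is to get \cref{item:strong-concave} essentially for free from continuity of $\varphi''$, and then to bootstrap it to \cref{item:perturb} using compactness of $[-1,1]$ together with the uniqueness of $x^*$. First I would fix the strong-concavity window: since $\varphi''$ is continuous and $\varphi''(x^*)=-\varepsilon<-\varepsilon/2$, there is $\delta_1>0$ with $\mathcal I_{\delta_1}=[x^*-\delta_1,x^*+\delta_1]\subseteq(-1,1)$ and $\varphi''\le-\varepsilon/2$ on $\mathcal I_{\delta_1}$, so any $\delta\le\delta_1$ already satisfies \cref{item:strong-concave}. Because $x^*$ is an interior global maximizer of a differentiable function, $\varphi'(x^*)=0$, so integrating $\varphi''\le-\varepsilon/2$ gives $\varphi'(x^*+v)\le-\tfrac\varepsilon2 v$ for $0\le v\le\delta_1$ (and symmetrically on the left), whence $\varphi$ is strictly decreasing on $[x^*,x^*+\delta_1]$ and strictly increasing on $[x^*-\delta_1,x^*]$. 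Integrating once more yields a quadratic gap: for $0\le s<t\le\delta_1$,
\[
\varphi(x^*+s)-\varphi(x^*+t)=\int_s^t\bigl(-\varphi'(x^*+v)\bigr)\,dv\ge\int_s^t\tfrac\varepsilon2 v\,dv=\tfrac\varepsilon4(t^2-s^2)\ge\tfrac\varepsilon4(t-s)^2,
\]
and likewise with $x^*-s,x^*-t$ replacing $x^*+s,x^*+t$.

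The heart of the argument is then shrinking $\delta$ to establish \cref{item:perturb}. I would set $M\defeq\sup_{x\in[-1,1]\setminus\mathcal I_{\delta_1}}\varphi(x)$; this supremum is attained on a compact set at some point $\ne x^*$, so $M<\varphi(x^*)$ by uniqueness of the maximizer. Using continuity of $\varphi$ at $x^*$ and the monotonicity just established, I would pick $\delta\in(0,\delta_1]$ so small that $\varphi(x^*-\delta),\varphi(x^*+\delta)\ge M$, which forces $\varphi(x^*\pm s)\ge M$ for every $0\le s\le\delta$. Now fix such an $s$ and take any $x>x^*+s$ (the case $x<x^*-s$ is symmetric): if $x\le x^*+\delta_1$ then $\varphi(x)\le\varphi(x^*+s)$ by monotonicity on $[x^*,x^*+\delta_1]$, while if $x>x^*+\delta_1$ then $x\notin\mathcal I_{\delta_1}$ and $\varphi(x)\le M\le\varphi(x^*+s)$. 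Hence $\sup_{x\in[-1,1]\setminus\mathcal I_s}\varphi(x)=\max\{\varphi(x^*-s),\varphi(x^*+s)\}$, i.e.\ the supremum is governed by the endpoints of $\mathcal I_s$. Combining this with the quadratic gap, for $0\le s<t\le\delta$ we get $\varphi(x^*\pm t)\le\varphi(x^*\pm s)-\tfrac\varepsilon4(t-s)^2\le\max\{\varphi(x^*-s),\varphi(x^*+s)\}-\tfrac\varepsilon4(t-s)^2$; taking the maximum over the two signs gives the claimed inequality with $C=\tfrac14$.

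The only real obstacle is this last shrinking step. Monotonicity of $\varphi$ near $x^*$ controls $x$ only inside $\mathcal I_{\delta_1}$, and for $x$ far from $x^*$ the uniqueness of the maximizer gives merely $\varphi(x)<\varphi(x^*)$, not the stronger bound $\varphi(x)\le\varphi(x^*\pm s)$ that \cref{item:perturb} demands. The resolution is the two-scale choice above: first freeze the concavity window $\delta_1$ (and hence the "outside" level $M<\varphi(x^*)$), then take $\delta$ small enough that the endpoint values $\varphi(x^*\pm\delta)$ still exceed $M$. Everything else reduces to a routine double integration of the second-derivative bound, so I expect no further difficulties.
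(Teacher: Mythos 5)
Your argument is correct, and since the paper merely cites Lemma B.11 of \cite{MSB21} for this statement rather than giving its own proof, there is nothing in the text to compare against directly. All the steps check out: continuity of $\varphi''$ gives the strong-concavity window $\mathcal I_{\delta_1}$; $\varphi'(x^*)=0$ together with $\varphi''\le-\varepsilon/2$ on $\mathcal I_{\delta_1}$ gives monotonicity and the quadratic gap $\varphi(x^*+s)-\varphi(x^*+t)\ge\tfrac\varepsilon4(t^2-s^2)\ge\tfrac\varepsilon4(t-s)^2$ for $0\le s<t\le\delta_1$; and the two-scale shrinking (first fix $\delta_1$, set $M\defeq\sup_{[-1,1]\setminus\mathcal I_{\delta_1}}\varphi<\varphi(x^*)$, then shrink $\delta$ so that $\varphi(x^*\pm\delta)\ge M$) correctly forces the exterior supremum to be attained at the endpoints $x^*\pm s$. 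The only cosmetic point worth tightening is the compactness remark: $[-1,1]\setminus\mathcal I_{\delta_1}$ itself is not closed, but by continuity of $\varphi$ the supremum equals that over its closure $[-1,x^*-\delta_1]\cup[x^*+\delta_1,1]$, a compact set omitting $x^*$, which is what you need to conclude $M<\varphi(x^*)$. With $C=\tfrac14$ the proof is complete.
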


\begin{lemma}[Stirling's Approximation for the Gamma Function]\label{lem:stirling-gamma}
For any $x > 0$, we have
\begin{align*}
    \Gamma(x) = \sqrt{\frac{2\pi}{x}} \cdot \wrapp{\frac{x}{e}}^{x} \cdot \wrapp{1 + O\wrapp{\frac{1}{x}}}.
\end{align*}
\end{lemma}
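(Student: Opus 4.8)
The plan is to read the statement asymptotically (the factor $1 + O(1/x)$ is meant as $x \to \infty$, which is the only regime the paper invokes it, with $x$ of order $n$) and to derive it from the Euler integral $\Gamma(x) = \int_{0}^{\infty} t^{x-1} e^{-t}\,dt$ by Laplace's method, keeping enough resolution in the remainder to obtain the \emph{multiplicative} $1 + O(1/x)$ rate rather than merely $1 + o(1)$.

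First I would rescale $t = xu$ to obtain $\Gamma(x) = x^{x} e^{-x}\int_{0}^{\infty} e^{x\phi(u)}\,u^{-1}\,du$ with $\phi(u) \defeq \log u - u + 1$; here $\phi$ is strictly concave, vanishes together with its derivative at its unique maximizer $u = 1$, and $\phi''(1) = -1$. Substituting $u = 1 + v/\sqrt{x}$ gives $\Gamma(x) = (x/e)^{x} x^{-1/2}\int_{-\sqrt{x}}^{\infty} e^{x\phi(1 + v/\sqrt{x})}\,(1 + v/\sqrt{x})^{-1}\,dv$, so that $\Gamma(x) = \sqrt{2\pi/x}\,(x/e)^{x}\cdot\tfrac{1}{\sqrt{2\pi}}\,I(x)$ where $I(x)$ is the displayed $v$-integral; it remains to prove $I(x) = \sqrt{2\pi}\,\bigl(1 + O(1/x)\bigr)$.

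Next I would split the range into a central window, say $|v| \le x^{1/8}$, and its complement. On the complement I would use that $\phi(u) < 0$ strictly away from $u = 1$ with $\phi(u) \to -\infty$ as $u \to 0^{+}, \infty$, together with the quadratic lower bound $\phi(1+s) \le -c s^{2}$ for $|s|$ small, to bound the tail of the integral by $\exp(-\Omega(x^{1/4}))$, which is negligible against the target $O(1/x)$. On the central window, where $s = v/\sqrt{x} \to 0$ uniformly and the cubic correction is $o(1)$, I would Taylor-expand with explicit remainder: $x\phi(1 + v/\sqrt{x}) = -\tfrac{v^{2}}{2} + \tfrac{v^{3}}{3\sqrt{x}} - \tfrac{v^{4}}{4x} + O(v^{5}/x^{3/2})$ and $(1 + v/\sqrt{x})^{-1} = 1 - \tfrac{v}{\sqrt{x}} + \tfrac{v^{2}}{x} + O(v^{3}/x^{3/2})$, so the integrand equals $e^{-v^{2}/2}\bigl(1 + \tfrac{P_{1}(v)}{\sqrt{x}} + \tfrac{P_{2}(v)}{x} + \mathrm{err}(v,x)\bigr)$ for explicit polynomials with $P_{1}$ \emph{odd}. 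Integrating against $e^{-v^{2}/2}$ over $\R$, the odd $P_{1}$ term vanishes, so the $1/\sqrt{x}$ contribution drops out; the $P_{2}/x$ term gives an honest $O(1/x)$ correction; and extending the window back to $\R$ together with the remainder bound costs only a further $O(1/x)$ (in fact exponentially small from the Gaussian tails). This yields $I(x) = \sqrt{2\pi}(1 + O(1/x))$ and hence the lemma.

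\textbf{Main obstacle.} The one genuinely delicate point is the bookkeeping that forces the Taylor remainders to be controlled to order $x^{-3/2}$ on a window growing with $x$: this is exactly what makes the odd $1/\sqrt{x}$ term provably cancel and keeps it from contaminating the error, separating the claimed $1 + O(1/x)$ from the easy $1 + o(1)$. A clean shortcut that sidesteps this entirely is to quote Binet's formula $\log\Gamma(x) = (x - \tfrac12)\log x - x + \tfrac12\log(2\pi) + \mu(x)$ with $0 < \mu(x) < \tfrac{1}{12x}$; exponentiating gives $\Gamma(x) = \sqrt{2\pi/x}\,(x/e)^{x}\,e^{\mu(x)}$ and $e^{\mu(x)} = 1 + O(1/x)$ at once. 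Since the paper only uses this lemma as an off-the-shelf analytic tool, citing a standard reference for either route is entirely appropriate.
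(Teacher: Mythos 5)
The paper states this lemma in the ``Analytic Tools'' subsection without proof, treating it as standard background (as is customary for Stirling-type asymptotics). So there is no proof in the paper to compare against; what can be said is that your derivation is correct. The Laplace-method route is sound: after the substitutions $t = xu$, $u = 1 + v/\sqrt{x}$, you correctly reduce the claim to $I(x) = \sqrt{2\pi}\bigl(1 + O(1/x)\bigr)$, and the key subtlety you flag — that the $O(1/\sqrt{x})$ contribution is governed by the \emph{odd} polynomial $P_{1}(v) = -v + v^{3}/3$, which integrates to zero against the Gaussian weight, leaving only the $P_{2}(v)/x$ correction (whose Gaussian average is in fact $\frac{1}{12}$, recovering the first term of the Stirling series) — is precisely what upgrades the bound from $1 + o(1)$ to $1 + O(1/x)$. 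The tail estimate via strict negativity of $\phi$ away from $u = 1$ and a quadratic lower bound near $u = 1$ is the right way to discard the complement of the central window at a sub-polynomial cost. Your alternative via Binet's formula $\log\Gamma(x) = (x - \tfrac{1}{2})\log x - x + \tfrac{1}{2}\log(2\pi) + \mu(x)$ with $0 < \mu(x) < \tfrac{1}{12x}$ is the cleaner route if one is content to cite a reference, and is presumably what the authors had in mind by stating the lemma without proof; either justification is acceptable here.
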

Now recall that when $\varrho = \mathsf{Unif}(\mathcal{C}_{n})$, $\ov(\varrho)$ is given by the probability mass function
\begin{align*}
    \ov(\varrho)(m) = \frac{1}{2^{n}} \cdot \binom{n}{\frac{1 + m}{2} \cdot n} = \frac{1}{2^{n}} \cdot \frac{\Gamma(n+1)}{\Gamma\wrapp{\frac{1+m}{2} \cdot n + 1} \cdot \Gamma\wrapp{\frac{1-m}{2} \cdot n + 1}}
\end{align*}
for all $m \in \wrapc{-1, -1+\frac{2}{n},\dots,1-\frac{2}{n},1}$. Similarly, when $\varrho = \mathsf{Unif}(\mathcal{S}_{n})$, $\ov(\varrho)$ is given by the probability density function
\begin{align*}
    \ov(\varrho)(m) = \frac{1}{\sqrt{\pi}} \cdot \frac{\Gamma\wrapp{\frac{n}{2}}}{\Gamma\wrapp{\frac{n-1}{2}}} \cdot (1 - m^{2})^{\frac{n-3}{2}}
\end{align*}
for all $m \in [-1,1]$. Note that in the latter case, by spherical symmetry and rescaling, $\ov(\varrho)$ may be equivalently viewed as describing the law of the first coordinate of a uniformly random unit vector in $\R^{n}$. The following inequalities can be derived from \cref{lem:stirling-gamma} in a straightforward manner.
\begin{corollary}[Slice Measure Approximations]\label{cor:stirling}
Let $\varrho = \mathsf{Unif}(\mathcal{C}_{n})$ (resp. $\varrho = \mathsf{Unif}(\mathcal{S}_{n})$), and let $h: [-1,1] \to \R$ be defined as in \cref{def:2nd-moment-regime}. Then for all $m \in \supp\wrapp{\ov(\varrho)}$, we have the following pointwise estimate:
\begin{align}\label{eq:stirling-polyn-loss}
    \frac{1}{\poly(n)} \cdot \exp(n \cdot h(m)) \leq \ov(\varrho)(m) \leq \poly(n) \cdot \exp(n \cdot h(m)).
\end{align}
Moreover, if $m$ is bounded away from the endpoints $\pm1$, then the following hold.
\begin{enumerate}
    \item\label{item:stirling-interior-cube} In the case $\varrho = \mathsf{Unif}(\mathcal{C}_{n})$, we have
    \begin{align*}
        \ov(\varrho)(m) = (1 \pm o_{n}(1)) \cdot \sqrt{\frac{2}{\pi n \cdot (1 - m^{2})}} \cdot \exp(n \cdot h(m)).
    \end{align*}
    \item\label{item:stirling-interior-sphere} In the case $\varrho = \mathsf{Unif}(\mathcal{S}_{n})$, we have
    \begin{align*}
        \ov(\varrho)(m) = (1 \pm o_{n}(1)) \cdot \sqrt{\frac{n}{2\pi}} \cdot \frac{1}{(1 - m^{2})^{3/2}} \cdot \exp(n \cdot h(m)).
    \end{align*}
\end{enumerate}
\end{corollary}

\begin{lemma}[Laplace Approximation; Simplified Version of Lemma A.3 \cite{MSB21}]\label{lem:laplace-approx}
Fix an interval $[a,b]$ and a point $x^{*}$ bounded away from the endpoints $a,b$. Let $f,g : [a,b] \to \R$ be thrice continuously differentiable on $(a,b)$, and assume that $g'\wrapp{x^{*}} = 0$ and $g''\wrapp{x^{*}} < 0$. Then for all $\alpha \in (0,1/6)$,
\begin{align*}
    \sqrt{\frac{n}{2\pi}} \cdot \int_{x^{*} - n^{-\frac{1}{2} + \alpha}}^{x^{*} + n^{-\frac{1}{2} + \alpha}} f(x) \cdot \exp(n \cdot g(x)) \,dx = (1 + o_{n}(1)) \cdot \sqrt{\frac{1}{\abs{g''\wrapp{x^{*}}}}} \cdot f\wrapp{x^{*}} \cdot \exp\wrapp{n \cdot g\wrapp{x^{*}}}.
\end{align*}
\end{lemma}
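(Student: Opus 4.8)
The plan is to carry out the classical Laplace method via a Gaussian rescaling, taking care that every error term is controlled \emph{uniformly} over the growing-but-subpolynomial window of integration. First I would substitute $x = x^{*} + t/\sqrt{n}$, so that $dx = dt/\sqrt{n}$ and the range $x \in [x^{*} - n^{-1/2+\alpha}, x^{*} + n^{-1/2+\alpha}]$ becomes $t \in [-n^{\alpha},n^{\alpha}]$. The prefactor $\sqrt{n/2\pi}$ exactly cancels the Jacobian, leaving
\[
    \frac{1}{\sqrt{2\pi}} \int_{-n^{\alpha}}^{n^{\alpha}} f\wrapp{x^{*} + \tfrac{t}{\sqrt{n}}} \cdot \exp\wrapp{n \cdot g\wrapp{x^{*} + \tfrac{t}{\sqrt{n}}}} \,dt.
\]

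Next I would Taylor-expand $g$ to third order around $x^{*}$. Since $x^{*}$ is bounded away from $a,b$ and $g \in C^{3}$, the derivative $g'''$ is bounded by some constant $M$ on a fixed closed neighborhood $[x^{*}-\rho,x^{*}+\rho]$, and for $n$ large all points $x^{*}+t/\sqrt{n}$ with $\abs{t}\le n^{\alpha}$ lie in this neighborhood. Using $g'\wrapp{x^{*}} = 0$, this yields
\[
    n \cdot g\wrapp{x^{*} + \tfrac{t}{\sqrt{n}}} = n \cdot g\wrapp{x^{*}} + \frac{g''\wrapp{x^{*}}}{2} t^{2} + E_{n}(t), \qquad \abs{E_{n}(t)} \le \frac{M}{6} \cdot \frac{\abs{t}^{3}}{\sqrt{n}} \le \frac{M}{6} \cdot n^{3\alpha - \frac{1}{2}}.
\]
This is exactly where the hypothesis $\alpha < 1/6$ is used: it forces $3\alpha - \tfrac{1}{2} < 0$, so $\sup_{\abs{t}\le n^{\alpha}} \abs{E_{n}(t)} = o_{n}(1)$, whence $e^{E_{n}(t)} = 1 + o_{n}(1)$ uniformly and factors out of the integral. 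Similarly, $f$ is uniformly continuous on $[x^{*}-\rho,x^{*}+\rho]$, so $f\wrapp{x^{*}+t/\sqrt{n}} = (1+o_{n}(1)) \cdot f\wrapp{x^{*}}$ uniformly over $\abs{t}\le n^{\alpha}$. Pulling out $f\wrapp{x^{*}} \cdot e^{n g(x^{*})}$ and the two $(1+o_{n}(1))$ factors, and noting $g''\wrapp{x^{*}} = -\abs{g''\wrapp{x^{*}}}$, reduces the problem to evaluating $\frac{1}{\sqrt{2\pi}} \int_{-n^{\alpha}}^{n^{\alpha}} \exp\wrapp{-\tfrac{\abs{g''(x^{*})}}{2} t^{2}} \,dt$.

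Finally, since $n^{\alpha}\to\infty$ and the Gaussian tail beyond $\pm n^{\alpha}$ decays like $e^{-\Omega(n^{2\alpha})}$, this truncated Gaussian integral equals $(1+o_{n}(1)) \cdot \frac{1}{\sqrt{2\pi}} \int_{\R} e^{-\abs{g''(x^{*})} t^{2}/2}\,dt = (1+o_{n}(1)) \cdot \abs{g''\wrapp{x^{*}}}^{-1/2}$. Combining with the retained factors gives the claimed asymptotic. I do not anticipate a genuine obstacle here — everything is folklore — and the only point that needs to be handled with some care is the uniform bound on the cubic remainder $E_{n}(t)$ across the window $\abs{t}\le n^{\alpha}$, which is precisely what the constraint $\alpha\in(0,1/6)$ and the boundedness of $g'''$ near the interior point $x^{*}$ provide.
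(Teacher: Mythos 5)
The paper does not prove this lemma: it simply cites Lemma~A.3 of \cite{MSB21} and uses it as a black box, so there is no ``paper's proof'' to compare against. Your argument is the standard Laplace-method computation (Gaussian rescaling, third-order Taylor expansion of $g$, uniform control of the cubic remainder over the window $\abs{t} \leq n^{\alpha}$, and absorption of Gaussian tails), and it is correct. You correctly isolate the role of $\alpha < 1/6$: the remainder $\abs{E_{n}(t)} \lesssim n^{3\alpha - 1/2}$ must vanish uniformly, which is exactly what that hypothesis guarantees.

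One small point worth flagging: passing from $f\wrapp{x^{*}+t/\sqrt{n}} = f\wrapp{x^{*}} + o_{n}(1)$ (which uniform continuity gives) to the multiplicative form $f\wrapp{x^{*}+t/\sqrt{n}} = (1+o_{n}(1)) \cdot f\wrapp{x^{*}}$ implicitly requires $f\wrapp{x^{*}} \neq 0$. The lemma as stated does not impose this, and for $f\wrapp{x^{*}} = 0$ the claimed identity would be problematic (the right-hand side vanishes identically while the left-hand side need not). This is a latent hypothesis in the lemma itself rather than a flaw in your argument, and it is always satisfied in the paper's applications, where $f(m) = (1-m^{2})^{-3/2}$ or $f(m) = (1-m^{2})^{-1/2}$ are strictly positive on the relevant range. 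You might note this assumption explicitly for completeness.
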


\begin{lemma}[Riemann Approximation; see e.g. Lemma A.2 in \cite{MSB21}]\label{lem:riemann-approx}
Let $f : [a,b] \to \R$ be a differentiable function, and let $a = x_{0} < x_{1} < \dotsb < x_{n} = b$. Let $x_{k}^{*} \in [x_{k-1},x_{k}]$ for each $1 \leq k \leq n$. Then
\begin{align*}
    \abs{\int_{a}^{b} f(x) \,dx - \sum_{k=1}^{n} (x_{k} - x_{k-1}) \cdot f(x_{k}^{*})} \leq \frac{b-a}{2} \cdot \max_{1 \leq k \leq n} \{x_{k} - x_{k-1}\} \cdot \sup_{x \in [a,b]} \abs{f'(x)}
\end{align*}
\end{lemma}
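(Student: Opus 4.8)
The plan is the standard Riemann-sum error estimate: bound the error contributed by each subinterval $[x_{k-1},x_k]$ separately, then sum. Write $M \defeq \sup_{x \in [a,b]} \abs{f'(x)}$, which we may assume finite (otherwise there is nothing to prove). The first step is to use the identity $(x_k - x_{k-1}) \cdot f(x_k^*) = \int_{x_{k-1}}^{x_k} f(x_k^*)\,dx$ to rewrite the total error as a sum of local errors,
\[
    \int_a^b f(x)\,dx - \sum_{k=1}^n (x_k - x_{k-1}) \cdot f(x_k^*) = \sum_{k=1}^n \int_{x_{k-1}}^{x_k} \wrapp{f(x) - f(x_k^*)}\,dx,
\]
so that by the triangle inequality it suffices to control $\int_{x_{k-1}}^{x_k} \abs{f(x) - f(x_k^*)}\,dx$ for each $k$.

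For the second step, the fundamental theorem of calculus gives $f(x) - f(x_k^*) = \int_{x_k^*}^x f'(t)\,dt$, hence $\abs{f(x) - f(x_k^*)} \le M \cdot \abs{x - x_k^*}$ for every $x \in [x_{k-1},x_k]$. Integrating this pointwise bound over the subinterval,
\[
    \int_{x_{k-1}}^{x_k} \abs{f(x) - f(x_k^*)}\,dx \le M \int_{x_{k-1}}^{x_k} \abs{x - x_k^*}\,dx = \frac{M}{2}\wrapp{(x_k^* - x_{k-1})^2 + (x_k - x_k^*)^2} \le \frac{M}{2}(x_k - x_{k-1})^2,
\]
where the middle equality is a direct computation and the last inequality is the elementary fact $u^2 + v^2 \le (u+v)^2$ for $u,v \ge 0$ (with equality when $x_k^*$ is an endpoint of the subinterval).

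Finally, I would sum over $k$ and extract one factor of the mesh size $\Delta \defeq \max_{1 \le k \le n}(x_k - x_{k-1})$:
\[
    \sum_{k=1}^n \frac{M}{2}(x_k - x_{k-1})^2 \le \frac{M\Delta}{2}\sum_{k=1}^n (x_k - x_{k-1}) = \frac{M\Delta}{2}(b-a),
\]
which is exactly the asserted bound. There is no genuine obstacle here — the argument is entirely routine — and the only point requiring a little care is retaining the constant $\tfrac12$: one must integrate the pointwise Lipschitz estimate over $[x_{k-1},x_k]$ rather than settling for the cruder bound $\abs{f(x) - f(x_k^*)} \le M\,(x_k - x_{k-1})$, which would only give the weaker constant $1$ in place of $\tfrac12$.
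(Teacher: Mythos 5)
Your proof is correct. The paper itself does not prove this lemma — it cites Lemma A.2 of \cite{MSB21} — so there is no in-paper argument to compare against, but your derivation is the standard one and gets the constant right: decompose the error over subintervals, bound $\abs{f(x)-f(x_k^*)}\le M\abs{x-x_k^*}$, integrate exactly to produce $\tfrac{M}{2}\bigl[(x_k^*-x_{k-1})^2+(x_k-x_k^*)^2\bigr]\le\tfrac{M}{2}(x_k-x_{k-1})^2$, then sum and pull out the mesh. The only cosmetic remark is that invoking the fundamental theorem of calculus for $f(x)-f(x_k^*)=\int_{x_k^*}^x f'(t)\,dt$ tacitly requires $f'$ to be integrable; this is automatic once $M<\infty$ (then $f$ is Lipschitz, hence absolutely continuous), but the mean value theorem gives $\abs{f(x)-f(x_k^*)}\le M\abs{x-x_k^*}$ directly from differentiability alone and avoids the detour.
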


\section{Miscellaneous Calculations and Proofs}\label{sec:misc-proof}
\begin{lemma}[Second Moment Interpretation]\label{lem:2nd-moment-interpret}
For every $\beta \geq 0$, we have
\begin{align*}
    Z_{\CW}\wrapp{\beta^{2}} = \frac{\E_{\bm{G}}\wrapb{Z_{\bm{G}}(\beta)^{2}}}{\E_{\bm{G}}\wrapb{Z_{\bm{G}}(\beta)}^{2}}.
\end{align*}
\end{lemma}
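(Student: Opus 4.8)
The plan is to compute both the first and second moments of $Z_{\bm{G}}(\beta)$ explicitly using the Gaussian process structure \cref{eq:hamiltonian-covariance}, and observe that the $Z_{\CW}$ term is exactly what remains after the first-moment-squared factor is divided out. This is essentially the same computation already carried out in \cref{sec:reduction-curie-weiss}, specialized to real $\beta$ and $\ell(z) = z$, so I would simply reproduce the relevant lines.

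First I would compute $\E_{\bm{G}}[Z_{\bm{G}}(\beta)]$. By Fubini and the moment generating function of a centered Gaussian,
\begin{align*}
    \E_{\bm{G}}\wrapb{Z_{\bm{G}}(\beta)} = \E_{\sigma \sim \varrho}\wrapb{\E_{\bm{G}}\wrapb{\exp\wrapp{\beta \cdot \mathcal{H}_{\bm{G}}(\sigma)}}} = \E_{\sigma \sim \varrho}\wrapb{\exp\wrapp{\tfrac{1}{2}\beta^{2} \cdot \E_{\bm{G}}\wrapb{\mathcal{H}_{\bm{G}}(\sigma)^{2}}}} = \exp\wrapp{\tfrac{n}{2}\beta^{2}\xi(1)},
\end{align*}
where the last equality uses \cref{eq:hamiltonian-covariance} with $\tau = \sigma$ together with $\norm{\sigma}_{2}^{2} = n$, so that the overlap equals $1$. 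Hence $\E_{\bm{G}}[Z_{\bm{G}}(\beta)]^{2} = \exp(n\beta^{2}\xi(1))$. Next I would compute the second moment in the same way, introducing two independent copies $\tau,\sigma \sim \varrho$:
\begin{align*}
    \E_{\bm{G}}\wrapb{Z_{\bm{G}}(\beta)^{2}} &= \E_{\tau,\sigma \sim \varrho}\wrapb{\E_{\bm{G}}\wrapb{\exp\wrapp{\beta \mathcal{H}_{\bm{G}}(\tau) + \beta \mathcal{H}_{\bm{G}}(\sigma)}}} \\
    &= \E_{\tau,\sigma \sim \varrho}\wrapb{\exp\wrapp{\tfrac{1}{2}\E_{\bm{G}}\wrapb{\wrapp{\beta \mathcal{H}_{\bm{G}}(\tau) + \beta \mathcal{H}_{\bm{G}}(\sigma)}^{2}}}} \\
    &= \E_{\tau,\sigma \sim \varrho}\wrapb{\exp\wrapp{n\beta^{2}\xi(1) + n\beta^{2}\xi\wrapp{\tfrac{\langle \tau,\sigma \rangle}{n}}}} = \exp\wrapp{n\beta^{2}\xi(1)} \cdot \E_{\tau,\sigma \sim \varrho}\wrapb{\exp\wrapp{n\beta^{2}\xi\wrapp{\tfrac{\langle \tau,\sigma \rangle}{n}}}},
\end{align*}
expanding the square and applying \cref{eq:hamiltonian-covariance} to each of the three resulting terms.

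It then remains to identify the leftover expectation with $Z_{\CW}(\beta^{2})$, which is where the one genuinely non-formula step lies: for $\tau$ fixed, the law of the overlap $\langle \tau,\sigma \rangle/n$ under $\sigma \sim \varrho$ coincides with the law of $\langle \allone,\sigma \rangle/n$. This holds because the symmetry group of $\varrho$ acts transitively on $\supp(\varrho)$ in the relevant sense --- on the hypercube, a diagonal $\pm1$ reflection maps any $\tau \in \{\pm1\}^{n}$ to $\allone$ and preserves $\mathsf{Unif}(\mathcal{C}_{n})$; on the sphere, an orthogonal transformation maps any $\tau$ with $\norm{\tau}_{2}^{2}=n$ to $\allone$ and preserves $\mathsf{Unif}(\mathcal{S}_{n})$ --- so averaging first over $\sigma$ and then over $\tau$ gives $\E_{\tau,\sigma}[\exp(n\beta^{2}\xi(\langle\tau,\sigma\rangle/n))] = \E_{\sigma \sim \varrho}[\exp(n\beta^{2}\xi(\langle\sigma,\allone\rangle/n))] = Z_{\CW}(\beta^{2})$ by \cref{eq:pspin-curie-weiss}. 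Combining, $\E_{\bm{G}}[Z_{\bm{G}}(\beta)^{2}] = \exp(n\beta^{2}\xi(1)) \cdot Z_{\CW}(\beta^{2}) = \E_{\bm{G}}[Z_{\bm{G}}(\beta)]^{2} \cdot Z_{\CW}(\beta^{2})$, and dividing yields the claim. I do not expect any real obstacle here; the only point requiring a sentence of justification is the transitivity/symmetry step, and the Gaussian MGF manipulation is valid precisely because $\beta$ is real (so no conjugates appear, unlike in \cref{sec:reduction-curie-weiss}).
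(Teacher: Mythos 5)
Your proof is correct and follows essentially the same route as the paper's: expand both moments via Fubini and the Gaussian MGF, invoke \cref{eq:hamiltonian-covariance}, and identify the residual overlap expectation with $Z_{\CW}(\beta^2)$ by symmetry. If anything, you spell out the transitivity-of-the-symmetry-group justification more explicitly than the paper's proof in \cref{sec:misc-proof}, which leaves that step implicit; this is a reasonable expansion, not a deviation.
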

\begin{proof}
For the numerator, we have that
\begin{align*}
    \E_{\bm{G}}\wrapb{Z_{\bm{G}}(\beta)^{2}} &= \E_{\tau,\sigma \sim \varrho}\wrapb{\E_{\bm{G}}\wrapb{\exp\wrapp{\beta \cdot \mathcal{H}_{\bm{G}}(\tau) + \beta \cdot \mathcal{H}_{\bm{G}}(\sigma)}}} \\
    &= \E_{\tau,\sigma \sim \varrho}\wrapb{\exp\wrapp{\frac{\beta^{2}}{2} \cdot \E_{\bm{G}}\wrapb{\wrapp{\mathcal{H}_{\bm{G}}(\tau) + \mathcal{H}_{\bm{G}}(\sigma)}^{2}}}} \\
    &= \E_{\tau,\sigma \sim \varrho}\wrapb{\exp\wrapp{n \cdot \beta^{2} \cdot \xi(1) + n \cdot \beta^{2} \cdot \xi\wrapp{\frac{\langle\tau,\sigma\rangle}{n}}}} \\
    &= \exp\wrapp{n \cdot \xi(1) \cdot \beta^{2}} \cdot Z_{\CW}\wrapp{\beta^{2}}.
\end{align*}
On the other hand,
\begin{align*}
    \E_{\bm{G}}\wrapb{Z_{\bm{G}}(\beta)}^{2} &= \E_{\sigma \sim \varrho}\wrapb{\E_{\bm{G}}\wrapb{\exp\wrapp{\beta \cdot \mathcal{H}_{\bm{G}}(\sigma)}}}^{2} \\
    &= \E_{\sigma \sim \varrho}\wrapb{\exp\wrapp{\frac{\beta^{2}}{2} \cdot \E_{\bm{G}}\wrapb{\mathcal{H}_{\bm{G}}(\sigma)^{2}}}}^{2} \\
    &= \exp\wrapp{n \cdot \xi(1) \cdot \beta^{2}}.
\end{align*}
Taking a ratio concludes the proof.
\end{proof}

\begin{lemma}[see e.g. Lemma 2.1 in \cite{Tal98}]\label{lem:CW-bound}
Consider the special case $\xi(s) = \frac{1}{2}s^{2}$ and $\varrho = \mathsf{Unif}(\mathcal{C}_{n})$. Then for every $0 \leq \beta < \betasecond = 1$, we have
\begin{align*}
    \log Z_{\CW}\wrapp{\beta^{2}} \leq -\frac{1}{2} \log\wrapp{1-\beta^{2}}.
\end{align*}
\end{lemma}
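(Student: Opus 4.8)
The plan is to linearize the quadratic form in the exponent via the Gaussian (Hubbard--Stratonovich) identity, decouple the spins, and then recognize a Gaussian integral. Concretely, with $\xi(s) = \tfrac12 s^2$ and $\varrho = \mathsf{Unif}(\mathcal{C}_n)$ we have
\begin{align*}
    Z_{\CW}\wrapp{\beta^2} = \E_{\sigma \sim \varrho}\wrapb{\exp\wrapp{\frac{\beta^2}{2n}\wrapp{\sum_{i=1}^n \sigma_i}^2}}.
\end{align*}
Using $e^{a^2/2} = \E_{g \sim \mathcal{N}(0,1)}[e^{ag}]$ with $a = \tfrac{\beta}{\sqrt n}\sum_i \sigma_i$ and then exchanging the order of the two expectations (justified since $\beta$ is real), the product over coordinates factorizes:
\begin{align*}
    Z_{\CW}\wrapp{\beta^2} = \E_{g}\wrapb{\prod_{i=1}^n \E_{\sigma_i}\wrapb{\exp\wrapp{\frac{\beta g}{\sqrt n}\sigma_i}}} = \E_{g}\wrapb{\cosh\wrapp{\frac{\beta g}{\sqrt n}}^{n}},
\end{align*}
where we used $\E_{\sigma_i}[e^{t\sigma_i}] = \cosh(t)$ for $\sigma_i$ uniform on $\{\pm1\}$.

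Next I would invoke the elementary inequality $\cosh(x) \le e^{x^2/2}$, valid for all real $x$ (compare Taylor coefficients), which gives
\begin{align*}
    \cosh\wrapp{\frac{\beta g}{\sqrt n}}^{n} \leq \exp\wrapp{n \cdot \frac{\beta^2 g^2}{2n}} = \exp\wrapp{\frac{\beta^2 g^2}{2}}.
\end{align*}
Finally, for $g \sim \mathcal{N}(0,1)$ and any $\lambda < \tfrac12$ one has $\E_g[e^{\lambda g^2}] = (1-2\lambda)^{-1/2}$; applying this with $\lambda = \tfrac{\beta^2}{2}$, which lies in $(0,\tfrac12)$ precisely because $\beta < 1$, yields
\begin{align*}
    Z_{\CW}\wrapp{\beta^2} \leq \E_g\wrapb{\exp\wrapp{\frac{\beta^2 g^2}{2}}} = \frac{1}{\sqrt{1-\beta^2}}.
\end{align*}
Taking logarithms gives the claimed bound $\log Z_{\CW}(\beta^2) \le -\tfrac12 \log(1-\beta^2)$.

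There is no real obstacle here: every step is a textbook manipulation. The only point requiring care is the role of the hypothesis $\beta < 1 = \betasecond$, which is exactly what makes the final Gaussian integral $\E_g[e^{\beta^2 g^2/2}]$ finite (and indeed the bound blows up as $\beta \uparrow 1$, matching the phase transition of the Curie--Weiss model). One should also note in passing that the interchange of expectations and the factorization over coordinates are trivially valid because all integrands are nonnegative.
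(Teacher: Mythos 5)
Your proof is correct and is essentially identical to the one given in the paper: both linearize the square via the Gaussian moment generating function (Hubbard--Stratonovich), swap the order of expectations and factor over coordinates to get $\E_g[\cosh(\beta g/\sqrt{n})^n]$, apply $\cosh(x) \le e^{x^2/2}$, and conclude with the $\chi^2$ moment generating function. No differences worth noting.
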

\begin{proof}
We follow the proof in \cite{Tal98}. Observe that
\begin{align*}
    Z_{\CW}\wrapp{\beta^{2}} &= \E_{\sigma \sim \varrho}\wrapb{\exp\wrapp{\frac{\beta^{2}}{2n} \langle \sigma, \allone \rangle^{2}}} \\
    &= \E_{\sigma \sim \varrho}\wrapb{\E_{g \sim \mathcal{N}(0,1)}\wrapb{\exp\wrapp{\frac{\beta}{\sqrt{n}} \langle \sigma, \allone \rangle \cdot g}}} \tag{Moment Generating Function of a Gaussian} \\
    &= \E_{g \sim \mathcal{N}(0,1)}\wrapb{\E_{\sigma \sim \varrho}\wrapb{\prod_{i=1}^{n} \exp\wrapp{\frac{\beta}{\sqrt{n}} \cdot \sigma_{i} \cdot g}}} \\
    &= \E_{g \sim \mathcal{N}(0,1)}\wrapb{\E_{s \sim \mathsf{Unif}\{\pm1\}}\wrapb{\exp\wrapp{\frac{\beta}{\sqrt{n}} \cdot s \cdot g}}^{n}} \tag{$\sigma_{1},\dots,\sigma_{n} \sim \mathsf{Unif}\{\pm1\}$ independent} \\
    &\leq \E_{g \sim \mathcal{N}(0,1)}\wrapb{\exp\wrapp{\frac{\beta^{2}}{2} \cdot g^{2}}} \tag{Subgaussianity of $\mathsf{Unif}\{\pm1\}$, i.e. $\cosh(x) \leq \exp(x^{2}/2)$} \\
    &= \frac{1}{\sqrt{1-\beta^{2}}}. \tag{Moment Generating Function for a $\chi^{2}$ Distribution}
\end{align*}
Taking logarithms completes the proof.
\end{proof}

\section{Motivation of Reweighting factor}\label{sec:reweight-factor-motivation}

In this appendix, we explain how to guess the reweighting factor $A_{\bm{G}}(\beta)$ in \cref{eq:def-A-bG}.
\textbf{The following calculation is for motivation early, and does not enter into the rigorous part of our argument}.
We give this heuristic derivation in the setting of Ising spins $\sigma \in \{\pm 1\}^n$.
For the spherical setting, we will use the same reweighting factor, as we expect the sphere and cube partition functions to match up to a stochastic $1+o_n(1)$ factor in the second moment phase.
For convenience, recall that
\[
  Z_{\bm{G}}(\beta) = \frac{1}{2^n} \sum_{\sigma \in \{\pm 1\}^n} \exp(\beta \mathcal{H}_{\bm{G}}(\sigma)).
\]
We will aim to guess a function $A_{\bm{G}}(\beta)$ which is analytic in $\beta \in \D(0,R)$ and $\bm{G} \in \R^{n^2 \times \cdots \times n^{p_{\max}}}$, and which stochastically approximates $Z_{\bm{G}}(\beta)^{-1}$, i.e. satisfies
\[
  Z_{\bm{G}}(\beta) A_{\bm{G}}(\beta) = 1+o_n(1) \qquad \text{whp}.
\]
Below, let $\approx$ denote equality up to a (in-probability) multiplicative factor of $1+o_n(1)$.
First, write
\[
  \mathcal{H}_{\bm{G}}(\sigma)
  = \frac{1}{2} \langle \nabla^2 \mathcal{H}_{\bm{G}}(\bm{0}) \sigma, \sigma \rangle
  + \mathcal{H}_{\bm{G},\ge3}(\sigma)
  = \frac{1}{2} \langle \bm{M}\sigma, \sigma \rangle
  + \mathcal{H}_{\bm{G},\ge3}(\sigma),
\]
where $\bm{M}$ is defined in \cref{eq:nabla2H-expansion,eq:def-bM} and $\mathcal{H}_{\bm{G},\ge3}$ consists of the parts of $\mathcal{H}_{\bm{G}}$ with degree $\ge3$.
Note that $\mathcal{H}_{\bm{G},\ge3}$ is a spin glass with mixture function
\[
  \xi_{\ge3}(s) = \sum_{p=3}^{p_{\max}} \gamma_p^2 s^p
  = \xi(s) - \frac{1}{2} \xi''(0) s^2.
\]
We expect that the degree $\ge3$ parts of $\mathcal{H}_{\bm{G}}$ contribute a multiplicative fluctuation of $1+o_n(1)$.
That is, for $\E_{\ge3}$ denoting expectation over just $\mathcal{H}_{\bm{G},\ge3}$,
\begin{align}
  \nonumber
  Z_{\bm{G}}(\beta)
  \approx \E_{\ge3} Z_{\bm{G}}(\beta)
  &= \exp\wrapp{
    \frac{n\beta^2}{2} \xi_{\ge3}(1)
  }
  \cdot \frac{1}{2^n}
  \sum_{\sigma \in \{\pm 1\}^n} \exp\wrapp{\frac{\beta}{2} \langle \bm{M}\sigma,\sigma \rangle} \\
  \label{eq:alr-first-step}
  &= \exp\wrapp{
    \frac{n\beta^2}{2} \xi(1)
    - \frac{n\zeta^2}{4}
  }
  \cdot \frac{1}{2^n}
  \sum_{\sigma \in \{\pm 1\}^n}
  \exp\wrapp{\frac{\beta}{2} \langle \bm{M}\sigma,\sigma \rangle}.
\end{align}
We expand the remaining partition function of the degree $2$ interaction following the approach of \cite{ALR87}.
Recall $\UC(n)$ defined in \cref{eq:uc}, and define analogously
\begin{align*}
  \Ev(n) &= \Big\{\Gamma \subseteq \binom{[n]}{2} : \text{$\Gamma$ is the edge set of a graph with all degrees even}\Big\}, \\
  \Cy(n) &= \Big\{\gamma \subseteq \binom{[n]}{2} : \text{$\gamma$ is the edge set of a cycle}\Big\}.
\end{align*}
Then,
\begin{align}
  \nonumber
  &\frac{1}{2^n}
  \sum_{\sigma \in \{\pm 1\}^n}
  \exp\wrapp{\frac{\beta}{2} \langle \bm{M}\sigma,\sigma \rangle} \\
  \nonumber
  &= \prod_{i=1}^n \exp\wrapp{\frac{\beta}{2} \bm{M}_{i,i}}
  \prod_{1\le i<j\le n} \cosh(\beta \bm{M}_{i,j}) \cdot
  \frac{1}{2^n}
  \sum_{\sigma \in \{\pm 1\}^n}
  \prod_{1\le i<j\le n}
  (1 + \sigma_i\sigma_j \tanh(\beta \bm{M}_{i,j})) \\
  \label{eq:alr-second-step}
  &= \exp\wrapp{\frac{\beta}{2} \tr(\bm{M}) }
  \wrapp{
    \prod_{1\le i<j\le n} \cosh(\beta \bm{M}_{i,j})
  }
  \wrapp{
    \sum_{\Gamma \in \Ev(n)}
    \prod_{e\in \Gamma} \tanh(\beta \bm{M}_e)
  }.
\end{align}
We aim to guess analytic approximations to the reciprocals of the last two factors.
First, note that
\[
  \cosh(x) = \exp\wrapp{\frac{1}{2} x^2 - \frac{1}{12} x^4 + O(x^6)},
\]
and thus
\begin{align*}
  \wrapp{\prod_{1\le i<j\le n} \cosh(\beta \bm{M}_{i,j})}^{-1}
  &= \exp\wrapp{
    - \frac{1}{2}
    \sum_{1\le i<j\le n}
    (\beta \bm{M}_{i,j})^2
    + \frac{1}{12}
    \sum_{1\le i<j\le n}
    (\beta \bm{M}_{i,j})^4
    + \sum_{1\le i<j\le n}
    O(\bm{M}_{i,j}^6)
  } \\
  &\approx \exp\wrapp{
    -\frac{\beta^2}{2}
    \sum_{1\le i<j\le n}
    \bm{M}_{i,j}^2
    + \frac{\beta^4}{12}
    \sum_{1\le i<j\le n}
    \bm{M}_{i,j}^4
  }.
\end{align*}
Furthermore, recalling \cref{eq:nabla2H-expansion,eq:def-bM}, we expect
\[
  \frac{\beta^4}{12}
  \sum_{1\le i<j\le n}
  \bm{M}_{i,j}^4
  = \frac{\beta^4}{12}
  \sum_{1\le i<j\le n}
  \E[\bm{M}_{i,j}^4] + o_n(1)
  = \frac{\beta^4}{12}
  \sum_{1\le i<j\le n}
  \frac{3\xi''(0)^2}{n^2} + o_n(1)
  = \frac{\zeta^4}{8} + o_n(1).
\]
Thus
\begin{align}
  \label{eq:alr-cosh-term}
  \wrapp{\prod_{1\le i<j\le n} \cosh(\beta \bm{M}_{i,j})}^{-1}
  \approx \exp\wrapp{
    -\frac{\beta^2}{2}
    \sum_{1\le i<j\le n}
    \bm{M}_{i,j}^2
  }
  \exp(\zeta^4/8).
\end{align}
We turn to the final term in \cref{eq:alr-second-step}.
As argued in \cite{ALR87},
\[
  \sum_{\Gamma \in \Ev(n)}
  \prod_{e\in \Gamma} \tanh(\beta \bm{M}_e)
  \approx
  \sum_{\Gamma \in \Ev(n)}
  \prod_{e\in \Gamma} (\beta \bm{M}_e)
  \approx
  \sum_{\Gamma \in \UC(n)}
  \prod_{e\in \Gamma} (\beta \bm{M}_e)
  \approx
  \prod_{\gamma \in \Cy(n)}
  \wrapp{
    1 + \prod_{e\in \gamma} (\beta \bm{M}_e)
  }.
\]
The intuition is that for a fixed $|\Gamma| \le K$, there are far more $\Gamma \in \UC(n)$ than $\Gamma \in \Ev(n) \setminus \UC(n)$ because the latter graphs have to reuse a vertex.
This results in a sub-leading order contribution.
Moreover, in the high-temperature phase the cluster expansion decays geometrically in $|\Gamma|$, so the subgraphs $\Gamma \in \Ev(n)$ with $|\Gamma| > K$ also contribute sub-leading order.
Similarly when we expand the $\prod_{\gamma \in \Cy(n)}$ we get a sum of monomials corresponding to multi-graphs where some edges might be repeated, but those that aren't both simple and in $\UC(n)$ contribute sub-leading order.
Analogously, we expect
\[
  \sum_{\Gamma \in \UC(n)}
  (-1)^{c(\Gamma)}
  \prod_{e\in \Gamma} (\beta \bm{M}_e)
  \approx
  \prod_{\gamma \in \Cy(n)}
  \wrapp{
    1 - \prod_{e\in \gamma} (\beta \bm{M}_e)
  }.
\]
Thus
\begin{align*}
  \wrapp{
    \sum_{\Gamma \in \Ev(n)}
    \prod_{e\in \Gamma} \tanh(\beta \bm{M}_e)
  }^{-1}
  &\approx
  \prod_{\gamma \in \Cy(n)}
  \wrapp{
    1 + \prod_{e\in \gamma} (\beta \bm{M}_e)
  }^{-1} \\
  &=
  \prod_{\gamma \in \Cy(n)}
  \wrapp{
    1 - \prod_{e\in \gamma} (\beta \bm{M}_e)
  }
  \prod_{\gamma \in \Cy(n)}
  \wrapp{
    1 - \prod_{e\in \gamma} (\beta \bm{M}_e)^2
  }^{-1} \\
  &\approx
  \wrapp{
    \sum_{\Gamma \in \UC(n)}
    (-1)^{c(\Gamma)}
    \prod_{e\in \Gamma} (\beta \bm{M}_e)
  }
  \prod_{\gamma \in \Cy(n)}
  \wrapp{
    1 - \prod_{e\in \gamma} (\beta \bm{M}_e)^2
  }^{-1}.
\end{align*}
The last factor can be estimated as
\begin{align*}
  \prod_{\gamma \in \Cy(n)}
  \wrapp{
    1 - \prod_{e\in \gamma} (\beta \bm{M}_e)^2
  }^{-1}
  &\approx
  \exp\wrapp{
    \sum_{\gamma \in \Cy(n)}
    \prod_{e\in \gamma} (\beta \bm{M}_e)^2
    - \sum_{\gamma \in \Cy(n)}
    O\wrapp{
      \prod_{e\in \gamma}
      \bm{M}_e^4
    }
  } \\
  &\approx
  \exp\wrapp{
    \sum_{\gamma \in \Cy(n)}
    (\zeta^2 / n)^{|\gamma|}
  } \\
  &= \exp\wrapp{
    \sum_{k\ge 3}
    \frac{n(n-1)\cdots(n-k+1)}{2k}
    \cdot
    (\zeta^2 / n)^{k}
  } \\
  &\approx \exp\wrapp{
    \sum_{k\ge 3}
    \frac{\zeta^{2k}}{2k}
  }
  = (1-\zeta^2)^{-1/2} \exp\wrapp{
    - \frac{\zeta^2}{2} - \frac{\zeta^4}{4}
  }
\end{align*}
Thus,
\begin{align}
  \label{eq:alr-cluster-term}
  \wrapp{
    \sum_{\Gamma \in \Ev(n)}
    \prod_{e\in \Gamma} \tanh(\beta \bm{M}_e)
  }^{-1}
  \approx
  (1-\zeta^2)^{-1/2} \exp\wrapp{
    - \frac{\zeta^2}{2} - \frac{\zeta^4}{4}
  }
  \wrapp{
    \sum_{\Gamma \in \UC(n)}
    (-1)^{c(\Gamma)}
    \prod_{e\in \Gamma} (\beta \bm{M}_e)
  }.
\end{align}
Combining \cref{eq:alr-first-step,eq:alr-second-step,eq:alr-cosh-term,eq:alr-cluster-term} yields that $Z_{\bm{G}}(\beta) \approx A_{\bm{G}}(\beta)^{-1}$ for $A_{\bm{G}}(\beta)$ as in \cref{eq:def-A-bG}.

\end{document}